\documentclass[11pt,onecolumn]{IEEEtran}

\ifCLASSINFOpdf
  \usepackage[pdftex]{graphicx}
 \else
  \usepackage[dvips]{graphicx}
\fi

%
\usepackage[cmex10]{amsmath}
\usepackage{amssymb}

\usepackage{amsmath, amssymb, amsthm}
\usepackage{mathtools}
\usepackage{dsfont}
\usepackage{hyperref}

\usepackage{algorithm,algorithmic}
%
\usepackage{array}
\usepackage{etoolbox}

\usepackage{amssymb}
\usepackage{enumitem}
\usepackage{verbatim}
\usepackage{bbold}
\usepackage[normalem]{ulem}
\usepackage{xcolor}

\input{mathdef.tex}

\newtheorem{definition}{Definition}
\newtheorem{lemma}{Lemma}
\newtheorem{theorem}{Theorem}

\AfterEndEnvironment{proposition}{\noindent\ignorespaces}
\theoremstyle{remark}
\newtheorem{remark}{\bf Remark}
\newcommand{\E}{\mathbb{E}}

\newcommand{\beq}{\begin{equation}}
\newcommand{\eeq}{\end{equation}}

\newcommand{\bard}{\bar{d}}

\makeatletter

\newcommand{\Rmnum}[1]{\expandafter\@slowromancap\romannumeral #1@}
\makeatother
\usepackage{mathtools}

\long\def\/*#1*/{}

\begin{document}
%
\title{Binary Classification with XOR Queries: \\Fundamental Limits and An Efficient Algorithm}

\author{Daesung Kim and Hye Won Chung 
\thanks{Daesung Kim (jklprotoss@kaist.ac.kr) and Hye Won Chung (hwchung@kaist.ac.kr) are with the School of Electrical Engineering at KAIST in South Korea.  
This work was supported in part by National
Research Foundation of Korea under Grant 2017R1E1A1A01076340; in part
by the Ministry of Science and ICT, South Korea, under the ITRC support
program under Grant IITP-2021-2018-0-01402; and in part by the Institute of
Information and Communications Technology Planning \& Evaluation (IITP)
grant funded by the Korea Government MSIT under Grant 2020-0-00626.
This research was presented in part at 2020 IEEE International Symposium on Information Theory~\cite{9174227}.  

}}

\maketitle


\begin{abstract}

We consider a query-based data acquisition problem for binary classification of unknown labels, which has diverse applications in communications, crowdsourcing, recommender systems and active learning. 
To ensure reliable recovery of unknown labels with as few number of queries as possible, we consider an effective query type that asks ``group attribute'' of a chosen subset of objects. In particular, we consider the problem of classifying $m$ binary labels with XOR queries that ask whether the number of objects having a given attribute in the chosen subset of size $d$ is even or odd. The subset size $d$, which we call query degree, can be varying over queries. We consider a general noise model where the accuracy of answers on queries changes depending both on the worker (the data provider) and query degree $d$. For this general model, we characterize the information-theoretic limit on the optimal number of queries to reliably recover $m$ labels in terms of a given combination of degree-$d$ queries and noise parameters. Further, we propose an efficient inference algorithm that achieves this limit even when the noise parameters are unknown.

\end{abstract}

\begin{IEEEkeywords}
Binary classification, XOR query, sample complexity, message passing, weighted majority voting.
\end{IEEEkeywords}
\section{Introduction}

Binary classification is one of the most fundamental problems in engineering and appears in a wide variety of fields such as communication, machine learning \cite{kotsiantis2006machine}, recommender systems, crowdsourcing \cite{karger2014budget}, and VLSI systems \cite{karypis2000multilevel}. In data acquisition for binary classification, one often acquires noisy answers from queries on the unknown binary labels, and by applying an inference algorithm the true labels are recovered. The common goal of such a data acquisition problem is to reliably recover the true labels at the minimum sample complexity (the number of queries). To achieve this goal,  an efficient querying strategy needs to be designed. There also exist some cases where we do not have a freedom to design queries, but rather it is directly inherited from the application. Designing a statistically-efficient inference algorithm is also important to fully exploit the information from answers so as to minimize the sample complexity, but at the same time, the algorithm should have affordable computational complexity to be used in large-scale problems.

Different types of queries have been studied, regarding different properties of target applications. For example, many literatures on crowdsourcing have investigated the most basic querying method named repetition query that asks one label at a time repeatedly to many workers \cite{gao2013minimax, li2014error, dalvi2013aggregating, zhang2014spectral}. More complex querying method such as pairwise comparison \cite{mazumdar2017clustering} asking whether or not two objects belong to the same class, or ``triangle'' queries \cite{vinayak2016crowdsourced}, which compare three objects simultaneously, were considered to increase the statistical-efficiency in answers. Pairwise comparison was also studied extensively through the stochastic block model \cite{abbe2015exact}, which is a standard model for studying community recovery problems in graphs. The homogeneity measurement that extends pairwise comparison to the hypergraph case to measure the homogeneity of more than two nodes was further investigated in some previous works \cite{kim2017community, ahn2019community, lee2020hypergraph}.

In this work, we focus on the XOR-based querying method, which asks whether the number of objects having a given attribute in the chosen subset of size (degree) $d$ is even or odd. Binary classification with XOR queries has long been studied in the context of linear codes in channel coding \cite{gallager1962low, mackay2005fountain, arikan2009channel}, XOR-based hypergraph clustering \cite{ahn2019community}, and random XOR-constrained satisfaction problem \cite{pittel2016satisfiability} with applications in statistical physics \cite{jia2004spin}. We study this problem in a very general setup where the error probability of the answers as well as the subset size $d$ is non-uniform over queries. We assume that the queries are answered by many workers having different error probabilities although each worker answers the queries of the same $d$ uniformly with assigned error probability. The terminology ``worker'' is inspired by the crowdsourcing system \cite{karger2014budget}, but it can be considered as different channels through which the data is transmitted, or different environments that affect the edge generation probability of a hypergraph. We additionally assume that the error probabilities of workers are unknown to reflect the scenarios, in crowdsourcing where worker reliabilities are unknown a priori, or in communication systems where the channel condition varies over time. Lastly, the assumption of non-uniform $d$ allows very general setups in the corresponding problems where we obtain measurements over different number of objects interacting with each other. 

\subsection{Main Contributions}
In this work, we provide rigorous and general analysis of XOR queries in binary classification and demonstrate the sample efficiency of this type of querying strategy. We derive a sharp threshold on the required number of queries to recover $m$ binary labels in terms of a given combination of degree-$d$ XOR queries and worker noise parameters. Further, we provide an efficient inference algorithm to extract correct labels from the noisy answers of XOR queries, which achieves the information-theoretic limit even when the noise parameters are unknown.

Concretely, when the fraction of degree-$d$ queries is $\Phi_d\in[0,1]$ for $d\in\{1,\dots,D\}$ and the probability that a worker $k$ provides an incorrect answer to a degree-$d$ query is $\epsilon_{k,d}<1/2$, we show in Theorem~\ref{thm:thm1} that the number of queries should be at least 
\beq
\label{eqn:nmlogm}
n=\frac{ m\log m}{\sum_{d = 1}^D \sum_{k = 1}^w \frac{d\Phi_d}{w} ( \sqrt{1 - \epsilon_{k,d}}-\sqrt{\epsilon_{k,d}})^2}
\eeq to recover all $m$ binary labels with high probability as $m\to\infty$, where the maximum query degree is $D=\Theta(1)$ and each query is assigned to a worker chosen uniformly at random among total $w$ workers. We provide both upper and lower bounds that deviate only by an arbitrary small constant factor from~\eqref{eqn:nmlogm}.  The upper and lower bounds are derived by analyzing the optimal maximum likelihood (ML) decoder with the knowledge of worker noise parameters $\{\epsilon_{k,d}\}$.

Our main contribution is on proposing an efficient inference algorithm that achieves the optimal sample complexity for any combination of the query degree $d$'s even when the worker noise parameters $\{\epsilon_{k,d}\}$ are unknown. 
The main idea is to boost the accuracy of our estimates on the correct labels by three steps, where the worker noise parameters are estimated after the first two steps and then used to refine the estimates on labels at the last step.
More specifically, we show that the weak recovery of labels (of which the formal definition will be provided in the next section) is possible without estimating the worker noise parameters by relying on one-step message passing and majority voting. 
The worker noise parameters can be estimated up to a desired accuracy based on the weakly recovered labels. Finally, the strong recovery of labels (of which the formal definition will be provided in the next section) is achieved by the weighted majority voting with the estimated noise parameters. The proposed algorithm is inspired by the recent two-step estimation approaches, which have been used in clustering or non-convex estimation problems, where the initial estimate, which is close to the solution up to a certain limit, is provided by some classical technique (e.g. spectral method) and the refinement step (e.g. gradient descent) is followed to boost the accuracy of the estimate \cite{keshavan2010matrix, candes2015phase,abbe2015exact}.

One important assumption posed on our algorithm is the presence of $\Theta(m)$ degree-1 queries. It is used to construct an initial estimate that makes our algorithm to move toward the correct direction. In applications of data acquisition where one can design querying strategy, this is a natural assumption. On the other hand, in other applications where degree-1 queries are unavailable, the degree-1 queries can be viewed as some sort of side information that provides an initial estimate that is better than random guess. Hence, the assumption can be related to the recent research direction where side information is used to help solve the original problem with smaller sample complexity \cite{xu2013speedup, ahn2018binary, saad2018community}.

We show that the proposed algorithm guarantees the recovery of $m$ labels at the optimal sample complexity provided by \eqref{eqn:nmlogm} as $m\to\infty$ even when the worker noise parameters are unknown (Theorem~\ref{thm:thm2}). We provide some experimental results with synthetic data that support our theoretical findings. In particular, to fairly compare the XOR quires of different degrees and with other types of queries, we consider an error model called $d$-coin flip model. We also apply our algorithm to crowdsourced binary classification with the data collected from Amazon Mechanical Turk and show the effectiveness of XOR query in reducing the sample complexity.


\subsection{Related Works}

\subsubsection{Constraint Satisfaction Problem}
The recovery of $m$ binary labels from noisy XOR queries can be viewed as an example of a planted constraint satisfaction problem (CSP), which is a subject of intense study in computer science, probability theory, and statistical physics, motivated by clustering, community detection, and cryptographic applications. Consider in particular a random planted $d$-XORSAT problem, where the goal is to recover $m$ binary variables (the planted solution) satisfying a set of $n$ constraints such that XOR of size-$d$ subsets of variables should be equal 0 (or 1) where the subsets are chosen uniformly at random among $m \choose d$ possibilities. This is the same recovery problem as ours except that for our setup, the subset size $d$ can be varying over constraints and the XOR value can be noisy. There have been many works~\cite{abbe2013conditional,achlioptas2008algorithmic,haanpaa2006hard} to answer 1) how large $n$ should  be to make the planted solution a unique solution, and 2) how large $n$ should be for the planted solution recoverable by an efficient algorithm. When $d=3$, this problem is also a special case of tensor completion problems, and the best known algorithm with polynomial-time complexity requires $n=\Omega(m^{3/2})$~\cite{barak2016noisy}. In this work, we show that by adding $\Theta(m)$ degree-1 queries, which does not change the order of sample complexity, the optimal sample complexity of $\Theta(m\log m)$ is achievable by an efficient algorithm of $O(m\log m)$ time steps. 

\subsubsection{Hypergraph Clustering}
The problem we consider also has close connections to the XOR-based hypergraph clustering problem~\cite{ahn2019community,abbe2013conditional,watanabe2013message}. The goal of hypergraph clustering is to classify nodes (binary labels) by using randomly sampled XOR measurements among ${m\choose d}$ possibilities on subsets of node labels. 
The main difference from our model is that in the hypergraph clustering the subset size $d$ is often fixed as a constant and each hyperedge in the set of size ${m\choose d}$ is  independently sampled (without replacement) with a fixed probability $p$ (thus the number of total measurements is a random variable distributed by $\text{Binomial}({m \choose d},p)$.) Also, the error probability is often identical over the measurements. In our model, on the other hand, we can design queries so that for each query we randomly choose the query degree $d$ and select $d$ labels among $m$ uniformly at random. The error probability of each query may also depend on the query degree and the worker. In~\cite{ahn2019community}, the necessary and sufficient conditions on the expected number of hyperedge measurements  $p{m \choose d}$ were analyzed and shown to be $\frac{m\log m}{d(\sqrt{1-\epsilon}-\sqrt{\epsilon})^2}$. In our work, we generalize this result for our measurement model (where the number $n$ of queries is fixed and each query is randomly designed and assigned to one of workers in the system) and show that~\eqref{eqn:nmlogm} is the necessary and sufficient number of queries to recover all the $m$ labels reliably. Addition to this theoretical analysis, the main contribution of our work is that we provide the almost linear-complexity inference algorithm (up to logarithmic factor) with which we can infer all the $m$ labels with high probability even when the error parameters of the workers are unknown. 

\subsubsection{Linear Codes}
The result of this work can be applied to communication systems by viewing XOR queries as a linear code for binary labels. The randomly generated XOR queries correspond to the scenario where a receiver has no control over what packets to receive and just receives random $n$ packets from the transmitter (as in rateless coding setup with fountain codes~\cite{mackay2005fountain}). Also, the workers in our model can be thought of as many different channels with unknown error probabilities as in~\cite{8636012,8437703}. The degree constraint $D=\Theta(1)$ for the coded bits can be motivated from locally encodable coding~\cite{mazumdar2017semisupervised}: a data compression/transmission problem where each coded bit depends only on a small number of input bits. 

\subsubsection{Crowdsourcing}

Crowdsourcing systems usually rely on simple querying types that human workers are easy to answer. However, information efficiency of such simple query types is often limited; all the simple query types including repetition query \cite{karger2014budget}, pairwise comparison \cite{mazumdar2017clustering}, and homogeneity measurement \cite{ahn2019community} require larger sample complexity than the degree-3 XOR querying scheme under the same error probability. However, XOR queries are relatively harder to answer for human workers, so that the corresponding error rate in the answers may increase in practice. 
Hence, it is worth investigating whether the gain in sample complexity can offset the loss from the increased error rate of XOR queries in real crowdsourcing systems. Our experimental result provided in Section \ref{sec:experiment} answers this question in an affirmative way, and opens a possibility for XOR queries to be used in crowdsourced classification.

\subsection{Organization of the Paper}
The rest of the paper is organized as follows. In Section~\ref{sec:model}, we formulate the binary classification problem with XOR query for a general noise model where the accuracy of worker's answer changes depending both on the worker reliability and query degree $d$. Section~\ref{sec:fund} provides the information-theoretic limits on the required number of queries to recover all the labels with high probability, in terms of a given combination of degree-$d$ queries and worker noise parameters. In Section~\ref{sec:alg}, we present our computationally-efficient algorithm that achieves the information-theoretic limit on the optimal number of queries, even without the knowledge of worker noise parameters. In Section~\ref{sec:experiment}, we present simulation results that demonstrate the effectiveness of XOR querying strategy and the proposed inference algorithm compared to existing crowdsourcing strategies both for synthetic and real datasets. 
Section~\ref{sec:pfthm2} provides proofs on the performance of the proposed algorithm. 
 Section~\ref{sec:con} provides conclusions with future research directions. Technical proof details on the main results can be found in appendices.

\subsection{Notations}
We denote a vector by a bold face letter, e.g. $\vec{v}$, and the $i$-th component of it by $v_i$. Both of Bernoulli distribution (value 0 or 1) and Rademacher distribution (value $-1$ or $1$) with parameter $p$ are denoted by $\text{Bern}(p)$. We use $[n]$ to denote $\{1, 2, \cdots, n\}$, and for any set $A$, $|A|$ is the number of elements in $A$. We define the function $\sign(x)$ as $1$ if $x > 0$, $-1$ if $x < 0$, and $1$ with probability 1/2 and $-1$ with probability 1/2 if $x = 0$. Also, the function $\textsf{TRUNC}(x, I)$ is defined as the element in the interval $I$ that is closest to $x$.

\section{Model}\label{sec:model}
\begin{figure}
  \centering
  \includegraphics[width=5cm]{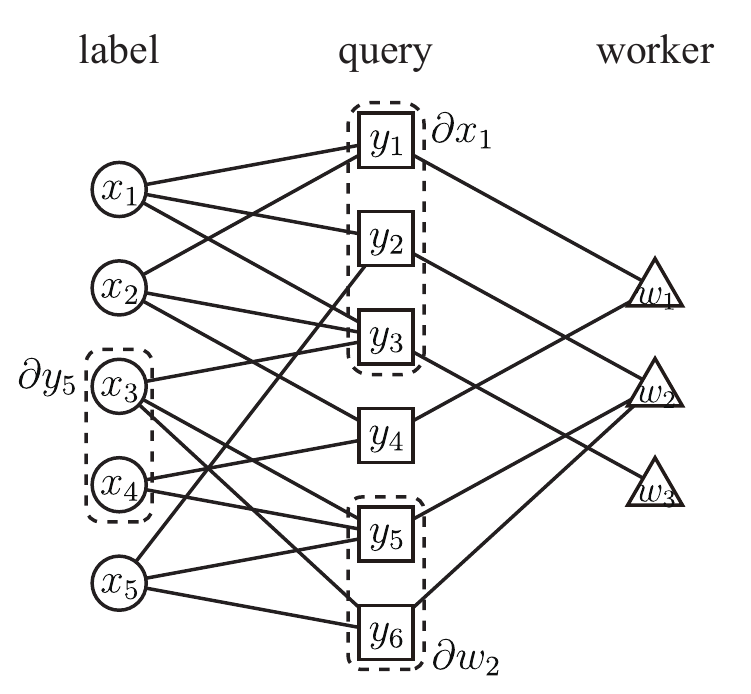}
  \caption{\normalsize A tripartite graph depicting query design and worker assignment. The figure is drawn for the case where the query degrees are either $d=2$ or $d=3$.}
  \label{fig:tripartitte}
\end{figure}
Let $\vec{x} \in \{1,-1\}^m$ be the ground truth label vector we aim to recover and $\hat{\vec{x}}\in \{1, -1\}^m$ be the estimate of the label vector. We ask in total of $n$ queries to $w$ workers, and $\vec{y} \in\{1, -1\}^n$ denotes a collection of all answers we get from the workers. There are three notions of recovery used in this paper. \textit{Strong recovery} refers to the case where $\vec{x}$ itself is recovered perfectly with high probability as $m\to\infty$, i.e., $\Pr{\hat{\vec{x}}\neq \vec{x}}\to 0$. \textit{Weak recovery} is the case where the error probability of each $x_i$ goes to 0 for all $i\in [m]$, i.e., $\Pr{\hat{x}_i\neq x_i}\to 0$. When the error probability of each $x_i$ is better than random guess, i.e., $\Pr{\hat{x}_i\neq x_i}<{1}/{2}$ for all $i \in [m]$, we say that \textit{detection} is possible.

\subsection{Query Design and Assignment}\label{sec:model_qdesign}
Each query is designed independently by first obtaining a query degree $d$ from a probability distribution ${{\Phi}}=\{\Phi_1,\dots, \Phi_D\}$  and then selecting $d$ components of $\vec{x}$, which will be contained in the query, uniformly at random among ${m \choose d}$ possibilities. 
We assume that both the maximum query degree $D$ and the degree distribution $\Phi_D$ do not scale with $m$.
Each query asks XOR of the $d$ labels to a worker chosen uniformly at random among total $w$ workers. Note that we are considering the non-adaptive model; all queries are designed in advance of getting any answer from the workers.

\subsection{Tripartite Graph Representation}\label{sec:model_qassign}
A tripartite graph is used to describe the query design and the worker assignment for each query as in Figure 1. 
We use indices $i\in[m],j\in[n],k\in[w]$ for label, query, worker nodes, respectively.
Let $\partial x_{i}\subset[n]$ denote the set of queries that contain the label $x_i$, and $\partial y_j \subset [m]$ denote the set of labels that are contained in the $j$-th query. Also, let $\partial w_{k}\subset[n]$ denote the set of queries that are assigned to the worker $k \in [w]$. 
When $A \subset [n]$ is a subset of queries, denote by $A_d:=\{j\in A:|\partial y_j|=d\}$ the degree-$d$ queries in the set $A$.
Then, the set $\partial x_i\cap A_d$ includes the degree-$d$ queries in $A$ that are connected to the label node $x_i$, and the set $\partial w_k\cap A_d$ includes the degree-$d$ queries in $A$ that are assigned to the worker $k$. For the $j$-th query, the assigned worker and the query degree are denoted by $w(j)$ and by $d(j)$, respectively.


\subsection{Noise Model}\label{sec:model_noise}
In order to study the problem in full generality, we assume that the error probabilities of the workers are not uniform and also depend on the query degree. In particular, we assume that the worker $k$ gives a wrong answer with probability $\epsilon_{k,d} \in [\lambda, 0.5)$ when answering a degree-$d$ XOR query, where $\lambda > 0$ is an arbitrary small constant. Hence, there is no perfectly reliable worker, i.e., $\epsilon_{k,d}>0$. This is a technical assumption though to make the weight $\log\frac{1-\hat{\epsilon}_{k,d}}{\hat{\epsilon}_{k,d}}$, which is used for a weighted majority voting in a step of our proposed inference algorithm, not diverge by bounding $\hat{\epsilon}_{k,d}\geq \lambda$ where $\hat{\epsilon}_{k,d}$ is the estimate worker reliability.

\section{Information-Theoretic Bounds on the Optimal Sample Complexity}\label{sec:fund}

We first analyze the optimal number of queries (the sample complexity) for the strong recovery of the label vector ${\vec{x}}$, i.e., to guarantee $\Pr{\hat{\vec{x}}\neq \vec{x}}\to 0$ as $m\to\infty$, in terms of a given fraction of degree-$d$ queries $\{\Phi_1,\dots, \Phi_D\}$ and the noise parameters $\{\epsilon_{k,d}\}$ for $k\in[w]$ and $d\in[D]$. We derive necessary and sufficient conditions on the sample complexity $n$ when the noise parameters $\{\epsilon_{k,d}\}$ of workers' answers are known at the inference algorithm. Thus, this result provides a lower bound on the optimal sample complexity for the case when $\{\epsilon_{k,d}\}$ is unknown, which is a more practical situation for applications such as crowdsourcing systems. In the next section, we develop an inference algorithm that does not require a prior knowledge of $\{\epsilon_{k,d}\}$ but still achieves the information-theoretic limit of the known $\{\epsilon_{k,d}\}$ case.
\begin{theorem}\label{thm:thm1}
Assume that total $n$ XOR queries are randomly and independently generated among which the fraction of degree-$d$ queries is $\Phi_d\geq 0$ for $\sum_{d=1}^D \Phi_d=1$ and $\sum_{d \text{ odd}} \Phi_d>0$. Each query is randomly assigned to a worker $k\in[w]$ who provides an incorrect answer to a degree-$d$ query with probability $0 <\epsilon_{k,d}<1/2$.
With the maximum likelihood (ML) estimator  $\hat{\vec{x}}\in\{1,-1\}^m$, which minimizes $\Pr{\hat{\vec{x}} \neq \vec{x}}$ for a known $\{\epsilon_{k,d}\}$, the strong recovery is possible, i.e., $\Pr{\hat{\vec{x}} \neq \vec{x}} \to 0$ as $m\to\infty$, if the number of queries  is
\begin{equation}\label{eqn:thm1}
n \geq ( 1 + \eta) \frac{ m\log m}{\sum_{d = 1}^D \sum_{k = 1}^w \frac{d\Phi_d}{w} ( \sqrt{1 - \epsilon_{k,d}}-\sqrt{\epsilon_{k,d}})^2},
\end{equation}
and only if 
\begin{equation}
n \geq ( 1 - \eta) \frac{ m\log m}{\sum_{d = 1}^D \sum_{k = 1}^w \frac{d\Phi_d}{w} ( \sqrt{1 - \epsilon_{k,d}}-\sqrt{\epsilon_{k,d}})^2},
\end{equation}
for any arbitrarily small constant $\eta > 0$. 
\end{theorem}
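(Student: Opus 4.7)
The plan is to analyze the maximum likelihood (ML) decoder through the classical Bhattacharyya/Chernoff bound on pairwise error events. For any alternative label vector $\vec{x}' \neq \vec{x}$ and any query $j$ whose parity differs between $\vec{x}$ and $\vec{x}'$ (i.e., the number of flipped labels in $\partial y_j$ is odd), the log-likelihood ratio increment is $\pm\log\frac{1-\epsilon_{w(j),d(j)}}{\epsilon_{w(j),d(j)}}$, and Chernoff bounding with the symmetric parameter $1/2$ yields a per-query factor $2\sqrt{\epsilon_{w(j),d(j)}(1-\epsilon_{w(j),d(j)})} \le \exp(-(\sqrt{1-\epsilon_{w(j),d(j)}}-\sqrt{\epsilon_{w(j),d(j)}})^2)$. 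The hypothesis $\sum_{d\text{ odd}}\Phi_d>0$ enters here to ensure that $\vec{x}$ and $-\vec{x}$ are distinguishable to begin with, since even-degree XOR queries cannot separate them.

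For the achievability direction, the dominant error event is a single-label flip $\vec{x}^{(i)}$, which changes the parity of exactly the queries in $\partial x_i$. Conditional on the random query design, the pairwise ML error probability is at most $\exp(-\sum_{j\in\partial x_i}(\sqrt{1-\epsilon_{w(j),d(j)}}-\sqrt{\epsilon_{w(j),d(j)}})^2)$. Since each query lies in $\partial x_i$ with probability $d(j)/m$ and is assigned to worker $k$ with probability $1/w$, the mean of this exponent equals $nC/m$ with $C$ the constant in the denominator of \eqref{eqn:thm1}; a Bernstein-type inequality controls deviations of the bounded sum from its mean. A union bound over $i\in[m]$ yields $m\exp(-(1-o(1))nC/m)$, which is $o(1)$ exactly when $n\ge(1+\eta)m\log m/C$. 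Multi-label alternatives flipping $k\ge 2$ positions are bounded analogously by $\exp(-\Omega(knC/m))$, and the $\binom{m}{k}$ such alternatives contribute a geometric-type sum that remains $o(1)$ under the same condition.

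For the converse, define $\mathcal{F}_i$ as the event that $\vec{x}^{(i)}$ has likelihood at least that of $\vec{x}$, so that $\Pr{\hat{\vec{x}}\neq\vec{x}}\ge\Pr{\cup_{i\in[m]}\mathcal{F}_i}$. A sharp Cram\'er-type lower bound on the log-likelihood ratio tail (Bahadur--Rao, or a local limit theorem for the relevant lattice sum) gives $\Pr{\mathcal{F}_i}\ge\exp(-(1+o(1))nC/m)$ conditional on a typical design, so for $N:=\sum_i\mathbf{1}_{\mathcal{F}_i}$ we have $\E[N]\ge m^{\eta-o(1)}\to\infty$ whenever $n\le(1-\eta)m\log m/C$. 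Applying the second-moment inequality $\Pr{N=0}\le\mathrm{Var}(N)/\E[N]^2$ reduces the converse to showing that the events $\{\mathcal{F}_i\}_{i\in[m]}$ are asymptotically uncorrelated; the key point is that $\mathcal{F}_i$ and $\mathcal{F}_{i'}$ share randomness only through the queries in $\partial x_i\cap\partial x_{i'}$, of which there are only $\Theta(n/m)$ in expectation.

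The main obstacle lies in the second-moment calculation for the converse: showing $\E[N^2]\le(1+o(1))\E[N]^2$ requires decomposing pairs $(i,i')$ by the number and degrees of queries they jointly contain and arguing that conditioning on $\mathcal{F}_i$ perturbs $\Pr{\mathcal{F}_{i'}}$ by at most a $1+o(1)$ factor. A secondary technical point is obtaining the matching lower-tail bound on the log-likelihood ratio with the correct constant in the exponent, since the per-query increments are lattice-valued with supports depending on both $k$ and $d$; this typically requires either an explicit tilted-measure computation or a local central limit theorem for the tilted lattice sum rather than a naive reverse Chernoff.
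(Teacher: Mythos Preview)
Your achievability outline matches the paper's: a union bound over all alternatives grouped by Hamming weight $s$, the Chernoff bound at the Bhattacharyya parameter $t=1/2$ yielding the per-query factor $2\sqrt{\epsilon_{k,d}(1-\epsilon_{k,d})}$, then averaging over the random degree and worker assignment. The paper makes your ``geometric-type sum'' precise by splitting into three regimes $s\le\delta m$, $\delta m<s\le(1-\delta)m$, and $s>(1-\delta)m$; the odd-degree hypothesis is invoked only in the last regime, since for even $d$ the distinguishing probability $p_{s,d}$ collapses as $s\to m$ while for odd $d$ it stays bounded below. Your blanket claim $\exp(-\Omega(knC/m))$ is correct in order for all $k$, but the paper's regime decomposition is what makes the constants work.

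Your converse takes a genuinely different route from the paper. The paper does \emph{not} run a second-moment argument directly on all $m$ single-flip events $\mathcal{F}_i$. Instead it first shows that with high probability there exist $r=m/(2\log^7 m)$ labels no two of which lie in a common query; restricted to this isolated set, the events $\{\mathcal{E}_{\eta,i}\}_{i\in[r]}$ become \emph{conditionally independent} given the per-label query counts $\{n_i\}$, and an explicit multinomial expansion of $\prod_{i}(1-A^{n_i})$ followed by a balls-into-bins second-moment computation finishes. Your direct second-moment plan over all $m$ events should also succeed---the expected size of $\partial x_i\cap\partial x_{i'}$ is in fact $\Theta(n/m^2)=\Theta((\log m)/m)$, smaller than the $\Theta(n/m)$ you wrote, so almost all pairs are disjoint and the residual contribution is $O(m\log m)\cdot\Pr{\mathcal{F}_i}=O(m^{\eta}\log m)=o((\mathbb{E}N)^2)$---but the paper's isolation trick trades that decorrelation estimate for exact conditional independence, which sidesteps having to track the joint law of $(|\partial x_i|,|\partial x_{i'}|)$ under the fixed-$n$ constraint. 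For the sharp lower tail $\Pr{\mathcal{F}_i}\ge e^{-(1+o(1))nC/m}$, the paper uses precisely the tilted-measure device you anticipate, with Berry--Esseen supplying the constant-probability window on the tilted sum.
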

\begin{IEEEproof}
The proof of this theorem is provided in Appendix~\ref{app:prThm1}. 
\end{IEEEproof}

\begin{remark}[{Efficiency of high-degree XOR queries}]
For a special case where the worker error probability is independent of the query degree $d$, i.e., $\epsilon_{k,d}=\epsilon_k$, Theorem~\ref{thm:thm1} shows that the sample complexity is inversely proportional to the average query degree $\sum_{d=1}^D d\Phi_d$ when $D=\Theta(1)$. This implies that increasing the query degree and asking a more complicated query to a worker helps reduce the required number of queries, if the error probability of a worker's answer does not change depending on the complexity of the queries.
\end{remark}

\begin{remark}[Optimal degree of XOR queries for a general error model]
For a general set of $\{\epsilon_{k,d}\}$, it can be inferred from~\eqref{eqn:thm1} that concentrating the degree distribution to a degree that has the maximum value of $\sum_{k=1}^w d (\sqrt{1 - \epsilon_{k,d}}-\sqrt{\epsilon_{k,d}})^2$ would be the optimal way to minimize the required number of queries. In other words, the optimal query degree that minimizes the required number of queries is $d^*=\argmax_{d\in\{1,\dots,D\}} \sum_{k=1}^w d (\sqrt{1 - \epsilon_{k,d}}-\sqrt{\epsilon_{k,d}})^2$.
However, in many applications, the query designer has no knowledge on the workers' reliabilities $\{\epsilon_{k,d}\}$ at the stage of query design, so determining the optimal query degree in advance is impossible. This motivates a query designer to mix queries with different degrees. A more important aspect of mixing queries with different degrees we argue in this paper is that it is possible to achieve the optimal sample complexity~\eqref{eqn:thm1} with an efficient algorithm even when $\{\epsilon_{k,d}\}$ is unknown, if there exist $\Theta(m)$ number of degree-1 queries. Thus, in the next section we will assume that the degree of the first $m$ queries is fixed to 1 regardless of the query-degree distribution $\Phi$. Note that the addition of $m$ queries has negligible effect on the optimal sample complexity, which scales as $\Theta(m \log m)$. More details will be found in the next section where we propose an efficient inference algorithm.
\end{remark}

\begin{remark}[Comparison to homogeneous query]
The homogeneous query, which asks whether all the items in a chosen subset of size $d$ belong to the same class or not, is another widely-studied group-query type.  
In~\cite{ahn2019community}, the required number of measurements to recover $m$ binary labels from random homogeneous query of degree-$d$ was analyzed in the context of hypergraph clustering. The paper considered the setup where the query degree and the error probability are fixed to $d$ and $\epsilon$, respectively, for all the queries. Under such a setup, it was shown that the required number of measurements (the answers from random homogeneous query) for strong recovery of $m$ binary labels scales as 
$\frac{2^{d-2}}{d}\frac{m\log m}{(\sqrt{1-\epsilon}-\sqrt{\epsilon})^2}$. Note that the information efficiency of homogeneous query decreases as the query degree $d$ increases; whereas that of XOR query  increases as in~\eqref{eqn:thm1}, as long as the error probability of workers' answer does not increase too fast to offset the information gain from the increased query degree. In Section~\ref{sec:experiment}, we provide simulation results to compare the information efficiency of XOR query, homogeneous query as well as repetition query under a fair error-model called $d$-coin flip model.
\end{remark}


\section{An Efficient Algorithm Achieving the Optimal Sample Complexity}\label{sec:alg}
In this section, we propose a computationally-efficient algorithm that guarantees the strong recovery of $m$ binary labels at the optimal sample complexity~\eqref{eqn:thm1} even when the worker reliabilities $\{\epsilon_{k,d} \}$ are unknown.
We assume that the first $m$ queries have a fixed degree $d=1$ and ask each of the $m$ labels exactly once.

\subsection{Four-Phase Inference Algorithm for XOR Queries} 
\begin{algorithm}[tb]
	\caption{Four-Phase Inference Algorithm for XOR Queries}
	\label{algorithm1}

\begin{algorithmic}[1]
\STATE \textbf{Data}: 	The observed query answers $\vec{y} \in\{1, -1\}^n$ and the tripartite graph (as in Fig.~\ref{fig:tripartitte}) depicting query design and worker assignment.
\STATE Phase 1. (Detection of labels): Let $A^{(1)}$ be the set of first $m$ degree-1 queries  asking each label. For each label $i \in [m]$, calculate the first estimate of $x_i$ as 
\begin{equation}\label{eqn:x1}
\hat{x}_i^{(1)} = \sign(y_{j_i}) ,
\end{equation}
where $j_i = \partial x_i \cap A^{(1)}$.
\STATE Phase 2. (Weak recovery of labels): Let $A^{(2)}$ be the next $n^{(2)} = m \left(\frac{\log m}{\log \log m}\right)$ queries. For each label $i \in [m]$ and query $j \in \partial x_{i}\cap A^{(2)}$, let 
$
m_{j\to i}^{(2)} = y_j \prod_{i^\prime \in \partial y_j \backslash \{i\}} \hat{x}_{i^\prime}^{(1)},
$
and calculate the second estimate of $x_i$ as
\beq\label{eqn:x2}
\hat{x}_i^{(2)} = \sign\Bigg( \sum_{j \in \partial x_{i}\cap A^{(2)}} m_{j \to i}^{(2)} \Bigg).
\eeq

\STATE Phase 3. (Estimating workers' reliabilities): Let $A^{(3)}$ be the next $n^{(3)} = w (\log m)(\log\log m)$ queries. For each query $j \in A^{(3)}$, define
$
E_{j}^{(3)} = \mathds{1} \bigg( y_j \neq \prod_{i \in \partial y_j} \hat{x}_i^{(2)} \bigg).
$
For each worker $k \in [w]$ and degree $d \in [D]$, choose the estimate of the noise parameter $\epsilon_{k,d}$ as
\begin{equation}\label{eqn:epkd}
\hat{\epsilon}_{k,d} =\textsf{TRUNC}\left( \frac{\sum_{j \in \partial w_{k}\cap A_d^{(3)}} E_{j }^{(3)} }{| \partial w_{k}\cap A_d^{(3)}|},[\lambda,0.5]\right).
\end{equation}
\STATE Phase 4. (Strong recovery of labels): Let $A^{(4)}$ be the rest $n^{(4)}=n-n^{(1)}-n^{(2)}-n^{(3)}$ queries. For each label $i \in [m]$ and query $j \in \partial x_{i}\cap A^{(4)}$, let
$
m_{j\to i}^{(4)} = y_j \prod_{i^\prime \in \partial y_j \backslash \{i\}} \hat{x}_{i^\prime}^{(2)}
$
and
$
M_{j \to i}^{(4)} = \log \bigg( \frac{1 - \hat{\epsilon}_{w(j),d(j)}}{\hat{\epsilon}_{w(j),d(j)}} \bigg) m_{j \to i}^{(4)},
$
and calculate the final estimate of $x_i$ as
\begin{equation}\label{eqn:x4}
\hat{x}_i^{(4)} = \sign \Bigg( \sum_{j \in \partial x_{i}\cap A^{(4)}}  M_{j \to i}^{(4)} \Bigg).
\end{equation}
\STATE \textbf{Output}: Final estimates $\hat{\vec{x}}:=\hat{\vec{x}}^{(4)}$ for labels.

\end{algorithmic}
\end{algorithm}

The algorithm we propose, presented as Algorithm~\ref{algorithm1}, is composed of four phases: detection of labels, weak recovery of labels, estimation of workers' reliabilities, and strong recovery of labels. We divide the total queries of size $n$ into four sets $A^{(1)}$, $A^{(2)}$, $A^{(3)}$, and $A^{(4)}$ of sizes $|A^{(1)}|=m$, $|A^{(2)}|=m\frac{\log m}{\log\log m}$, $|A^{(3)}|=w(\log m)(\log\log m)$, and $|A^{(4)}|=n-\sum_{l=1}^3 |A^{(l)}|$, and use each set only at the corresponding phase of the algorithm. We assume that the set $A^{(1)}$ is composed of only degree-1 queries each of which asks each label $i\in[m]$. 
The key intuition underlying Algorithm~\ref{algorithm1} is as follows.
\begin{itemize}
\item At Phase 1, we make an initial guess on each label $i\in[m]$ by using the $m$ degree-1 queries in $A^{(1)}$. When $y_{j_i}$ is the answer for the label $x_i$, we define the initial estimate of $x_i$, denoted by $\hat{x}_i^{(1)}$, to be equal to $\sign(y_{j_i})$. Since we assume $\epsilon_{k,d}<1/2$ regardless of the worker $k$, the answer is better than a random guess and we can guarantee the detection of all the labels from this step, i.e., $\Pr{\hat{x}^{(1)}_i\neq x_i}<{1}/{2}$ for all $i \in [m]$. This initial phase helps our estimates on the following phases converge toward the correct labels, and without this phase, the following phases could be nothing more than a random guess. 
\item At Phase 2, we use the estimates $\{\hat{x}_i^{(1)}\}$ from Phase 1 and the next set of $m\left(\frac{\log m}{\log\log m}\right)$ queries in $A^{(2)}$ to generate the second estimates $\{\hat{x}_i^{(2)}\}$ for the labels.  Each query node $j \in \partial x_{i}\cap A^{(2)}$ transmits a `message' $m_{j\to i}^{(2)} = y_j \prod_{i^\prime \in \partial y_j \backslash \{i\}} \hat{x}_{i^\prime}^{(1)}\in\{1,-1\}$ to its neighboring label node $i$, where the message is the estimate of $x_i$ based on the query answer $y_j$ and the estimates $\{\hat{x}_{i'}^{(1)}: {i^\prime \in \partial y_j \backslash \{i\}}\}$ from the previous phase. Then, the $i$-th label node collects all the messages from its neighboring query nodes and does the majority voting to calculate the second estimate $\hat{x}_i^{(2)}$ in~\eqref{eqn:x2}. Note that this phase has resemblance to the `hard-decision decoding algorithm' (Gallager's decoding algorithm) for LDPC codes~\cite{gallager1962low}, where the messages are allowed to take values only from $\{1,-1\}$ and the {\it check node} outputs a message along an edge $\mathsf{e}$ which is the product of all the incoming messages excluding the incoming message along  $\mathsf{e}$~\cite{gallager1962low,richardson2001capacity,bazzi2004exact,zarrinkhat2004threshold}.
We will show that even without any information on the workers' reliabilities, the weak recovery is possible using simple majority voting over the transmitted messages at this phase, i.e., $\Pr{\hat{x}^{(2)}_i\neq x_i}\to 0$ for all $i\in [m]$.
\item Phase 3 estimates the reliability of each worker $k\in[w]$ for each query degree $d\in[D]$ by using the next $|A^{(3)}|=w(\log m)(\log\log m)$ queries. The estimate for $\epsilon_{k,d}$ is generated by checking how many answers in $A^{(3)}$ from a worker $k$ for degree-$d$ queries agree with the weakly recovered labels $\{\hat{x}_i^{(2)}\}$ from Phase 2. We will show that the corresponding estimate $\hat{\epsilon}_{k,d}$  in~\eqref{eqn:epkd} converges to the true noise parameter $\epsilon_{k,d}$ as $m\to\infty$ under the condition that the number of workers $w=o(m/\log \log m)$. 
\item Finally, at the last phase, we use the rest set of queries to generate the final estimate $\hat{x}_i^{(4)}$. We do the weighted majority voting for updated messages $\{m_{j\to i}^{(4)}\}_{j\in \partial x_i\cap A^{(4)}}$, where $
m_{j\to i}^{(4)} := y_j \prod_{i^\prime \in \partial y_j \backslash \{i\}} \hat{x}_{i^\prime}^{(2)}
$, with weight equal to $ \log \bigg( \frac{1 - \hat{\epsilon}_{k,d}}{\hat{\epsilon}_{k,d}} \bigg) $
where query $j$ is assigned to worker $k$ and has degree $d$. This phase refines the weakly recovered labels and generates the final estimate $\hat{\vec{x}}^{(4)}$ such that $\Pr{{\hat{\vec{x}}^{(4)}} \neq \vec{x}}\to 0$ as $m\to\infty$. After this phase, the strong recovery is achieved. 
\end{itemize}

\begin{remark}[Time Complexity] As for time complexity, the proposed algorithm takes $O(m\log m)$ time steps. The first phase requires $O(m)$ time steps, and the second phase takes $O( (m\log m)/\log \log m )$ since there are at most $(Dm\log m)/\log\log m$ different $m_{j\to i}^{(2)}$'s. The third phase requires $O(w(\log m)(\log\log m))$ time steps where $w$ is the number of workers. We later assume that $w=o(m/\log\log m)$. The last phase takes $O(m\log m)$ since there are at most $D m\log m$ different $M_{j\to i}^{(4)}$'s. 
\end{remark}

\begin{remark}[Importance of Phase 3--4 in Algorithm~\ref{algorithm1}] Without Phase 3--4, we can still guarantee the weak recovery of labels from Phase 1--2 by using only $\Theta \left(\frac{m\log m}{\log \log m} \right)$ queries (as will be proved in Lemma~\ref{lem:2} in Section~\ref{subsec:pflem2}). However, to guarantee the strong recovery of labels, especially with the exact constant factor as in~\eqref{eqn:thm1}, which depends on the worker reliabilities $\{\epsilon_{k,d}\}$, it is inevitable to estimate the worker reliabilities (Phase 3) and use them as weights in the weighted majority voting for label estimates (Phase 4). In Section~\ref{sec:experiment}, we provide some simulation results that compare the performance of Algorithm~\ref{algorithm1} with and without Phase 3--4 to demonstrate the effectiveness of these phases in strong recovery of labels. 
\end{remark}

\subsection{Theoretical Performance Guarantee} 
\begin{algorithm}[tb]
	\caption{Modification of Algorithm \ref{algorithm1}}
	\label{algorithm2}

\begin{algorithmic}[1]
\STATE \textbf{Data}: 	The observed query answers $\vec{y} \in\{1, -1\}^n$ and the tripartite graph (as in Fig.~\ref{fig:tripartitte}) depicting query design and worker assignment.
\STATE Phase 0. (Removing a few query nodes generating loops) For a fixed $i\in [m]$, consider a graph $G_i$ for the inference of $x_i$ from the root $\hat{x}_i^{(4)}$ of Algorithm~\ref{algorithm1}. To remove loops in $G_i$, eliminate maximally three queries from  $\partial x_i\cap A^{(4)}$ (the set of queries in $A^{(4)}$ connected to $x_i$) and maximally five queries from $\{\partial w_k\}_{k\in \{w(j): j\in\partial x_i \cap A^{(4)}\}} \cap A^{(3)} $ (the set of queries in $A^{(3)}$ connected to any worker who answered any query in $\partial x_i\cap A^{(4)}$). More details on the queries to be removed are described in Algorithm~\ref{algorithm3}. After removing the queries, if $G_i$ still has any loop, claim an error. 
\STATE Phase 1--4 are the same as Algorithm~\ref{algorithm1} except that we generate the estimates $\{\hat{x}^{(1)}_{i''}\}$, $\{\hat{x}^{(2)}_{i'}\}$, $\{\hat{\epsilon}_{k,d,i}\}$, $\hat{x}_i^{(4)}$ only for the nodes that appear in $G_i$ (after the removal of the nodes generating loops). 
\STATE Repeat Phase 0--4 for each $i\in[m]$.
\STATE \textbf{Output}: Final estimates $\hat{\vec{x}}:=\hat{\vec{x}}^{(4)}$ for labels.
\end{algorithmic}
\end{algorithm}

Algorithm \ref{algorithm1} can be considered as a type of message-passing algorithm on a factor graph.
The analysis of message-passing algorithms becomes much simpler when the corresponding inference graph is a tree and the messages at each level of the graph are independent. However, if we draw a graph $G_i$ for the inference of $x_i$ for each $i\in[m]$ from the root $\hat x_i^{(4)}$ of Algorithm \ref{algorithm1}, the graph is not perfectly a tree with probability $\omega(\frac{1}{m})$, which cannot be ignored in the error analysis. Instead, by removing a few (constant) number of query nodes connected to $\hat x_i^{(4)}$ or to $\hat{\epsilon}_{k,d}$, we can make the inference graph from the root $\hat{x}_i^{(4)}$ a tree with probability $1-o(1/m)$. 
For the purpose, we modify Algorithm \ref{algorithm1} and add Phase 0 just to remove the query nodes generating loops in the graph $G_i$ for each $i\in[m]$. The modification is summarized in Algorithm \ref{algorithm2}, and the detailed definition of the queries removed from $G_i$ will be given in Algorithm~\ref{algorithm3} in Section~\ref{sec:pfthm2}, where we prove the performance of the Algorithm~\ref{algorithm2}.

Since the set of queries removed from the inference graph $G_i$ could be different for each $i\in[m]$, unlike Algorithm \ref{algorithm1}, the estimation of worker reliabilities $\{\epsilon_{k,d}\}$ can be different for each $i \in [m]$ and we denote the estimates as $\{\hat{\epsilon}_{k,d,i}\}$ to emphasize that the estimate depends on the survived nodes in $G_i$ after removing a few nodes generating loops. 

In Algorithm \ref{algorithm2}, to obtain the final estimate $\hat{x}_i^{(4)}$ for each label $i\in[m]$ we generate the estimates  $\{\hat{x}^{(1)}_{i''}\}$, $\{\hat{x}^{(2)}_{i'}\}$, $\{\hat{\epsilon}_{k,d,i}\}$ only for the nodes that appear in $G_i$ (after the removal of the nodes generating loops). The number of nodes (including query/label/worker nodes) appear in each $G_i$ is bounded by $\Theta\left(\log^3 m \right)$ with probability $1-o(1/m)$, and thus the total time-complexity of Algorithm \ref{algorithm2} is bounded by  $\Theta\left(m \log^3 m \right)$ with probability $1-o(1/m)$.


We provide performance guarantee for  Algorithm \ref{algorithm2} in Theorem~\ref{thm:thm2} by using the fact that the inference graph $G_i$ is a tree with probability $1-o(1/m)$ after removing a few query nodes. We emphasize that this modification is purely for theoretical purpose. In Section~\ref{sec:experiment} we show through simulations that Algorithm \ref{algorithm1} without modification closely achieves the information-theoretic bounds on the minimum number queries at a finite $m$. 


\begin{theorem}\label{thm:thm2}
Assume that the number $w$ of workers is $o(m/ \log\log m)$ and each of the labels in $[m]$ is queried at least once by degree-1 queries.  Then, Algorithm \ref{algorithm2} achieves the strong recovery, i.e., $\Pr{\hat{\vec{x}} \neq \vec{x}} \to 0$ as $m\to\infty$, with the information-theoretically optimal number of queries in~\eqref{eqn:thm1}. 
\end{theorem}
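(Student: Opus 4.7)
The plan is to analyze Algorithm~\ref{algorithm2} phase by phase, exploiting the key property that Phase~0 forces the inference graph $G_i$ rooted at $\hat{x}_i^{(4)}$ to be a tree with probability $1-o(1/m)$. Once this tree property is in hand, all messages entering the root are independent conditional on the graph structure, and the rest of the argument reduces to a sequence of concentration inequalities. First I would quantify the ``tree event'': a moment bound on the number of short cycles in the bipartite query--label--worker graph, using $n = \Theta(m\log m)$, $D=\Theta(1)$, and $w = o(m/\log\log m)$, shows that the removal of a constant number of query neighbors around $x_i$ and around the relevant worker estimates suffices to break all loops except with probability $o(1/m)$, so a union bound over $i$ allows us to work inside the tree-event globally.

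Next I would propagate estimates forward through the phases. \textbf{Phase 1}: since $\epsilon_{k,d}\le 1/2-\lambda'$ for a uniform $\lambda'>0$, each $\hat{x}_i^{(1)}$ matches $x_i$ with probability at least $1/2+\lambda'$, giving a non-trivial bias. \textbf{Phase 2}: conditional on the tree structure, the $|\partial x_i\cap A^{(2)}|=\Theta(\log m/\log\log m)$ messages $m_{j\to i}^{(2)}$ are independent and each is correct with probability $\tfrac12+\delta$ for some $\delta>0$ depending on $\lambda'$, $\Phi$, $D$; Hoeffding then yields $\Pr[\hat{x}_i^{(2)}\neq x_i]\le\exp(-\Omega(\log m/\log\log m))=o(1)$, so weak recovery holds (this step has essentially been factored out as Lemma~\ref{lem:2}). \textbf{Phase 3}: for each worker $k$ and degree $d$, the number of queries in $\partial w_k\cap A_d^{(3)}$ is $\Theta(\Phi_d (\log m)(\log\log m))$ by $w=o(m/\log\log m)$, and each $E_j^{(3)}$ equals the true error indicator unless some $\hat{x}_{i'}^{(2)}$ in $\partial y_j$ is wrong, an event of probability $o(1)$. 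A Hoeffding bound then gives $\hat{\epsilon}_{k,d,i}=\epsilon_{k,d}+o(1)$ uniformly with probability $1-o(1/m)$.

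\textbf{Phase 4} is where the optimal constant must emerge. Conditional on the tree and on $\hat{\epsilon}_{k,d,i}=\epsilon_{k,d}+o(1)$, the sum $\sum_{j\in\partial x_i\cap A^{(4)}} M_{j\to i}^{(4)}\cdot x_i$ is a sum of independent bounded random variables; each term equals $\pm\log\tfrac{1-\hat{\epsilon}_{k,d}}{\hat{\epsilon}_{k,d}}$ with $+$ probability $1-\epsilon_{w(j),d(j)}-o(1)$ (by weak recovery of all other $\hat{x}_{i'}^{(2)}$ appearing in the product, which contribute at most $D-1=O(1)$ independent $o(1)$ error terms). A Chernoff tilt with $t=1/2$ gives the Bhattacharyya exponent $(\sqrt{1-\epsilon_{k,d}}-\sqrt{\epsilon_{k,d}})^2$ per query, and summing over the $|A^{(4)}|=(1-o(1))n$ queries contributing to label $i$, whose composition concentrates at the population frequencies $d\Phi_d/w$, gives per-label error probability $\exp\!\bigl(-(1-o(1))\tfrac{n}{m}\sum_{d,k}\tfrac{d\Phi_d}{w}(\sqrt{1-\epsilon_{k,d}}-\sqrt{\epsilon_{k,d}})^2\bigr)$. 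Plugging in~\eqref{eqn:thm1} this is $m^{-(1+\eta-o(1))}$, and a union bound over $m$ labels plus the negligible tree-failure probability gives $\Pr[\hat{\vec{x}}\neq\vec{x}]\to 0$.

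The main obstacle I anticipate is pinning down the \emph{exact} constant in Phase~4 rather than merely an order bound. This requires three things to cooperate tightly: (i) uniform convergence of $\hat{\epsilon}_{k,d,i}$ over all $k,d,i$, so that the random weights $\log\tfrac{1-\hat{\epsilon}}{\hat{\epsilon}}$ do not degrade the Chernoff exponent; (ii) control of the $o(1)$ perturbation in each message's correctness probability coming from the $D-1$ multiplied Phase-2 estimates, showing it does not propagate into a constant-factor loss after exponentiation by $\Theta(\log m)$ independent messages; and (iii) concentration of the empirical degree-and-worker composition of $\partial x_i\cap A^{(4)}$ around its mean $(d\Phi_d/w)|A^{(4)}|/m$, so the population-level exponent in the denominator of~\eqref{eqn:thm1} is the one that actually appears. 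All three are routine once isolated, but the bookkeeping---especially tracking that the Phase~3 error and the Phase~2 error feed into Phase~4 only as $1+o(1)$ multiplicative perturbations to the Chernoff exponent---is the delicate part of the argument and is what forces the introduction of Phase~0 and the parameter separation $|A^{(2)}|=m\log m/\log\log m$, $|A^{(3)}|=w(\log m)(\log\log m)$ that leaves $|A^{(4)}|=(1-o(1))n$ for the final decoding step.
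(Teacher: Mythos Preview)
Your phased outline matches the paper's architecture, and Phases~1--3 are handled essentially as the paper does. There is, however, a genuine gap in the Phase~4 analysis, precisely at the point you flag as obstacle~(iii).

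With $a:=|\partial x_i\cap A^{(4)}|=\Theta(\log m)$ queries spread over up to $wD$ worker--degree cells (and $w$ allowed to be as large as $m/(\log\log m)^2$), the empirical $(k,d)$-composition does \emph{not} concentrate with probability $1-o(1/m)$: a Hoeffding bound on $\sum_j \log A_{w(j),d(j)}$ only controls deviations of order $\sqrt{a\log m}=\Theta(\log m)$, which is a constant fraction of the mean, so the exponent is lost. Worse, even if concentration held, it would deliver the exponent $a\sum_{k,d}\tfrac{d\Phi_d}{w\bar d}\log A_{k,d}$, whereas the target constant in~\eqref{eqn:thm1} corresponds to $a\log\bigl(\sum_{k,d}\tfrac{d\Phi_d}{w\bar d}A_{k,d}\bigr)$; these differ by a Jensen gap that does not vanish. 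The paper avoids both issues by \emph{averaging} the Chernoff upper bound over the random $(k,d)$ assignment rather than concentrating it: since the $(k,d)$ of distinct perfect queries are i.i.d.\ with law $\{d\Phi_d/(w\bar d)\}$, the multinomial identity gives
\[
\mathbb{E}_{\{s_{k,d}\}}\Bigl[\prod_{k,d}A_{k,d}^{s_{k,d}}\Bigr]=\Bigl(\sum_{k,d}\tfrac{d\Phi_d}{w\bar d}A_{k,d}\Bigr)^{\hat a},
\]
after which a further binomial average over $a=|\partial x_i^{(4)}|$ yields $\bigl(1-\tfrac{1}{m}\sum_{k,d}\tfrac{d\Phi_d}{w}(\sqrt{1-\epsilon_{k,d}}-\sqrt{\epsilon_{k,d}})^2+o(1/m)\bigr)^{n^{(4)}}$ exactly.

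Obstacle~(ii) also hides a step you have not supplied. Once you condition on the tree $G_i$ (which you must, to get independence of the Phase-4 messages), the Phase-2 error $\Pr[\hat x_{i'}^{(2)}\neq x_{i'}]$ for a given $i'\in\partial^2 x_i^{(4)}$ is governed by $|\partial x_{i'}\cap A^{(2)}|$, and the event that this count falls below $c\log m/\log\log m$ has probability $\exp(-\Omega(\log m/\log\log m))$, which is \emph{not} $o(1/m)$. Hence you cannot claim every Phase-4 message has bias $1-\epsilon_{k,d}-o(1)$ conditionally on $G_i$. The paper introduces the ``good label / perfect query'' device for exactly this reason: a balls-and-bins argument shows that at most $O(\log\log m)$ of the $\Theta(\log m)$ Phase-4 queries touching $x_i$ contain a bad label, and those few are bounded by a crude constant; the resulting factor $C^{O(\log\log m)}=m^{o(1)}$ is then absorbed into the $(1-o(1))$ in the exponent.
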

\begin{IEEEproof}
The proof of this theorem is provided in Section~\ref{sec:pfthm2}. 
\end{IEEEproof}

\medskip

{\bf Proof sketch: }Even though the full proof is presented in Section~\ref{sec:pfthm2}, here we provide the high-level ideas.
In Phase 1, we use $m$ degree-1 queries to have estimates $\{\hat{x}_{i''}^{(1)}\}$ better than a random guess. Since each label is answered by a worker whose error probability is less than $\frac{1}{2}$, the detection is guaranteed, i.e., $\Pr{\hat{x}_{i''}^{(1)} \neq x_{i''}}<{1}/{2}$ for all ${i''} \in [m]$.

In Phase 2, each label node $i'$ collects messages $\{m_{j\to i'}: j \in \partial x_{i'}\cap A^{(2)}\}$ from its neighboring query nodes in the set $A^{(2)}$ of size $|A^{(2)}|=m\frac{\log m}{\log\log m}$, and provides its second estimate $\hat{x}^{(2)}_{i'}$ by the majority voting over the messages. The message $m_{j\to i'}$ is the estimate of $x_{i'}$ based on the query answer $y_j$ and the estimates $\{\hat{x}_{i''}^{(1)}: {i^{\prime\prime} \in \partial y_j \backslash \{i'\}}\}$ from the previous phase. We show that the probability that $m_{j\to i'}^{(2)}$ is different from the true label $x_{i'}$ is less than 1/2. Thus, the label node collecting the average number of messages, $\Theta\left(\frac{\log m}{\log\log m}\right)$, can correctly recover the true label by simple majority voting with high probability as $m\to\infty$. 

In Phase 3, the error probability of each worker for a degree-$d$ query is estimated as the fraction of the worker's answers that do not match with the weakly recovered label nodes $\{\hat{x}_{i'}^{(2)}\}$.  For this phase, we use a new set of queries in $A^{(3)}$ of size $|A^{(3)}|=w(\log m)(\log\log m)$ where $w$ is the number of workers. Since the number of degree-$d$ queries assigned to a worker $k$ is $\Theta((\log m)(\log\log m))$ with high probability, by applying Hoeffding's inequality, we can prove that each $\hat{\epsilon}_{k,d,i}$ converges to the true noise parameter $\epsilon_{k,d}$ with the maximal error of $O(1/(\log \log m)^{1/4})$ as $m\to\infty$. The condition on the number workers $w=o(m/\log\log m)$ is required to make $|A^{(3)}|$ negligible compared to the overall number of queries $n=\Theta(m\log m)$.

In Phase 4, each query node $j \in \partial x_{i}\cap A^{(4)}$ transmits an `updated message' $m_{j\to i}^{(4)} = y_j \prod_{i^\prime \in \partial y_j \backslash \{i\}}  \hat{x}_{i^\prime}^{(2)}\in\{1,-1\}$, which is the estimate of $x_i$, to the $i$-th label node. Then, the $i$-th label node applies a weight $\log \frac{({1 - \hat{\epsilon}_{w(j),d(j),i}})}{({\hat{\epsilon}_{w(j),d(j),i}}) )}$ on each message and does the weighted majority voting on the collected messages. 
By using the accuracy of the estimates $\{\hat{\epsilon}_{k,d,i}\}$ proved in the analysis of Phase 3, we can show that the weighted majority voting succeeds in recovering the true label vector ${\vec{x}}$ with high probability when the sample complexity $n$ satisfies~\eqref{eqn:thm1}.

\section{Experiments}\label{sec:experiment}

In this section, we report experimental results that illustrate the tightness of our theorems and the optimality of the proposed algorithm both for synthetic and real datasets. The first subsection demonstrates that our theoretical finding on the strong recovery is valid and tight even in non-asymptotic regimes, where we assume a simple error model such that the error probability varies over workers but not on query degrees. The comparison between different types of queries, including XOR, repetition, and homogeneous queries, are presented in the next subsection with a fair error model called $d$-coin flip model. Lastly, we apply the XOR query and the proposed algorithm to a real crowdsourcing platform, Amazon Mechanical Turk, and substantiate the practicality of the proposed algorithm.

\subsection{Performance of the Proposed Algorithm}
\begin{figure}
  \centering
  \includegraphics[width=8cm]{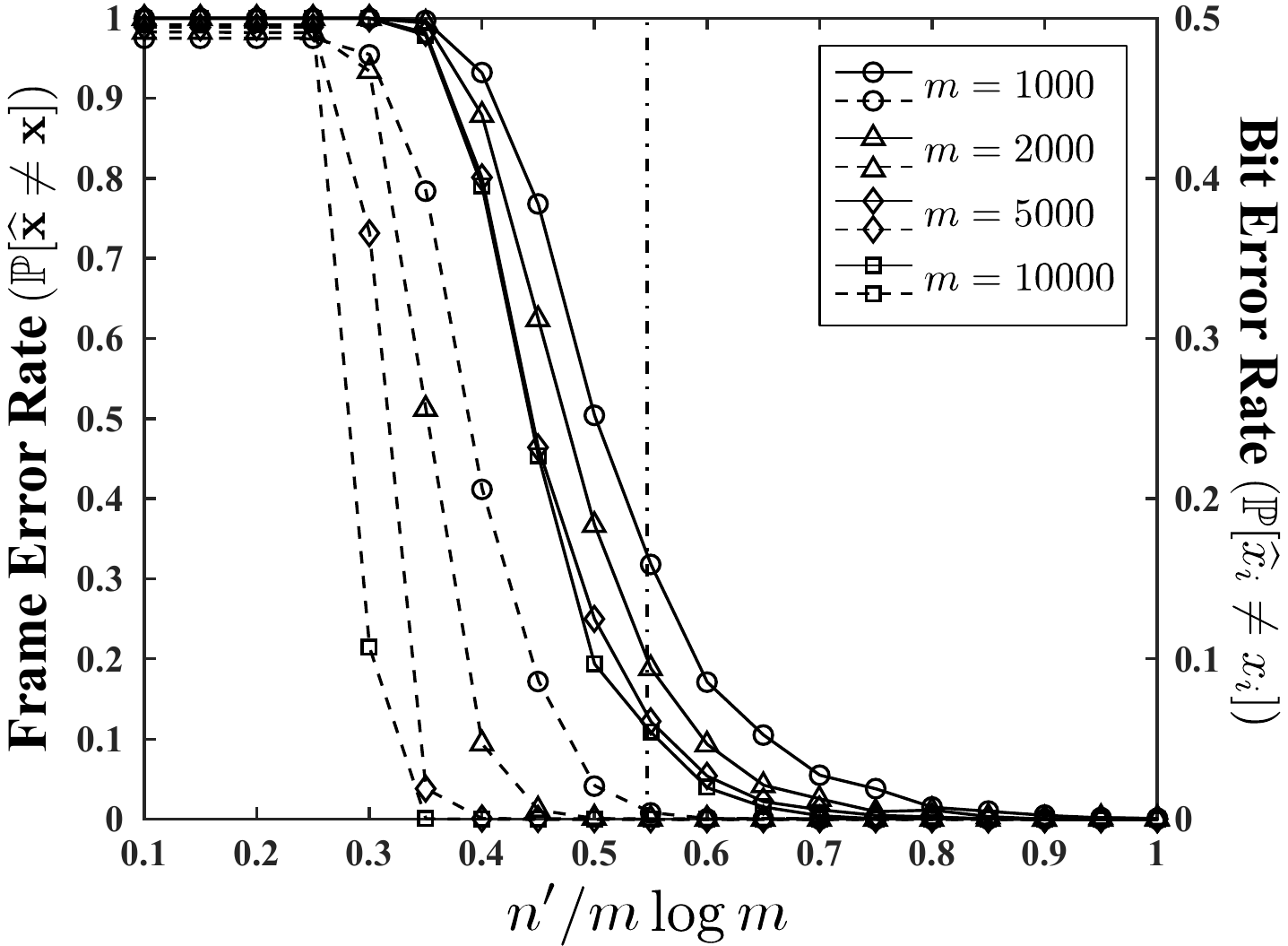}
  \caption{\normalsize Frame error rate, $\mathbb{P}[{\hat{\vec{x}}\neq \vec{x}}]$, (solid lines) and bit error rate $\mathbb{P}[\hat{x}_i \neq x_i]$ (dashed lines)  vs. (normalized) number of queries for four different values of $m$. Dash-dotted vertical lines are the information-theoretic limits for each $m$ given by Theorem \ref{thm:thm1}.}
  \label{fig:simulation1}
\end{figure}
We first show through simulation that Algorithm \ref{algorithm1} achieves the bound established in Theorem \ref{thm:thm1} for finite $m$. We set $m$ to have values of 1000, 2000, 5000, and 10000, while fixing the number of workers to $w = 100$. We vary the query degrees by letting them randomly sampled from $3$ to $6$ with equal probabilities, but we use a simple error model where the error probability does not depend on the query degree, as in the case of communication systems. Equal number of workers have the error probabilities, each from $\{0.02, 0.04, \cdots, 0.20\}$. Different from the original Algorithm~\ref{algorithm1} where each phase is conducted only once, to increase the accuracy of the estimates at a finite $m$, Phase 2 is iterated 10 times, and then Phases 3--4 are together iterated 10 times. Also, apart from the $m$ degree-1 queries used in Phase 1 of Algorithm \ref{algorithm1}, the queries are not divided into separates sets $A^{(2)}$, $A^{(3)}$, $A^{(4)}$, but all the queries are used together in both the iterations. In all the following experiments, we use this modified proposed algorithm. Denoting the number of queries used in the iterations as $n^\prime := n - m$, we measured the frame error rate, $\Pr{\hat{\vec{x}} \neq \vec{x}}$, and the average bit error rate, $\Pr{\hat{x}_i \neq x_i}$, of the proposed algorithm with respect to $n^\prime$ by repeating the experiment 1000 times. The result is shown in Fig \ref{fig:simulation1}. The solid lines and dotted lines indicate frame and bit error rate, respectively, and the information-theoretic limit given by Theorem \ref{thm:thm1} is shown with the vertical dash-dotted line. We observe that the proposed computationally-efficient algorithm for XOR query nearly achieves the optimal sample complexity even when the noise parameters are unknown, and the algorithm converges faster at bigger $m$. The bit error rate drops at much smaller sample complexity than the limit, but it does not imply the perfect recovery.

\begin{figure}
  \centering
  \includegraphics[width=8cm]{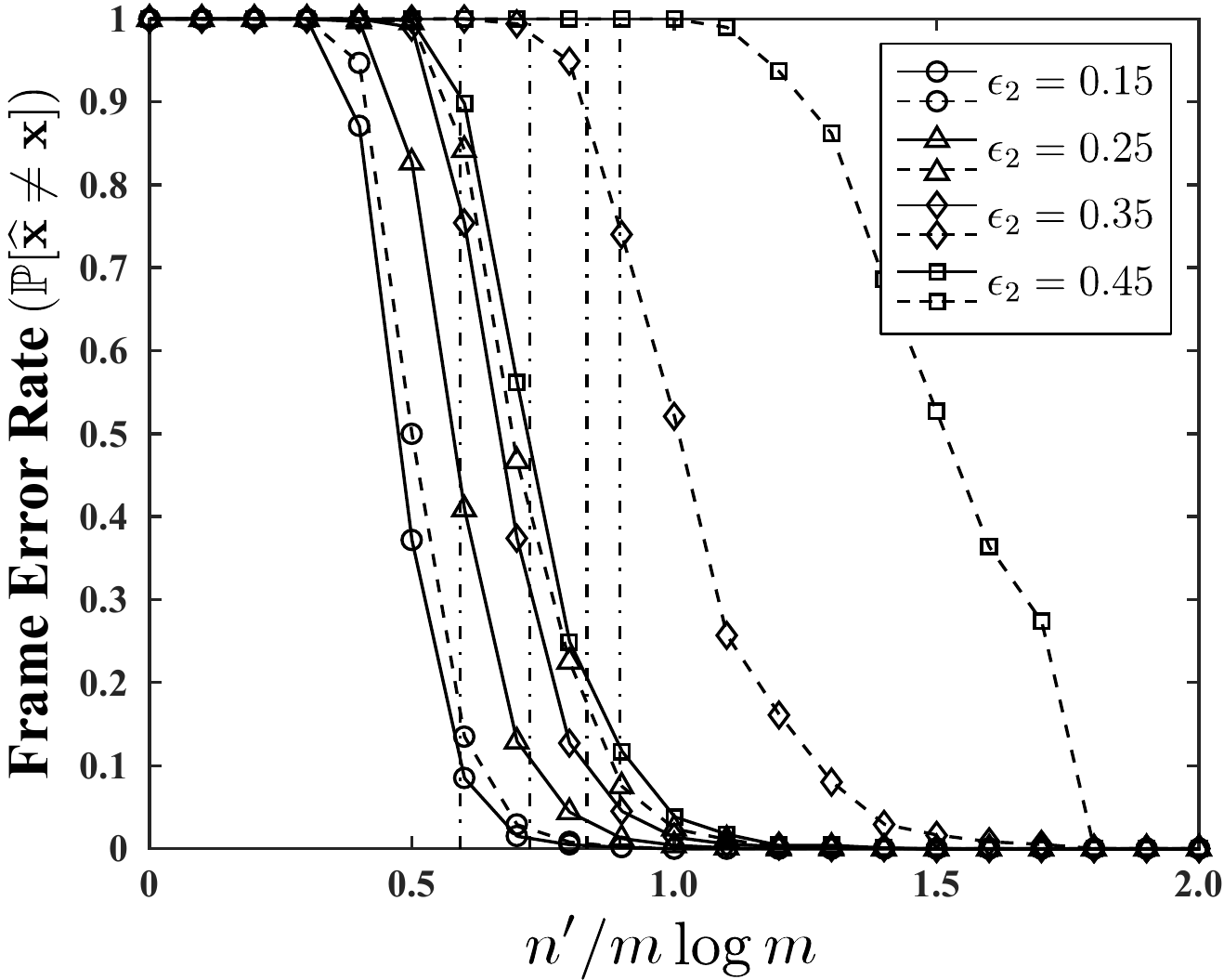}
  \caption{\normalsize Frame error rate $\mathbb{P}[{\hat{\vec{x}}\neq \vec{x}}]$ vs. (normalized) number of queries for four different combinations of worker reliabilities when the correct labels are inferred by Alg.~\ref{algorithm1} with (solid lines) and without (dashed lines) Phases 3--4. Dash-dotted vertical lines correspond to the information-theoretic limits of each case from $\epsilon_2=0.15$ (left most) to $\epsilon_2=0.45$ (right most).}
  \label{fig:simulation2}
\end{figure}

In the next experiment, we compare the performances of Algorithm~\ref{algorithm1} with and without Phases 3--4, respectively, to validate the significance of estimating worker reliabilities. We consider a setting where the number of object labels $m=5000$ and the number of workers $w=100$. All queries have a fixed degree $d=4$ except the first $m$ degree-1 queries. The error probability of half of the workers is fixed to $\epsilon_1 = 0.05$, but that of the other half is varied to $\epsilon_2 = 0.15, 0.25, 0.35, 0.45$. We assume that the degree-1 queries are assigned only to the first half of the workers. The error rates averaged over 1000 trials are summarized in Figure~\ref{fig:simulation2}, where the solid lines correspond to the proposed algorithm and the dashed lines correspond to the proposed algorithm without Phases 3--4. As the variance of the workers' reliability increases, i.e., for a higher $\epsilon_2$, the gap between the solid line and the dashed line increases. This shows that Phases 3--4 of the proposed algorithm, where worker reliabilities are estimated and used to refine the weakly recovered labels, become more important as the difference between workers' reliabilities is greater.

\subsection{Comparison of Different Schemes with a Fair Error Model}\label{subsec:exp1}
\begin{figure}
  \centering
  \includegraphics[width=8cm]{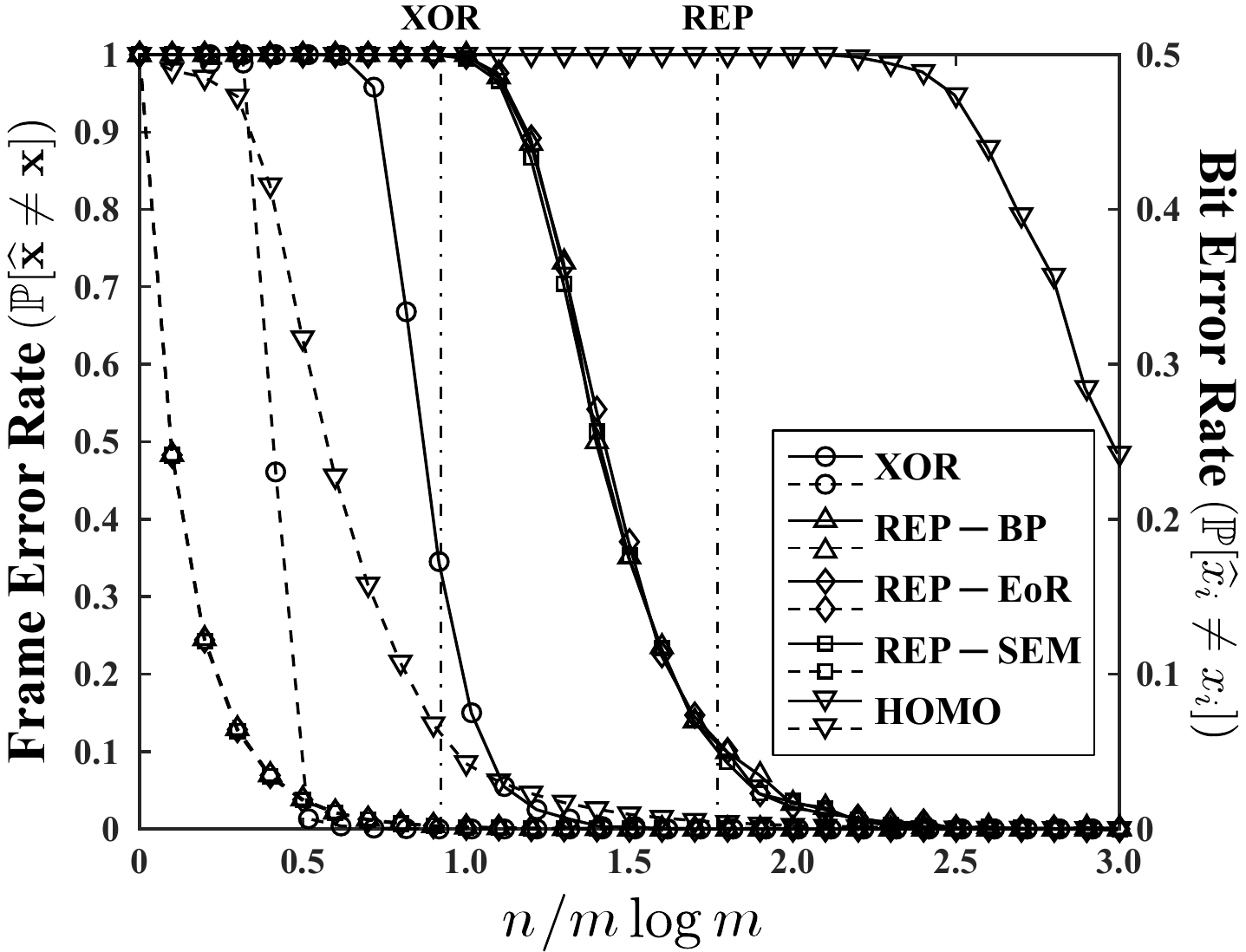}
  \caption{\normalsize Frame error rate, $\mathbb{P}[{\hat{\vec{x}}\neq \vec{x}}]$ (solid lines), and bit error rate $\mathbb{P}[\hat{x}_i \neq x_i]$ (dashed lines) vs. (normalized) number of queries for five different pairs of query types and inference algorithms. Dash-dotted vertical lines are information-theoretic limits of XOR (left) and REP query (right).}
  \label{fig:simulation3}
\end{figure}
In the first experiment, we assumed that the error probability of a worker remains constant regardless of query degree. However, we often encounter an application, e.g. crowdsourcing, such that the worker's error probability depends on query degree, or in general, querying method. Hence, in order to fairly compare different querying methods and the inference algorithms, a proper error model is required that describes the change in worker's error probability with respect to querying method. In this subsection, we propose such a noise model named $d$-coin flip model, and compare XOR queries with repetition (REP) and homogeneous (HOMO) queries.

In the $d$-coin flip model, given a query degree $d$, a worker $k$ independently flips $d$ coins, each of which gives head with probability $\epsilon_k$. When head has occurred, the worker makes wrong decision about the item corresponding to the coin, and after gathering $d$ decisions by the query operation, the final answer for the query is made. For example, the error probability of a degree-$d$ XOR query under the $d$-coin flip model becomes 
\beq
\epsilon_{k,d} = \sum_{\substack{l\in[1:d],\\l\text{ odd}}}{d\choose l}\epsilon_k^{l}(1-\epsilon_k)^{d-l}=\frac{1-(1-2\epsilon_k)^d}{2}.
\eeq
In this experiment, we chose the reliability parameter $\epsilon_k$ of each worker randomly from the set $\{0.010,0.020,\cdots,0.100\}$.

We use the number of object labels $m=5000$ and the number of workers $w=100$ as in the previous experiment. The query degrees of XOR query and HOMO query are uniformly sampled from 3 to 6, and the answers are collected with the $d$-coin flip noise model. As usual, the XOR query has additional $m$ degree-1 queries for the initialization, and the proposed algorithm is applied. For HOMO query, we apply the inference algorithm based on spectral clustering and local refinement, which has been shown to be order-wise optimal in~\cite{ahn2018hypergraph}.
For REP query, three state-of-the-art algorithms, based on belief propagation (BP)~\cite{karger2014budget}, spectral-EM (SEM)~\cite{zhang2014spectral}, and ratio of eigenvector (EoR)~\cite{dalvi2013aggregating} are applied. 
Figure~\ref{fig:simulation3} shows the frame and bit error rates measured in 1000 trials versus (normalized) number of queries for the five different pairs of query types and inference algorithms. The $m$ degree-1 queries of XOR querying is also included in the plot. The result indicates the benefit of using XOR queries with high degrees over REP and HOMO queries in reducing the sample complexity  for exact recovery. Although all the three algorithms for REP query nearly achieve the optimal sample complexity, the large gap between the fundamental limits of XOR query and that of repetition query  (plotted by vertical lines)  makes XOR query more efficient than REP query in terms of strong recovery. However, REP queries show better performance than XOR query in terms of bit error rate especially when the number of queries are small. The HOMO query turns out to be the worst among the three query types.

\/*
\subsubsection{Efficiency of XOR Queries with a More General Error Model}
\begin{figure}
  \centering
  \includegraphics[width=\columnwidth/2]{generalerror}
  \caption{\normalsize Information theoretic limits~\eqref{eqn:thm1} on the normalized number $\frac{n}{m\log m}$ of degree-$d$ XOR queries required for strong recovery of $m$ labels for a general error model~\eqref{eqn:generalerror} with the parameter $a$.}
  \label{fig:generalerror}
\end{figure}
We conduct another set of simulations to demonstrate the effectiveness of XOR querying for a more general error model. The new error model we consider is as follows. Suppose that a worker $k$ is not sure about $p_k\in(0,1)$ portion of item labels. For a degree-$d$ query, we assume that a worker provides the correct answer when he knows all the $d$ items and gives a random answer when he is not sure about any of $d$ items. The error probability is then equal to $\epsilon_{k,d}=(1-(1-p_k)^d)/2$. Note that this model is equivalent to the $d$-coin-flip model, which we considered in the previous experiment, when $p_k=2\epsilon_k$ and $\epsilon_k$ is the probability of making a wrong decision for an item. 
We remark that this new error model well approximates the empirical human error probability from MTurk: we had on average $p=2\epsilon=2*0.06$ (for $d=1$) and for $d=4$ the empirical human error probability was 0.19, which is close to $(1-(1-2*0.06)^4)/2=0.20$.
We further generalize this error model such that there exists a probability $f(d)$, increasing in $d$, that a worker provides a wrong answer even when he knows all the $d$ items (due to calculation error), and modify the error model for a degree-$d$ query as 
\beq\label{eqn:generalerror}
\epsilon_{k,d}=f(d)(1-p_k)^d+\frac{1}{2}(1-(1-p_k)^d)\quad \text{ for }\quad f(d)=0.5\tanh(a\cdot d)
\eeq 
where $a\geq 0$ controls how fast the error probability increases in $d$. The simulation results with this generalized error model are shown in Figure~\ref{fig:generalerror}. In the plots, $x$ and $y$ axes are the probability $p=p_k$ and query-degree $d$, respectively, and the z-axis is the required number $n$ of queries in~\eqref{eqn:thm1} normalized by $m\log m$. It is shown that for all the cases there exist some regimes where multi-degree XOR queries ($d>1$) are more efficient than the repetition query ($d=1$) for this general error model.
*/

\subsection{Real Experiment: Crowdsourcing}
\begin{figure}
  \centering
  \includegraphics[width=12cm]{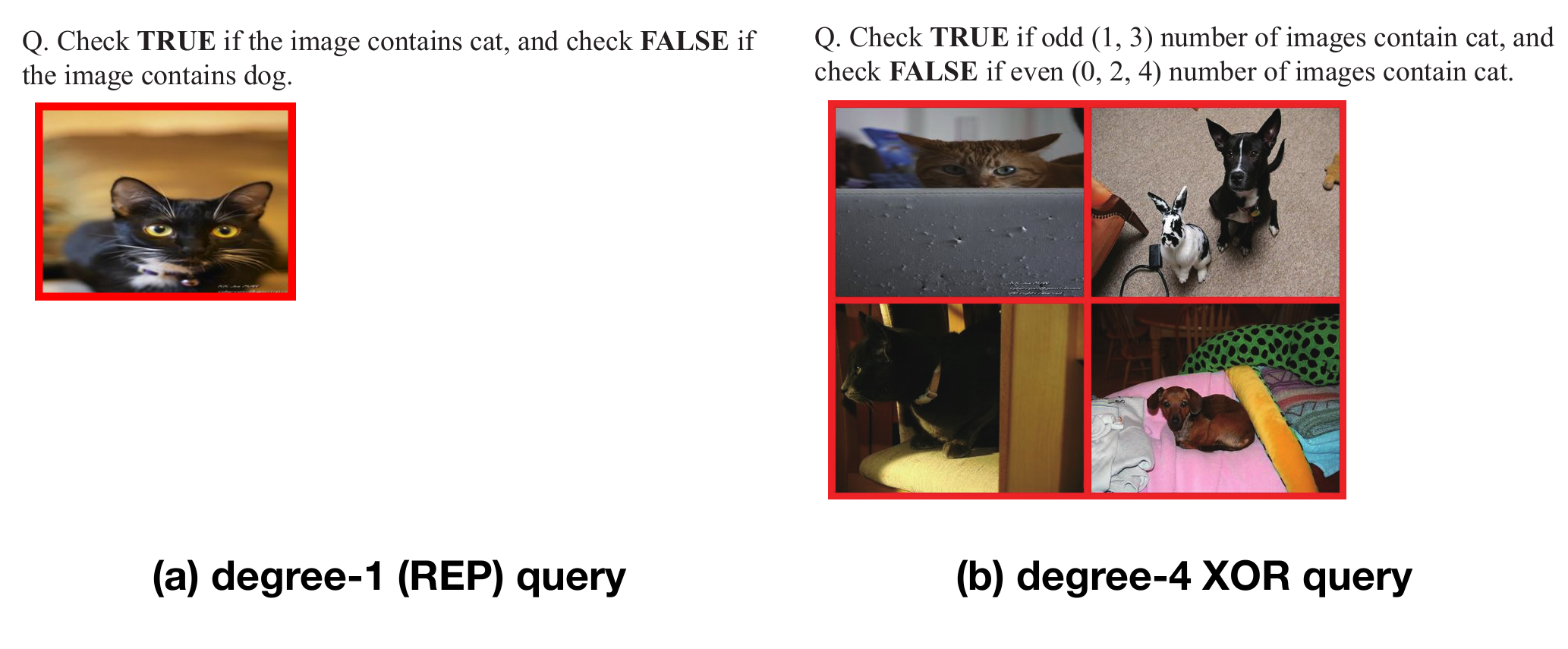}
  \caption{\normalsize Examples of degree-1 and degree-4 queries.}
  \label{fig:query}
\end{figure}

\begin{figure}
  \centering
  \includegraphics[width=8cm]{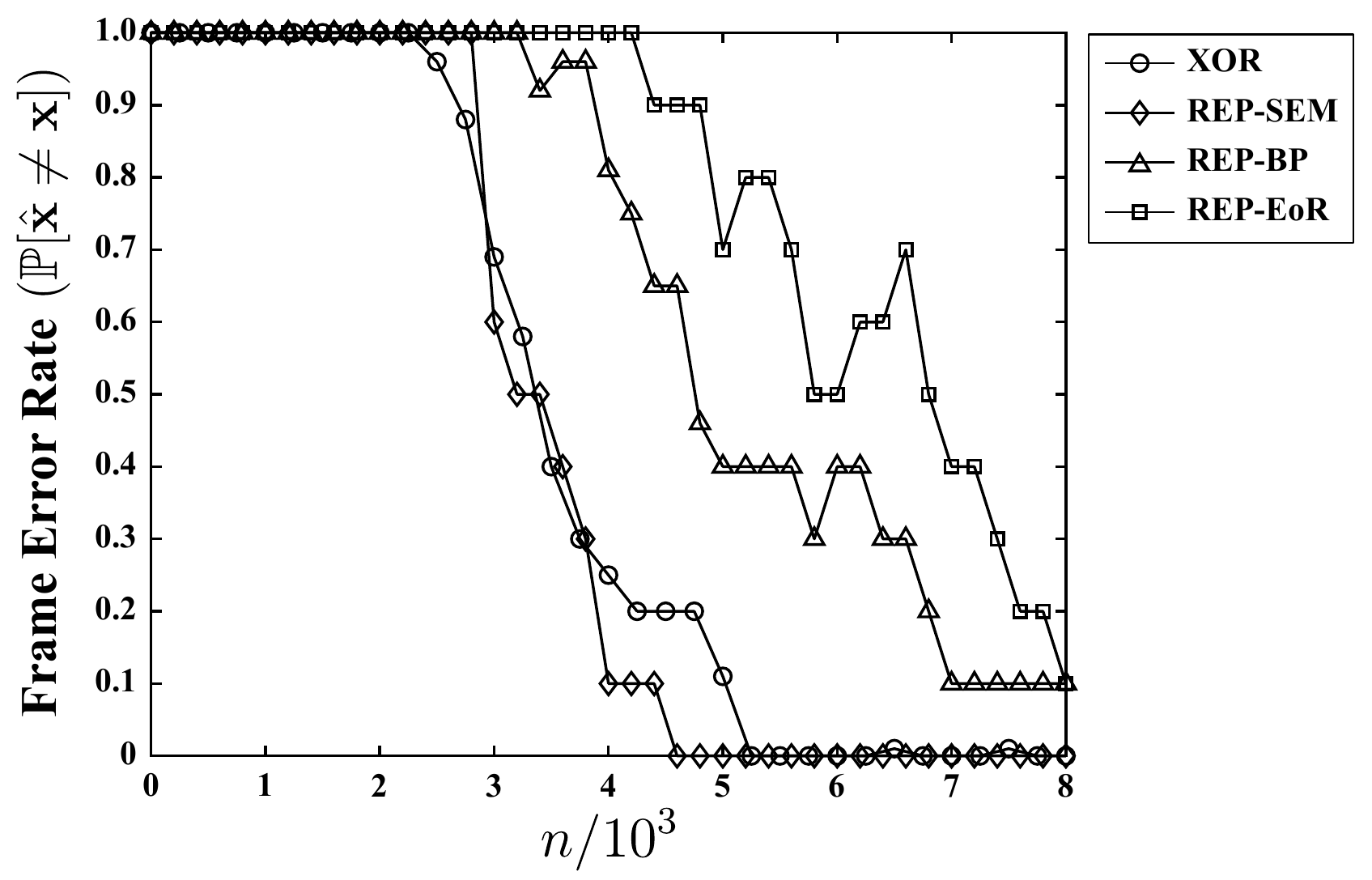}
  \caption{\normalsize Probability of error $\mathbb{P}[{\hat{\vec{x}}\neq \vec{x}}]$ in strong recovery vs. number of queries for four different algorithms applied to real dataset from human workers.}
  \label{fig:simulation5}
\end{figure}
In this subsection, we assess the practicality of XOR query and the proposed algorithm by applying them to a real crowdsourcing platform. We designed a binary classification task using 600 images of dogs and cats sampled from ImageNet~\cite{russakovsky2014imagenet}, and collected data from the workers in Amazon Mechanical Turk. Each human intelligent task (HIT) was designed to include 20 degree-1 queries and 20 degree-4 XOR queries. The examples of each query type are shown in Figure~\ref{fig:query}. We designed 400 HITs and assigned each of them to 400 workers. The reward of each query was fixed to \$0.01 regardless of the query degree. For the collected data, we compare how many queries $n$ are required to recover all the 600 labels when we use only degree-1 (REP) queries or we use 20 degree-4 XOR queries with additional 5 degree-1 queries from each HIT. For the REP queries we apply three different inference algorithms (BP, SEM, EoR) as in the previous experiment, and for the XOR queries we apply Algorithm~\ref{algorithm1}. We repeat this experiment 10 times and plot the empirical error rate in Figure~\ref{fig:simulation5}.
The result shows that XOR query with the proposed algorithm outperforms REP query with BP or EoR algorithms, but it has similar performance to REP query with SEM algorithm. 
The theoretical limits~\eqref{eqn:thm1} on the required number of queries calculated with the empirical noise parameters $\{\epsilon_{k,d}\}$ from the real dataset are  2300 for (degree-4) XOR query and 5200  for degree-1 REP query. 
In the experiment, the proposed algorithm with XOR query does not closely match this limit at the finite $m=600$, and thus the gain from the XOR query is not clearly seen. 
The reason could be that the number $m$ of images we use for the experiment is not large enough to meet the asymptotic information-theoretic limit.

\section{Proof of Theorem~\ref{thm:thm2}: Analysis of Algorithm~\ref{algorithm2}}\label{sec:pfthm2}
In this section, we provide the proof of Theorem~\ref{thm:thm2}. The proof is separated into two parts: in the first part (Section~\ref{sec:goodevents_state}), we will introduce a sequence of ``good events'' related to the query design and assignment that occurs with high probability; in the second part (Section~\ref{sec:pfthm2good}), we will consider the answers we get from the queries and analyze the error probability of Algorithm~\ref{algorithm2} conditioned on the good events.  In Section~\ref{sec:factor}, we start by introducing a factor graph and the related definitions and notations that are required to define and analyze the good events and the performance of Algorithm~\ref{algorithm2} 

\subsection{Factor Graph}\label{sec:factor}
\begin{figure}
  \centering
  \includegraphics[width=3cm]{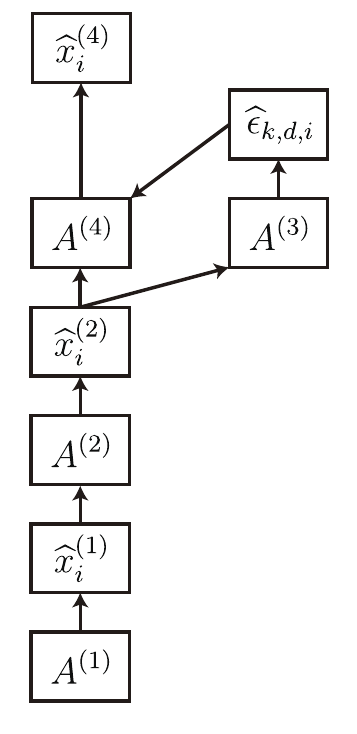}
  \caption{\normalsize The dependency between the types of nodes.}
  \label{fig:dependency}
\end{figure}

\begin{figure}
  \centering
  \includegraphics[width=12cm]{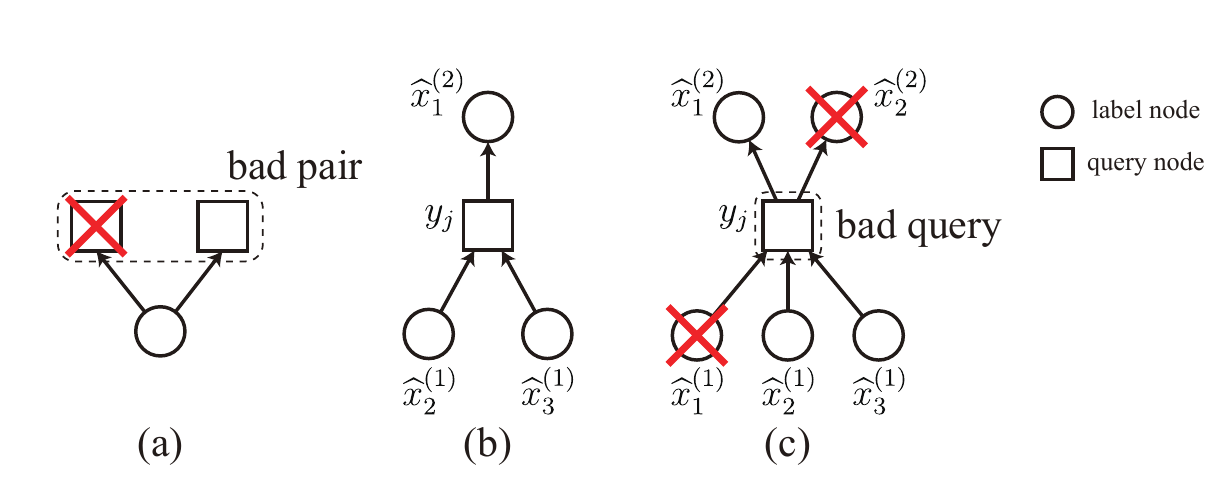}
  \caption{\normalsize Factor graph for (a) bad pair (b) non-bad query (c) bad query. The figure is drawn assuming the degree of Phase 2 query $y_j$ is $3$ and $\partial y_j = \{1, 2, 3\}$.}
  \label{fig:bad_nodes}
\end{figure}

When analyzing a message-passing type of algorithm, it is helpful to draw the factor graph, along the edges of which messages are transmitted. Thus, before we start proving Theorem~\ref{thm:thm2}, we introduce the symbolic meanings of the nodes in factor graph $G_i$ from the root $\hat x_i^{(4)}$ of Algorithm~\ref{algorithm2}.

First, we introduce three types of nodes present in the graph $G_i$: label node, query node, and worker node. Label nodes can be used to represent the three different estimates $\hat x_{i^\prime}^{(1)}$, $\hat x_{i^\prime}^{(2)}$, and $\hat x_{i^\prime}^{(4)}$ that are made on label $x_{i^\prime}$ for $i^\prime \in [m]$ in Phase 1, 2, and 4 of Algorithm~\ref{algorithm2}, respectively. Note that we depict the three estimates by different nodes although they have the same index $i^\prime$. There are four types of query nodes depending on in which phase of the algorithm (from Phase 1 to 4) the query is used. Each query node used in Phase 1, 2 and 4 outputs a message, $\sign(y_j)$, $m_{j \to i}^{(2)}$, $M_{j \to i}^{(4)}$, to each neighboring label node $i\in \partial y_j$, respectively, while each query node used in Phase 3 outputs a message $E_j^{(3)}$ to its neighboring worker node $w(j)$. 
Lastly, the worker node outputs a message $\hat \epsilon_{k,d,i}$, which is the estimate on worker reliability made in Phase 3, to a Phase 4 query node.

An edge in factor graph represents the dependency between the nodes, or the direction to which the message is transmitted. We next figure out the dependency between the types of the nodes introduced above. The estimates on labels made in Phase 1, 2, and 4 are based on the messages from queries in the same phase. In order to calculate the messages, query nodes use the estimates on label nodes made in the previous phase, but as a special case, Phase 4 query nodes use the reliability estimates for worker nodes made in Phase 3 as well as the estimates on label nodes made in Phase 2 . To calculate the reliability estimates, worker nodes use the messages from Phase 3 query nodes. We summarize the dependency between the types of nodes in Figure \ref{fig:dependency}, where the vertical position represents the level of nodes in the overall factor graph $G_i$. 
For example, Phase 2 label node is located in a higher level than Phase 2 query node, and Phase 3 and 4 query nodes are located in the same level.

We will explain two different cases where a cycle is created in the factor graph. The first case is when a label node is connected to two different query nodes in a higher level, as depicted in Figure \ref{fig:bad_nodes}-(a). We call the two query nodes a \textit{bad pair}. When there exists a bad pair, we will delete one of the query nodes from the factor graph. The second case is when a query node sends messages (or connected) to more than one label nodes in a higher level. We call such a query a \textit{bad query}. We define the \textit{badness} of a bad query as the number of higher level nodes it is connected to, which is always larger than one by the definition. A non-bad query is depicted in Figure \ref{fig:bad_nodes}-(b) for comparison, with one higher-level label node and $d-1$ lower-level label nodes, where $d$ is the query degree.
A bad query with badness equal to two is depicted in Figure \ref{fig:bad_nodes}-(c). Unlike a non-bad query, a bad query is always connected to $d$ lower-level label nodes since all the estimates on the $d$ labels are required to calculate the messages to any two or more different higher-level label nodes. 
If there exists a bad query, we delete $(\textit{badness}-1)$ number of higher-level nodes so as to make only one node left in the higher level, and also delete the lower-level node that has the same index with the remaining higher-level node since it is not required anymore.

In the next subsection, we will define ``good events" on the factor graph $G_i$ such that there are only a few bad pairs and a few bad queries so that by removing a few nodes we can make $G_i$ a tree with high probability.

\subsection{Good Events}\label{sec:goodevents_state}
\begin{figure*}
  \centering
  \includegraphics[width=15cm]{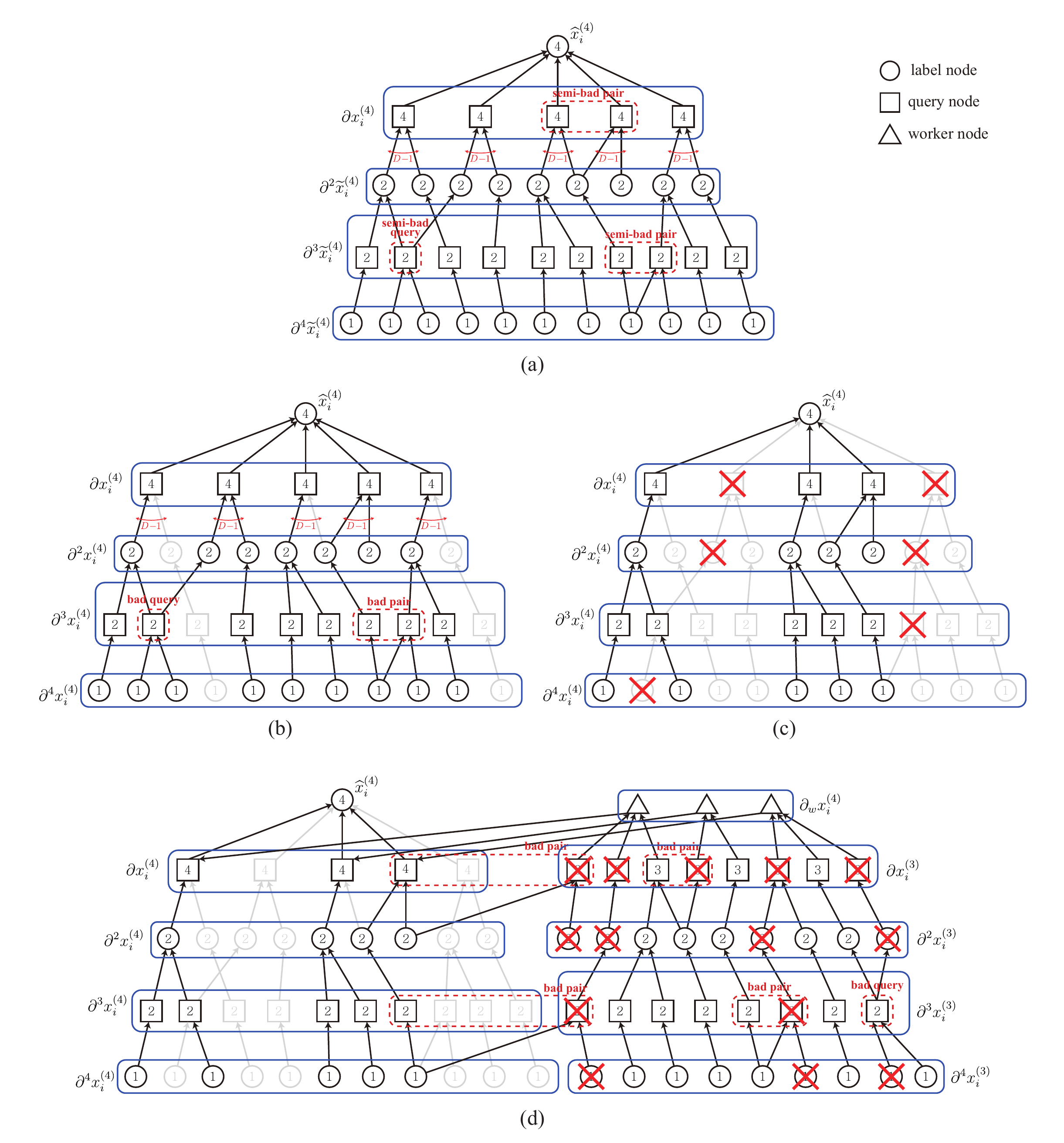}
  \caption{The numbers in nodes indicate the phase of Algorithm~\ref{algorithm2} to which each node belongs. (a) An example of the factor graph generated after step 4). (b) The factor graph after obtaining the true degree of each query in $\partial x_i^{(4)}$ and selecting $\partial y_j \setminus \{i\}$ from $\partial \tilde y_j$. (c) The factor graph after the removal of bad queries and bad pairs in step 5). (d) The final factor graph with an illustration of the bad pairs and bad queries made in steps 6) -- 10).}
  \label{fig:inf_graph}
\end{figure*}

We explain the detailed sequential process of drawing $G_i$ in this subsection. We will also introduce ``good events" related to each step of the process. 
The good events we define occur with probability exceeding $1 - o(1/m)$ so that it does not affect the error analysis. 
Basically, there are two types of good events regarding the random graph $G_i$. The events of the first type assert that there are not many bad pairs and bad queries in $G_i$ and by removing a few query nodes from $G_i$ we can make the remaining graph a tree with high probability. The second type is related to the number of queries connected to $\hat{x}_i^{(4)}$ and to $\hat{\epsilon}_{k,d,i}$ that are helpful in correctly estimating the label $x_i$ and the noise parameters $\{\epsilon_{k,d}\}$, respectively.

We explain the first type of good events related to independence of messages used for the estimation of $x_i$ and $\{\epsilon_{k,d}\}$, by considering the random process of generating a factor graph $G_i$ from the root $\hat{x}_i^{(4)}$ according to the random query design and assignment model, explained in Section~\ref{sec:model_qdesign}.
The process of generating $G_i$ and how the bad pairs and bad queries appear in the graph are depicted in Fig.~\ref{fig:inf_graph}.

\begin{enumerate}
\item At the first step, a subset of Phase 4 queries are connected to $x_i$ according to the random query design model, where each query in $A^{(4)}$ is connected to $x_i$ with probability $\bard/m$, independently, where $\bard=\sum_{d=1}^D d\Phi_d$ is the average query degree. We denote the set of Phase 4 queries connected to $x_i$ as 
\beq\label{eqn:partial x_i^4}
\partial x_i^{(4)}:=\partial x_i\cap A^{(4)}.
\eeq
\item Next, each query node $y_j$ for $j\in \partial x_i^{(4)}$ needs to select $d-1$ label nodes from $[m]\backslash\{i\}$ uniformly at random. The query degree $d$ of each query is sampled from the distribution $\left\{ \frac{d \Phi_d}{\bar d} \right\}$ instead of $\{ \Phi_d\}$, since conditioned on that $y_j$ is already connected to $x_i$ the degree distribution of each query increases proportional to $d$.
We divide this label-selection process for $y_j$ into the following three steps: (a) each query node $y_j$ first selects $D-1$ label nodes randomly  from $[m]\backslash\{i\}$, and we denote this set of label nodes by $\partial \tilde y_j$, (b) the true degree $d$ is sampled from the distribution $\left\{ \frac{d \Phi_d}{\bar d} \right\}$, and (c) $d-1$ label nodes are randomly sampled from $\partial \tilde y_j$ to construct $\partial y_j \setminus \{i\}$. We assume that only step (a) is performed here, and steps (b) and (c) are postponed until step 5).

The reason we introduced the set $\partial \tilde y_j$ is to acquire symmetry between the queries in $\partial x_i^{(4)}$ regardless of the true degree $d$. We also note that $\partial \tilde y_j$ is a kind of latent variable that is not revealed to Algorithm~\ref{algorithm2}, and some of the nodes in $\partial \tilde y_j$ may not be contained in $G_i$. Thus, $\partial \tilde y_j$ will be used not for the operations of Algorithm~\ref{algorithm2}, but only for the purpose of proving the performance of the algorithm later.

The set of label nodes selected in this step is denoted by
\beq\label{eqn:partial2 x_i^4}
\partial^2 \tilde x_i^{(4)} := \bigcup_{j \in \partial x_i^{(4)}} \partial \tilde y_j .
\eeq
In this step, if a pair of queries $(j_1,j_2)$ in $\partial x_i^{(4)}$ selects the same label from $[m]\backslash\{i\}$, i.e. $\partial \tilde y_{j_1} \cap \partial \tilde y_{j_2} \neq \emptyset$, we call this pair a {\it semi-bad pair}. We call it a \textit{semi}-bad pair, since it may or may not be a bad pair after we choose the actual $\partial y_{j_1}(\subset\partial \tilde y_{j_1} )$ and $\partial y_{j_2}(\subset\partial \tilde y_{j_2} )$. Also, note that a bad pair is always a semi-bad pair. We will later show that there is at most one semi-bad pair in $\partial x_i^{(4)}$, thus at most one bad pair in  $\partial x_i^{(4)}$. 
\item At the third step, we start considering Phase 2 query assignment, and specify the set of queries connected to each label node in $\partial^2 \tilde x_i^{(4)}$. First, the degree of each query in Phase 2 query set $A^{(2)}$ is sampled from $\{ \Phi_d \}$. We then separate the queries in $A^{(2)}$ into two sets depending on whether a query selects at least one label node in $\partial^2 \tilde x_i^{(4)}$ or not. Denote by $\partial^3 \tilde x_i^{(4)}$ the set of queries in $A^{(2)}$ that select at least one label node in $\partial^2  \tilde x_i^{(4)}$
\beq\label{eqn:partial3 x_i^4}
\partial^3 \tilde x_i^{(4)} := \left\{\bigcup_{i' \in \partial^2 \tilde x_i^{(4)}} \partial x_{i'} \right\}\bigcap A^{(2)}.
\eeq
Each query of degree-$d$ in $A^{(2)}$ is included in the set $\partial^3 \tilde x_i^{(4)}$ independently with probability $1-\frac{{m-|\partial^2 \tilde x_i^{(4)}| \choose d}}{{m\choose d}}$.

\item At the next step, each query in $\partial^3 \tilde x_i^{(4)}$ selects its labels conditioned on that it should select at least one label from $\partial^2 \tilde x_i^{(4)}$. Hence, we first let each degree-$d$ query in $\partial^3 \tilde x_i^{(4)}$ randomly select one label from $\partial^2 \tilde x_i^{(4)}$, and then let them select the remaining $d-1$ labels uniformly at random from the remaining $m - 1$ labels in $[m]$. If a query selects more than one label from $\partial^2 \tilde x_i^{(4)}$, we call such a query a \textit{semi-bad query}. The badness of a semi-bad query is defined the same as that of a bad query.  Also, if a pair of queries selects the same label not in $\partial^2 \tilde x_i^{(4)}$, the pair is called a semi-bad pair as before. We will show that there are at most one semi-bad query with badness equal to two, and at most one semi-bad pair in $\partial^3 \tilde x_i^{(4)}$. We will also show that there is no semi-bad query with badness larger than two.   Let us define $\partial^4 \tilde x_i^{(4)}$ as the index set of Phase 1 label nodes connected to Phase 2 query nodes in $\partial^3 \tilde x_i^{(4)}$. When there is no semi-bad query, it is equal to
\beq\label{eqn:partial4 x_i^4}
\partial^4 \tilde x_i^{(4)} = \left\{\bigcup_{j \in \partial^3 \tilde x_i^{(4)}} \partial y_j \right\} \setminus \partial^2 \tilde x_i^{(4)},
\eeq
but when there is a semi-bad query, $\partial^4 \tilde x_i^{(4)}$ may include some labels in $\partial^2 \tilde x_i^{(4)}$. The factor graph constructed up to this point is depicted in Figure~\ref{fig:inf_graph}-(a).
\item Now, we perform the steps (b) and (c) of the label-selection process for queries in $\partial x_i^{(4)}$ as introduced in step 2). After this, we will be given the set of label nodes in $\partial^2 \tilde x_i^{(4)}$ that are actually connected to the queries in $\partial x_i^{(4)}$ in $G_i$. We denote the set by
\begin{equation}
\partial^2 x_i^{(4)} := \bigcup_{j \in \partial x_i^{(4)}} \partial y_j \setminus \{i\}.
\end{equation}
We remove from the factor graph the child nodes of $\partial^2 \tilde x_i^{(4)}$ that is not connected to $\partial^2 x_i^{(4)}$, and define the survived nodes in $\partial^3 \tilde x_i^{(4)}$ and $\partial^4 \tilde x_i^{(4)}$ as $\partial^3 x_i^{(4)}$ and $\partial^4 x_i^{(4)}$, respectively. The resulting factor graph after this step is depicted in Figure~\ref{fig:inf_graph}-(b). We check whether the semi-bad queries and semi-bad pairs are still present in the factor graph, and if so, since they are bad queries or bad pairs, we remove some of the queries in $\partial x_i^{(4)}$ and the child nodes of them to eliminate the bad queries and bad pairs. There are at most one bad pair in $\partial x_i^{(4)}$, one bad pair and one bad query with badness equal to two in $\partial^3 x_i^{(4)}$, and thus by removing at most three queries from $\partial x_i^{(4)}$ we can make the remaining graph a tree. An example of this process is depicted in Figure~\ref{fig:inf_graph}-(c). After the removal of the nodes, the definition of the sets $\partial x_i^{(4)}, \partial^2 x_i^{(4)},\partial^3 x_i^{(4)},\partial^4 x_i^{(4)} $ are updated to include only the survived nodes at each level of the factor graph. 


\item The remaining steps are related to the estimation of workers' reliabilities $\{\epsilon_{k,d}\}$. We first specify the set of workers who answer for each query in $\partial x_i^{(4)}$. Remind that each query is assigned to a worker randomly selected from $[w]$. Let us define the set of $(k,d)$ for which the worker $k\in[w]$ is assigned to at least one degree-$d$ query in $\partial x_i^{(4)}$, i.e.,
\beq\label{eqn:partialw x_i^4}
\begin{split}
\partial_w x_i^{(4)}=&\{(k,d)\in[W]\times[D]: w(j)=k, \;\;d(j)=d\text{ for some }j\in\partial x_i^{(4)} \}.
\end{split}
\eeq
We can also interpret $\partial_w x_i^{(4)}$ as the set of worker nodes connected to $\partial x_i^{(4)}$.

\item The next step is to assign Phase 3 queries to the worker nodes in $\partial_w x_i^{(4)}$. Let us define $\partial x_i^{(3)}$ as the set of queries in $A^{(3)}$ assigned to any worker in $\partial_w x_i^{(4)}$, i.e.
\beq\label{eqn:partial x_i^3}
\partial x_i^{(3)} = \bigcup_{(k, d) \in \partial_w x_i^{(4)}} \partial w_{k,d}^{(3)},
\eeq
where $\partial w_{k,d}^{(3)}=\partial w_k\bigcap A_d^{(3)}$ is the set of Phase 3 queries assigned to the worker node $(k,d)\in \partial_w x_i^{(4)}$.

\item Next, each query in $\partial x_i^{(3)}$ selects $d$ labels from $[m]$ uniformly at random, where $d$ is sampled from the degree distribution $\{\Phi_d\}$. Let us define the set of label nodes selected by $\partial x_i^{(3)}$ as
\beq\label{eqn:partial2 x_i^3}
\partial^2 x_i^{(3)} := \set{ \bigcup_{j \in \partial x_i^{(3)}} \partial y_j }.
\eeq
We will prove that at most one query in $\partial x_i^{(3)}$ selects a label in $\partial^2 \tilde x_i^{(4)} \cup \partial^4 \tilde x_i^{(4)}$. This good event implies that we should delete at most one query in $\partial x_i^{(3)}$ to remove any bad pair or bad query made by the following three cases. First, if any query in $\partial x_i^{(3)}$ selects a label in $\partial^2 x_i^{(4)}$, the query forms a bad pair with a query in $\partial x_i^{(4)}$. We remove all such bad pairs by removing the corresponding query in $\partial x_i^{(3)}$ and thus make $\partial^2 x_i^{(4)} \cap \partial^2 x_i^{(3)} = \emptyset$. Second, if any query in $\partial x_i^{(3)}$ selects a label in $\partial^4 x_i^{(4)}$, a bad query is created in $\partial^3 x_i^{(4)}$ that selects a label from both $\partial^2 x_i^{(4)}$ and $\partial^2 x_i^{(3)}$. Third, if any query in $\partial x_i^{(3)}$ selects a label in $\partial^2 \tilde x_i^{(4)} \setminus \partial^2 x_i^{(4)}$ or in $\partial^4 \tilde x_i^{(4)} \setminus \partial^4 x_i^{(4)}$ that is connected to a semi-bad pair or a semi-bad query in $\partial^3 \tilde x_i^{(4)}$, the semi-bad pair or the semi-bad query can turn back to a bad pair or a bad query in $G_i$. For both the second and third case, we remove the corresponding query in $\partial x_i^{(3)}$. Note that in these cases, it suffices to remove only one query in $\partial x_i^{(3)}$ regardless of the badness of the bad query.  
We can then guarantee that there is no Phase 2 query that selects a label from both $\partial^2 x_i^{(4)}$ and $\partial^2 x_i^{(3)}$, since no Phase 2 query other than $\partial^3 \tilde x_i^{(4)}$ can select a label from $\partial^2 x_i^{(4)}$.
We also prove that there is at most one bad pair in $\partial x_i^{(3)}$ with probability $1 - o(1/m)$. Hence, in total we remove at most $2$ query nodes from $\partial x_i^{(3)}$ in this step.

\item In this step, we assign Phase 2 queries to $\partial^2 x_i^{(3)}$. Since we have already finished assignment of the queries in $\partial^3 \tilde x_i^{(4)}$ in step 4), here we consider only the queries in $A^{(2)} \setminus \partial^3 \tilde x_i^{(4)}$. Again, we separate the queries in $A^{(2)} \setminus \partial^3 \tilde x_i^{(4)}$ into two sets depending on whether a query selects at least one label node in $\partial^2 x_i^{(3)}$ or not, and define $\partial^3 \tilde x_i^{(3)}$ to be the set of queries in $A^{(2)}\backslash\partial^3 \tilde x_i^{(4)}$ that select at least one label node in $\partial^2 x_i^{(3)}$, i.e.,
\beq\label{eqn:partial3 x_i^3}
\partial^3 \tilde x_i^{(3)} := \left\{\bigcup_{i' \in \partial^2 x_i^{(3)}} \partial x_{i'} \right\}\bigcap\left\{ A^{(2)} \backslash \partial^3 \tilde x_i^{(4)}\right\}.
\eeq
If there exists a label in $\partial^2 x_i^{(3)}$ that also belongs to $\partial^2 \tilde x_i^{(4)}$ or $\partial^4 \tilde x_i^{(4)}$ (there is at most one such label {as stated in step 8)}), this label is connected to some query in $\partial^3 \tilde x_i^{(4)}$.
Thus, some of the Phase 2 queries that are connected to $\partial^2 x_i^{(3)}$ may not be contained in $\partial^3 \tilde x_i^{(3)}$, and we define
\beq\label{eqn:partial3 x_i^3_2}
\partial^3 x_i^{(3)} := \left\{\bigcup_{i' \in \partial^2 x_i^{(3)}} \partial x_{i'} \right\}\bigcap A^{(2)}
\eeq
as the set of Phase 2 queries that select at least one label node in $\partial^2 x_i^{(3)}$. Note that the query-to-label assignment for the queries in $\partial^3 x_i^{(3)} \setminus \partial^3 \tilde x_i^{(3)}$ has already done in step 4).

\item We then make each query in $\partial^3 \tilde x_i^{(3)}$ select one label randomly  from $\partial^2 x_i^{(3)} \setminus \partial^2 \tilde x_i^{(4)}$. We have to exclude the labels in $\partial^2 \tilde x_i^{(4)}$, since any query that selects $\partial^2 \tilde x_i^{(4)}$ is already in $\partial^3 \tilde{x}_i^{(4)}$ and $\partial^3 \tilde{x}_i^{(3)} \cap \partial^3 \tilde{x}_i^{(4)}=\emptyset$ by definition. 
Lastly, each query in $\partial^3 \tilde x_i^{(3)} $ with degree $d$ selects the $d-1$ labels uniformly at random from the remaining nodes in $[m]$ except $\partial^2 \tilde x_i^{(4)}$. At most one query in $\partial^3 \tilde x_i^{(3)}$ selects a label in $\partial^4 \tilde x_i^{(4)} \setminus \partial^2 x_i^{(3)}$ (the set difference accounts for the case where there is a label in $\partial^2 x_i^{(3)}$ that also belongs to $\partial^4 \tilde x_i^{(4)}$), and by removing the corresponding query in $\partial x_i^{(3)}$ if necessary, we can exclude the cases where a query in $\partial^3 \tilde x_i^{(3)}$ forms a bad pair with a query in $\partial^3 x_i^{(4)}$ or with a query in $\partial^3 x_i^{(3)} \setminus \partial^3 \tilde x_i^{(3)}$. We also prove that there is at most one bad pair and at most one bad query with badness equal to two in $\partial^3 \tilde x_i^{(3)}$, and there is no bad query with badness larger than two. Thus, in this step, we remove at most three queries in $\partial x_i^{(3)}$, and in total at most five queries are removed from $\partial x_i^{(3)}$ to remove all the cycles. Figure~\ref{fig:inf_graph}-(d) shows an example of the inference graph and the nodes to be eliminated to remove the loops.

\end{enumerate}

\medskip

\begin{algorithm}[tb]
	\caption{Phase 0 of Algorithm \ref{algorithm2}}
	\label{algorithm3}

\begin{algorithmic}[1]
\STATE If there is a bad pair such that both of the queries are in $\partial x_i^{(4)}$, remove one of the query in the bad pair.
\STATE If there is a bad query in $\partial^3 x_i^{(4)}$ that is connected to two label nodes in $\partial^2 x_i^{(4)}$, remove a node in $\partial x_i^{(4)}$ that is the parent of one of the two nodes.  If there is a bad query in $\partial^3 x_i^{(4)}$ with badness larger than two, claim an error.
\STATE If there is a bad pair such that both of the queries are in $\partial^3 x_i^{(4)}$, remove a node in $\partial x_i^{(4)}$ that is the parent of one of the two query nodes.
\STATE If there is a bad pair such that one query is in $\partial x_i^{(4)}$ and the other is in $\partial x_i^{(3)}$, remove the query in $\partial x_i^{(3)}$ in the bad pair.
\STATE If there is a bad query in $\partial^3 x_i^{(4)}$ that is connected to two label nodes each in $\partial^2 x_i^{(4)}$ and $\partial^2 x_i^{(3)}$, remove the node in $\partial x_i^{(3)}$ that is the parent of the node in $\partial^2 x_i^{(3)}$.
\STATE If there is a bad pair such that both of the queries are in $\partial x_i^{(3)}$, remove one of the query in the bad pair.
\STATE If there is a bad pair such that one query is in $\partial^3 x_i^{(4)}$ and the other is in $\partial^3 x_i^{(3)}$, remove a node in $\partial x_i^{(3)}$ that is the parent of the query node in $\partial^3 x_i^{(3)}$.
\STATE If there is a bad query in $\partial^3 x_i^{(3)}$ that is connected to two label nodes in $\partial^2 x_i^{(3)}$, remove a node in $\partial x_i^{(3)}$ that is the parent of one of the two nodes.  If there is a bad query in $\partial^3 x_i^{(3)}$ with badness larger than two, claim an error.
\STATE If there is a bad pair such that both of the queries are in $\partial^3 x_i^{(3)}$, remove the query in $\partial x_i^{(3)}$ that is the parent of one of the two query nodes. 
\STATE Claim an error if we have to remove more than three queries in $\partial x_i^{(4)}$ or more than five queries in $\partial x_i^{(3)}$ from the above steps. 
\end{algorithmic}
\end{algorithm}

With the discussions made above, we can now state Phase 0 of the Algorithm~\ref{algorithm2} in an explicit way as in Algorithm~\ref{algorithm3}. The first three steps of Algorithm \ref{algorithm3} are related to step 5), the fourth to the sixth steps are related to step 8), and the seventh to the ninth steps are related to step 10). We claim that Algorithm \ref{algorithm3} succeeds with probability $1 - o(1/m)$.

\begin{lemma}\label{lem:app:pfLemmagood1}
By removing at most three queries in $\partial x_i^{(4)}$ and at most five queries in $\partial x_{i}^{(3)}$ as described in Algorithm \ref{algorithm3}, all the bad pairs and the bad queries in $G_i$ are removed with probability $1-o(1/m)$, and $G_i$ becomes a tree.
\end{lemma}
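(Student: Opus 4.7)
The plan is to combine high-probability bounds on the sizes of the node sets introduced in steps 1)--10) with a union-bound argument over the $O(1)$ different types of collision events (semi-bad pairs and semi-bad queries of various badness) that can create a cycle in $G_i$, and then to check by direct case analysis that Algorithm~\ref{algorithm3} breaks every such cycle. The overall skeleton is: (i) control sizes, (ii) control collision probabilities, (iii) verify removal rules cover all cycles.

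First, using standard Chernoff bounds on Poisson/Binomial sums, I would show that with probability $1-o(1/m)$ the following hold simultaneously: $|\partial x_i^{(4)}| = O(\log m)$, since each query in $A^{(4)}$ is connected to $x_i$ independently with probability $\bard/m$ and $|A^{(4)}| = \Theta(m\log m)$; consequently $|\partial^2 \tilde x_i^{(4)}| = O(\log m)$, $|\partial^3 \tilde x_i^{(4)}| = O(\log^2 m/\log\log m)$, and $|\partial^4 \tilde x_i^{(4)}| = O(\log^2 m/\log\log m)$; and, using $|\partial_w x_i^{(4)}| \le |\partial x_i^{(4)}|$ together with the size of $A^{(3)}$, also $|\partial x_i^{(3)}| = O((\log m)^2 \log\log m)$ and $|\partial^2 x_i^{(3)}| = O((\log m)^2 \log\log m)$. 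Condition on these bounds for the rest of the argument; the total number of nodes touched in $G_i$ is thus at most $\mathrm{poly}(\log m)$.

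Next, I would enumerate the cycle-creating events and bound each probability. Every such event has the form ``two independently drawn subsets of $[m]$ share a label'' (for a semi-bad pair) or ``a single uniformly drawn subset of $[m]$ meets a prescribed small set in $\geq 2$ elements'' (for a semi-bad query of badness $\geq 2$). Because each label is chosen uniformly from a pool of size $\Theta(m)$ and each candidate set has polylogarithmic size, for every fixed type of event the probability of $\ge 2$ simultaneous occurrences, or of any badness-$\ge 3$ semi-bad query, is at most $\mathrm{poly}(\log m)/m^2 = o(1/m)$. The relevant event types form a finite list indexed by where the two colliding queries live (both in $\partial x_i^{(4)}$, both in $\partial^3 \tilde x_i^{(4)}$, one in $\partial x_i^{(4)}$ and one in $\partial x_i^{(3)}$, one in $\partial^3 \tilde x_i^{(4)}$ and one in $\partial^3 \tilde x_i^{(3)}$, and so on), corresponding exactly to the nine rules of Algorithm~\ref{algorithm3}. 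A union bound then gives that, with probability $1-o(1/m)$, each enumerated collision occurs at most once and no bad query has badness above two.

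Finally, on this good event I would check by case-by-case inspection that the nine rules of Algorithm~\ref{algorithm3} remove at least one query from each cycle: each bad pair is killed by deleting one of its two queries, and each badness-two bad query is killed by deleting either itself or its unique parent in $\partial x_i^{(4)}$ or $\partial x_i^{(3)}$. Tallying rules 1--3 charges at most three deletions to $\partial x_i^{(4)}$, and rules 4--9 charge at most five deletions to $\partial x_i^{(3)}$, matching the thresholds in the algorithm, so step 10) of Algorithm~\ref{algorithm3} does not trigger. The resulting graph is connected (by construction, rooted at $\hat x_i^{(4)}$) and acyclic, hence a tree. The main obstacle I anticipate is the case analysis in this last step, in particular the hybrid events connecting $\partial x_i^{(3)}$ to the Phase-4 subtree through labels in $\partial^2 \tilde x_i^{(4)} \setminus \partial^2 x_i^{(4)}$ or $\partial^4 \tilde x_i^{(4)} \setminus \partial^4 x_i^{(4)}$, which can silently revive semi-bad structures from earlier steps; each such possibility must be matched to exactly one rule of Algorithm~\ref{algorithm3} without double counting, and this is where bookkeeping is most delicate.
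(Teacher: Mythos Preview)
Your proposal is correct and follows essentially the same approach as the paper's proof: the paper also first controls the sizes of all the level sets via Chernoff bounds (its events $S_1,S_3,S_5,S_8$), then bounds each collision type separately by a union-bound computation of the form $\mathrm{poly}(\log m)/m^2$ (its events $S_2,S_4,S_6,S_7,S_9,S_{10}$), and finally reads off that Algorithm~\ref{algorithm3} removes every cycle. One small point worth making explicit: in your tally, six rules (4--9) are allotted only five removals from $\partial x_i^{(3)}$; in the paper this works because the event ``some query in $\partial x_i^{(3)}$ hits $\partial^2\tilde x_i^{(4)}\cup\partial^4\tilde x_i^{(4)}$'' (covering both rules~4 and~5) is shown to occur at most once \emph{in total}, not once per rule, so rules~4 and~5 together cost at most one deletion---exactly the ``hybrid'' bookkeeping you correctly flag as the delicate step.
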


\begin{IEEEproof}
The proof of this lemma is provided in Appendix~\ref{app:pfLemmagood1}.
\end{IEEEproof}

The tree structure of $G_i$ implies the following four independence results (or good events) that we will use in the analysis of error probability.

\begin{enumerate}[label=(\roman*)]
\item For each $i^\prime \in \partial^2  x_i^{(4)} \cup \partial^2  x_i^{(3)}$, the messages $\{ m^{(2)}_{j \to i^\prime}\}_{j \in \partial x_{i^\prime} \cap A^{(2)}}$, which are used to generate $\hat{x}_{i'}^{(2)}$, are independent.

\item The estimators $\{\hat x_{i^\prime}^{(2)} \}_{i^\prime \in \partial^2  x_i^{(3)} \cup \partial^2  x_i^{(4)}}$ are independent.

\item For each $(k, d) \in \partial_w x_i^{(4)}$, the messages $\{E_{j}^{(3)}\}_{j \in \partial w_{k,d}^{(3)}}$, which are used to generate $\hat \epsilon_{k,d,i}$, are independent.

\item The messages $\{ m_{j \to i}^{(4)}\}_{j \in \partial x_i^{(4)}}$ and the estimators $\{ \hat \epsilon_{k,d,i} \}_{(k,d) \in \partial_w x_i^{(4)}}$, which are used to generate the final estimate $\hat{x}_i^{(4)}$,  are all independent.
\end{enumerate}


We move on to the second type of good events that are required to generate accurate estimates $\{\hat{x}_i^{(2)}\}$, $\{\hat \epsilon_{k,d,i}\}$ and $\hat{x}_i^{(4)}$. We focus on controlling the number of three types of nodes, defined as {\it good labels},  {\it perfect queries}, and {\it good queries}.

\begin{definition}\label{def:perfectq} A label node $i' \in \partial^2 \tilde{x}_i^{(4)}\bigcup  \partial^2 x_i^{(3)}$ is called a {\it good label} if $|\partial x_{i'}\cap A^{(2)}|$, the number of queries in $A^{(2)}$ that have selected $x_{i^\prime}$,  is $\Theta\left(\frac{\log m}{\log\log m}\right)$. We call a query $j\in \partial x_i^{(4)}$ a perfect query if all the $D-1$ label nodes in $\partial \tilde y_j$ (before selecting the actual $d-1$ neighboring label nodes $\partial y_j\backslash \{i\}$ from $\partial \tilde y_j$) are good labels and it is not a parent node of any semi-bad pair or semi-bad query. We also call a query $j\in \partial x_i^{(3)}$ a good query if all the $d$-neighboring label nodes $\partial y_j $ are good labels.
\end{definition}

The set of good events for the accuracy of the estimator $\hat{x}_i^{(4)}$ is as below.
\begin{enumerate}[label=(\roman*)]\setcounter{enumi}{4}
\item The number of Phase 4 queries $A^{(4)}$ connected to $x_i$ is $|\partial x_i^{(4)}|=\Theta(\log m)$, and the number of perfect queries among $\partial x_i^{(4)}$ is at least $|\partial x_i^{(4)}|-C_{11}\log\log m - 3$ for some constant $C_{11}>0$. 
\item For each worker $(k,d)\in \partial_w x_i^{(4)}$, the number of Phase 3 queries assigned to $(k,d)$ is $|\partial w_{k,d}^{(3)}|=\Theta((\log m)(\log\log m))$, and there are at least $|\partial w_{k,d}^{(3)}|-C_{12}\log\log m - 6$ good queries in $\partial w_{k,d}^{(3)}$ for some constant $C_{12}>0$. 
\end{enumerate}

Note that the average numbers of $|\partial x_i^{(4)}|$ and $|\partial w_{k,d}^{(3)}|$ are $\Theta(\log m)$ and $\Theta((\log m)(\log\log m))$, respectively, and under good events (\romannumeral 5) and (\romannumeral 6) almost all the queries in $|\partial x_i^{(4)}|$ and $|\partial w_{k,d}^{(3)}|$ are perfect/good queries, respectively. We now claim that the intersection of the above two good events also holds with high probability.
\begin{lemma}\label{lem:app:pfLemmagood2}
The intersection of good events (\romannumeral 5)--(\romannumeral 6) holds with probability $1-o(1/m)$.
\end{lemma}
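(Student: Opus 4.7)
The plan is to establish both good events by combining standard Chernoff-type concentration for binomial degree counts with a ``binomial-coefficient-times-probability'' union bound that exploits the $\Theta(\log\log m)$ slack allowed in the failure count. First I would use the multiplicative Chernoff bound to control the basic binomial degrees $|\partial x_i^{(4)}|$, $|\partial w_{k,d}^{(3)}|$ for all $(k,d)$, and $|\partial x_{i'}\cap A^{(2)}|$ for each fixed label $i'$; their means are $\Theta(\log m)$, $\Theta((\log m)(\log\log m))$, and $\Theta((\log m)/\log\log m)$ respectively. The first two concentrations each fail with probability $m^{-\Omega(1)}$ per event, and a union bound over the $wD=o(m/\log\log m)$ worker-degree pairs keeps the total failure probability at $o(1/m)$. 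The third concentration, by contrast, gives only a probability $m^{-c/\log\log m}$ that a single label fails to be good, which is crucially not $o(1/m)$ and cannot be directly union-bounded over the $m$ labels.

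Next I would bound the number of non-perfect queries in $\partial x_i^{(4)}$ to establish (v). Condition on $L:=|\partial x_i^{(4)}|=\Theta(\log m)$. A query $j\in\partial x_i^{(4)}$ fails to be perfect either because some label in $\partial \tilde y_j$ is not good, or because $j$ is a parent of a semi-bad configuration; by Lemma~\ref{lem:app:pfLemmagood1} the latter accounts for at most three queries with probability $1-o(1/m)$, which is absorbed by the constant $-3$ in the statement. For the former, setting $p:=(D-1)m^{-c/\log\log m}$ and noting that the label sets $\partial\tilde y_j$ across different $j\in\partial x_i^{(4)}$ are pairwise disjoint except on the $o(1/m)$ semi-bad event already handled, a subset union bound gives
\[
\pr\Big[\text{at least } k \text{ queries in } \partial x_i^{(4)} \text{ have a non-good label}\Big]\leq\binom{L}{k}p^k.
\]
Choosing $k=C_{11}\log\log m$ and using $\binom{L}{k}\leq L^k=(\log m)^{C_{11}\log\log m}=\exp(O((\log\log m)^2))=m^{o(1)}$ together with $p^k=m^{-cC_{11}}$, the right-hand side is $m^{-cC_{11}+o(1)}$, which is $o(1/m)$ once $C_{11}$ is a sufficiently large constant.

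An entirely parallel argument proves (vi). Conditional on $|\partial w_{k,d}^{(3)}|=\Theta((\log m)(\log\log m))$ for a fixed $(k,d)\in\partial_w x_i^{(4)}$, a query in $\partial w_{k,d}^{(3)}$ is non-good with probability at most $D\cdot m^{-c/\log\log m}$, and the same binomial-coefficient-times-probability estimate shows that the number of non-good queries exceeds $C_{12}\log\log m$ with probability at most $m^{-cC_{12}+o(1)}=o(1/(m\log m))$ for $C_{12}$ sufficiently large. A union bound over the $|\partial_w x_i^{(4)}|\leq L=O(\log m)$ worker-degree pairs lifts this to $o(1/m)$, and the slack ``$-6$'' accommodates the at most five queries removed from $\partial x_i^{(3)}$ by Algorithm~\ref{algorithm3}. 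Intersecting all failure events via a final union bound yields (v)$\cap$(vi) with probability $1-o(1/m)$.

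The main obstacle is precisely the tension identified in the first step: because $|A^{(2)}|=m(\log m/\log\log m)$, each label is selected by only $\Theta((\log m)/\log\log m)$ Phase-2 queries on average, so the ``good label'' event for an individual label holds with probability only $1-m^{-c/\log\log m}$, which is not $o(1/m)$ and precludes a direct label-by-label union bound. The resolution is never to union-bound over individual labels but rather to use the tolerance of $\Theta(\log\log m)$ failures among the $\Theta(\log m)$ (or $\Theta((\log m)(\log\log m))$) candidate queries, via the binomial-coefficient bound above, which keeps the product $\binom{L}{k}p^k$ polynomially small in $m$ while $p$ itself decays only subpolynomially.
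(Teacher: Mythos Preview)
Your overall strategy is right and matches the paper's: the individual ``good label'' event holds only with probability $1-m^{-c/\log\log m}$, so one cannot union bound over labels, and the rescue is the $\Theta(\log\log m)$ slack combined with a $\binom{L}{k}p^k$-type estimate. But the step where you write
\[
\Pr\bigl[\text{at least }k\text{ queries have a non-good label}\bigr]\le \binom{L}{k}p^{k}
\]
has a genuine gap. You justify the factorization $p^{k}$ by saying the label sets $\partial\tilde y_j$ are pairwise disjoint, but disjointness of the label sets does \emph{not} make the events ``label $a$ is not good'' independent across distinct $a$: all of these events are functions of the same random Phase-2 query assignment. In fact the counts $\bigl(|\partial x_{i'}\cap A^{(2)}|\bigr)_{i'}$ are dependent, and without a correlation inequality the product bound need not hold.

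The paper closes exactly this gap by a different decomposition. Instead of bounding non-perfect queries in $\partial x_i^{(4)}$ directly, it bounds the number of non-good \emph{labels} in $\partial^2\tilde x_i^{(4)}$ (Lemma~\ref{good11}). It models the first label-selection of each query in $\partial^3\tilde x_i^{(4)}$ as a balls-and-bins process with $|\partial^3\tilde x_i^{(4)}|$ balls and $|\partial^2\tilde x_i^{(4)}|$ bins, invokes the negative association of bin occupancies, and uses
\[
\Pr\bigl[X_{i'}<t\text{ for all }i'\in I\bigr]\le\prod_{i'\in I}\Pr\bigl[X_{i'}<t\bigr]
\]
to obtain the product bound legitimately. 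Then, because (under the no-semi-bad-pair event) each label in $\partial^2\tilde x_i^{(4)}$ is hit by exactly one query in $\partial x_i^{(4)}$, at most $C_{11}\log\log m$ non-good labels translate to at most $C_{11}\log\log m$ non-perfect queries. The argument for (\romannumeral 6) is identical with $\partial^2 x_i^{(3)}$ and $\partial^3\tilde x_i^{(3)}$ in place of $\partial^2\tilde x_i^{(4)}$ and $\partial^3\tilde x_i^{(4)}$. Your approach becomes correct once you replace the ``disjointness $\Rightarrow$ independence'' step by an explicit negative-association argument for the Phase-2 degree counts; without it, the $p^{k}$ factorization is unjustified.
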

\begin{IEEEproof}
The proof of this lemma is provided in Appendix~\ref{app:pfLemmagood2}. 
\end{IEEEproof}

\subsection{Proof of Theorem~\ref{thm:thm2}: Error Analysis under Good Events}\label{sec:pfthm2good}
We prove Theorem~\ref{thm:thm2} conditioned on the intersection of good events (\romannumeral 1)--(\romannumeral 6) defined in the previous section, which occurs with probability $1-o(1/m)$ by Lemma \ref{lem:app:pfLemmagood1} and \ref{lem:app:pfLemmagood2}. Therefore, once we prove $\Pr{\hat{x}_i^{(4)} \neq x_i\big|\text{good events}} = o(1/m)$ for all $i \in [m]$, then it implies $\Pr{\hat{x}_i^{(4)} \neq x_i} = o(1/m)$ for all $i \in [m]$ and $\Pr{{\hat{\vec{x}}^{(4)}} \neq \vec{x}}\to 0$ as $m\to\infty$ by union bound. 

To prove Theorem~\ref{thm:thm2}, we first state Lemma~\ref{lem:1}--\ref{lem:3}, each of which describes the accuracy of the estimates $\{\hat{x}_{i''}^{(1)}\},\{\hat{x}_{i'}^{(2)}\},\{\hat{\epsilon}_{k,d,i}\}$, respectively, after Phase 1--3 of Algorithm~\ref{algorithm2}, respectively. 
In proving lemmas, we assume that we have removed at most three queries in $\partial x_i^{(4)}$ and at most five queries in $\partial x_{i}^{(3)}$ and obtained the independence as stated in Lemma \ref{lem:app:pfLemmagood1}.
\begin{lemma}\label{lem:1}
After Phase 1 of Algorithm~\ref{algorithm2}, where we use total $n^{(1)}=m$ number of degree-1 queries to have an initial estimate on the labels, the detection of the labels $\{x_{i''}\}$ is guaranteed with $\{\hat{x}_{i''}^{(1)}\}$ in~\eqref{eqn:x1}, i.e., 
\beq
p^{(1)} := \Pr{\hat{x}_{i''}^{(1)} \neq x_{i''}} < 1/2, \quad \forall i'' \in \partial^{4} \tilde{x}_i^{(4)}\cup  \partial^{4} x_i^{(3)}.
\eeq
\end{lemma}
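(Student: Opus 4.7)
The plan is to observe that Phase 1 of Algorithm~\ref{algorithm2} performs a trivial single-query decoding on each label, so the bound follows directly from the noise-model assumption $\epsilon_{k,d}\in[\lambda,1/2)$ and the uniform worker assignment described in Section~\ref{sec:model_qdesign}.

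First, I would recall that by the construction of the algorithm, the set $A^{(1)}$ consists of $m$ degree-1 queries, one asking each label $x_{i''}$ exactly once. Denote this unique query by $j_{i''}=\partial x_{i''}\cap A^{(1)}$, and let $k=w(j_{i''})\in[w]$ be the worker to whom it was assigned (chosen uniformly at random, independently of everything else, at the query-design stage). Under the Rademacher encoding used throughout the paper, the answer satisfies $y_{j_{i''}}=x_{i''}$ with probability $1-\epsilon_{k,1}$ and $y_{j_{i''}}=-x_{i''}$ with probability $\epsilon_{k,1}$, so $\hat x_{i''}^{(1)}=\sign(y_{j_{i''}})=y_{j_{i''}}$ and therefore
\begin{equation}
\Pr{\hat x_{i''}^{(1)}\neq x_{i''}\mid w(j_{i''})=k}=\epsilon_{k,1}.
\end{equation}

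Next, averaging over the uniform random worker assignment,
\begin{equation}
p^{(1)}=\Pr{\hat x_{i''}^{(1)}\neq x_{i''}}=\frac{1}{w}\sum_{k=1}^{w}\epsilon_{k,1}.
\end{equation}
By the model assumption $\epsilon_{k,1}\in[\lambda,1/2)$ for every $k\in[w]$, the average is strictly less than $1/2$, which yields the claimed bound $p^{(1)}<1/2$ uniformly in $i''$. The statement holds in particular for all $i''\in\partial^{4}\tilde x_i^{(4)}\cup\partial^{4}x_i^{(3)}$, since the bound is the same for every label index.

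There is essentially no obstacle here: the lemma is an immediate consequence of the noise model and the fact that Phase 1 asks each label exactly once via a degree-1 query. The only care needed is to note that the worker assignment is independent across queries, which guarantees that the bound holds simultaneously (marginally) for each $i''$ appearing in the inference graph $G_i$, and in particular that no conditioning on neighboring queries in $G_i$ disturbs the estimate, as each Phase 1 query uses a fresh worker draw.
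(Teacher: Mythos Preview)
Your proof is correct and takes essentially the same approach as the paper, which simply remarks that the lemma is obvious from the assumption $\epsilon_{k,1}<1/2$ for all $k\in[w]$. You have merely written out explicitly the averaging over the uniform worker assignment that the paper leaves implicit.
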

We dropped the subscript $i''$ in $p^{(1)}$ since every label node is queried exactly once by the first $m$ degree-1 queries and thus the accuracy of the estimates is the same for all $i'' \in \partial^{4} \tilde{x}_i^{(4)}\cup  \partial^{4} x_i^{(3)}$. Since every answer is better than a random guess by the assumption that $\epsilon_{k,1}<1/2$ for all $k\in[w]$, this lemma is obvious.

\begin{lemma}\label{lem:2} 

After Phase 2 of Algorithm~\ref{algorithm2} where we use total $n^{(2)}=m\left(\frac{\log m}{\log \log m}\right)$ number of queries to generate the second estimates on labels, the weak recovery of the labels $\{x_{i'}\}$ is guaranteed with $\{\hat{x}_{i'}^{(2)}\}$ in~\eqref{eqn:x2} for good labels $i' \in \partial^2 \tilde x_i^{(4)}\bigcup  \partial^2 x_i^{(3)}$ such that $|\partial x_{i'}\cap A^{(2)}|=\Theta\left(\frac{\log m}{\log\log m}\right)$; for some $p^{(2)}=o(1/\log m)$,
\beq\label{eqn:bdp2}
\Pr{\hat{x}_i^{(2)} \neq x_i} \leq p^{(2)}=o(1/\log m)
\eeq 
for all the good labels $i' \in \partial^2 \tilde{x}_i^{(4)}\bigcup  \partial^2 x_i^{(3)}$.

\end{lemma}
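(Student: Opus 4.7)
\medskip

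\noindent\textbf{Proof sketch proposal for Lemma~\ref{lem:2}.}

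The plan is to control $\Pr[\hat{x}_{i'}^{(2)} \neq x_{i'}]$ by a standard concentration argument applied to the sum of messages in~\eqref{eqn:x2}, after first establishing that each individual message $m_{j\to i'}^{(2)}$ is correct with probability strictly better than $1/2$ by a constant margin. Throughout we condition on the good events of Section~\ref{sec:goodevents_state}, so that the messages $\{m_{j\to i'}^{(2)}\}_{j\in\partial x_{i'}\cap A^{(2)}}$ are mutually independent by good event (\romannumeral 1) and that $|\partial x_{i'}\cap A^{(2)}|=\Theta(\log m/\log\log m)$ by the definition of a good label.

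The key observation is the XOR identity: for a query $j\in\partial x_{i'}\cap A^{(2)}$ of degree $d$ assigned to worker $k$, the message $m_{j\to i'}^{(2)}=y_j\prod_{i''\in\partial y_j\setminus\{i'\}}\hat{x}_{i''}^{(1)}$ satisfies $m_{j\to i'}^{(2)}\cdot x_{i'}=-1$ iff an odd number of the following $d$ independent Bernoulli errors occur: the worker's error on $y_j$ (probability $\epsilon_{k,d}$) and the $d-1$ Phase 1 errors $\mathds{1}[\hat{x}_{i''}^{(1)}\neq x_{i''}]$ (each probability at most $p^{(1)}<1/2$ by Lemma~\ref{lem:1}, and independent of $y_j$ since the Phase 1 queries are disjoint from $A^{(2)}$). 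A direct computation of the generating function for the XOR of independent Bernoullis gives
\begin{equation*}
\E[m_{j\to i'}^{(2)}\cdot x_{i'}]=(1-2\epsilon_{k,d})(1-2p^{(1)})^{d-1}\;\geq\;(1-2\cdot\tfrac{1}{2}+2\lambda_0)^{D}=:\Delta>0,
\end{equation*}
where $\lambda_0>0$ is a constant lower bound on $\min\{\tfrac12-\epsilon_{k,d},\tfrac12-p^{(1)}\}$ obtained from the assumption $\epsilon_{k,d}\in[\lambda,1/2)$ and Lemma~\ref{lem:1}, and $D=\Theta(1)$ is the maximum query degree. Thus each $Z_j:=m_{j\to i'}^{(2)}\cdot x_{i'}\in\{-1,+1\}$ has $\E[Z_j]\geq\Delta$.

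With $N:=|\partial x_{i'}\cap A^{(2)}|=\Theta(\log m/\log\log m)$ and the $Z_j$'s independent, I would apply Hoeffding's inequality to the centered sum $\sum_{j}(Z_j-\E[Z_j])$:
\begin{equation*}
\Pr\!\Big[\hat{x}_{i'}^{(2)}\neq x_{i'}\Big]\;\leq\;\Pr\!\Big[\sum_{j}Z_j\leq 0\Big]\;\leq\;\exp\!\left(-\frac{(N\Delta)^2}{2\cdot 4 N}\right)=\exp\!\left(-\frac{\Delta^2 N}{8}\right).
\end{equation*}
Substituting $N=\Theta(\log m/\log\log m)$, this is $\exp(-c\log m/\log\log m)$ for a constant $c=\Delta^2/8\cdot\Theta(1)>0$, which is $o(1/\log m)$ since $c\log m/\log\log m-\log\log m\to\infty$. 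Setting $p^{(2)}:=\exp(-c\log m/\log\log m)$ completes the proof, and a union bound over the (polynomially many) good labels in $\partial^2\tilde{x}_i^{(4)}\cup\partial^2 x_i^{(3)}$, together with good event (\romannumeral 2) for cross-label independence, is straightforward.

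The main obstacle I anticipate is bookkeeping in the single-message error computation: one must carefully argue that the $d-1$ Phase 1 estimates entering a Phase 2 query are themselves mutually independent (and independent of $y_j$), which uses that Phase 1 uses a disjoint query set $A^{(1)}$ from $A^{(2)}$ and that, after Phase 0's loop removal, the $d-1$ label indices $\partial y_j\setminus\{i'\}$ are distinct label nodes each queried exactly once in Phase 1. Once this independence is secured, the Hoeffding step is routine; no worker-reliability estimation is needed here because the bound $q_j<\tfrac12$ holds uniformly.
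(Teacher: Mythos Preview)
Your proposal is correct and follows essentially the same approach as the paper: compute the single-message error probability via the XOR-of-independent-Bernoullis identity to get a uniform bound $q^{(2)}<1/2$, then apply a concentration inequality to the $\Theta(\log m/\log\log m)$ independent messages. The paper uses the Chernoff bound with optimized $t$ (yielding the Bhattacharyya factor $2\sqrt{q^{(2)}(1-q^{(2)})}$) where you use Hoeffding, but the two give the same $\exp(-c\log m/\log\log m)=o(1/\log m)$ conclusion; the final union bound over good labels you mention is not needed since the lemma only asserts a uniform per-label bound.
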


\begin{lemma}\label{lem:3} After Phase 3 of Algorithm~\ref{algorithm2} where we use $n^{(3)}=w (\log m)(\log \log m)$ number of queries to estimate reliability of workers, conditioned on good events  (\romannumeral 5)--(\romannumeral 6), for every $(k,d)\in\partial_w x_i^{(4)}$ the estimate $\hat{\epsilon}_{k,d,i}$ of the reliability of the $k$-th worker for the degree-$d$ query used for the estimation of $x_i$ satisfies
\beq\label{eqn:ehatkd3}
|\hat{\epsilon}_{k,d,i}-\epsilon_{k,d}|=O(1/(\log\log m)^{1/4})
\eeq
with probability at least $1-o(1/m)$, when the number of workers $w=o(m/ \log\log m)$.
\end{lemma}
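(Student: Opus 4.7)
The plan is to show, for each $(k,d) \in \partial_w x_i^{(4)}$, that the pre-truncation estimator $\tilde\epsilon_{k,d,i} := \tfrac{1}{N}\sum_{j \in \partial w_{k,d}^{(3)}} E_j^{(3)}$, where $N := |\partial w_{k,d}^{(3)}|$, concentrates around $\epsilon_{k,d}$ within $O((\log\log m)^{-1/4})$, and then to push this through the truncation and a union bound over $(k,d)$.

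First I would compute the mean of $E_j^{(3)}$ for a good query $j \in \partial w_{k,d}^{(3)}$. Writing $y_j = z_j \prod_{i' \in \partial y_j} x_{i'}$ with $z_j = -1$ (the worker error) of probability $\epsilon_{k,d}$, and letting $F_j$ denote the event $\prod_{i' \in \partial y_j} \hat{x}_{i'}^{(2)} \neq \prod_{i' \in \partial y_j} x_{i'}$, we have $E_j^{(3)} = 1$ exactly when $z_j$ and $F_j$ disagree. The noise $z_j$ is independent of $F_j$ (as $\{\hat{x}_{i'}^{(2)}\}$ are computed from $A^{(1)}$ and $A^{(2)}$, never touching the answer of $y_j \in A^{(3)}$), and because $j$ is good, every $i' \in \partial y_j$ is a good label with $\Pr{\hat{x}_{i'}^{(2)}\neq x_{i'}}\leq p^{(2)} = o(1/\log m)$ by Lemma~\ref{lem:2}. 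A union bound then gives $\Pr{F_j}\leq d\, p^{(2)} = o(1/\log m)$, and therefore $\Mean{E_j^{(3)}} = \epsilon_{k,d} + (1-2\epsilon_{k,d})\Pr{F_j} = \epsilon_{k,d} + o(1/\log m)$.

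Next, by good event~(vi), $N = \Theta((\log m)(\log\log m))$ and the number $N_g$ of good queries in $\partial w_{k,d}^{(3)}$ satisfies $N_g \geq N - C_{12}\log\log m - 6$. I would split $\tilde\epsilon_{k,d,i} = (S_g+S_b)/N$ into sums over good and non-good queries. The non-good portion is trivially $S_b/N \leq (C_{12}\log\log m + 6)/N = O(1/\log m)$. For $S_g$, good event~(iii) certifies that the collection $\{E_j^{(3)}\}_{j\text{ good}}$ is mutually independent---each bounded in $[0,1]$ with mean $\epsilon_{k,d} + o(1/\log m)$---so Hoeffding's inequality with deviation $t = (\log\log m)^{-1/4}$ gives
\[
\Pr{\,|S_g/N_g - \epsilon_{k,d}| > t + o(1/\log m)\,} \leq 2\exp(-2N_g t^2) = \exp\bigl(-\Omega((\log m)\sqrt{\log\log m})\bigr),
\]
which is $o(1/m^C)$ for every constant $C>0$. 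Combining with $N_g/N = 1 - O(1/\log m)$ and the bound on $S_b/N$ yields $|\tilde\epsilon_{k,d,i} - \epsilon_{k,d}| = O((\log\log m)^{-1/4})$ on an event of probability $1-o(1/m^C)$.

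Finally, since the true $\epsilon_{k,d}\in[\lambda,1/2)$ lies in the truncation window $[\lambda, 0.5]$, the projection step can only move the estimator closer to $\epsilon_{k,d}$, so $|\hat\epsilon_{k,d,i}-\epsilon_{k,d}|$ inherits the same bound. A union bound over $(k,d) \in \partial_w x_i^{(4)}$---whose cardinality is at most $|\partial x_i^{(4)}| = O(\log m)$ by good event~(v)---gives total failure probability $o(1/m)$. The hypothesis $w = o(m/\log\log m)$ is used only so that $|A^{(3)}| = w(\log m)(\log\log m) = o(m\log m)$, i.e., so $A^{(3)}$ is a negligible slice of $n$, which is what lets good event~(vi) supply $\Theta((\log m)(\log\log m))$ Phase-3 queries per worker in the first place. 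The main delicacy is the independence of the $E_j^{(3)}$'s required by Hoeffding; this ultimately rests on the tree structure of $G_i$ secured by Lemma~\ref{lem:app:pfLemmagood1}, which makes the Phase-2 subtrees supporting different good queries disjoint.
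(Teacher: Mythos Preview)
Your proposal is correct and follows essentially the same route as the paper: bound the bias of $E_j^{(3)}$ for good queries via Lemma~\ref{lem:2}, control the contribution of the $O(\log\log m)$ non-good queries using good event~(vi), invoke the independence from good event~(iii) to apply Hoeffding with deviation $(\log\log m)^{-1/4}$, and finish with a union bound over the at most $|\partial x_i^{(4)}|=O(\log m)$ pairs $(k,d)$. The only cosmetic difference is that the paper applies Hoeffding to the full sum $\sum_{j\in\partial w_{k,d}^{(3)}}E_j^{(3)}$ after bounding its expectation, whereas you split off the non-good terms first and apply Hoeffding only to the good part; also, the condition $w=o(m/\log\log m)$ is not actually needed for good event~(vi) itself (the mean $\Phi_d(\log m)(\log\log m)$ is independent of $w$), but only to keep $|A^{(3)}|=o(n)$---a point the paper makes as well.
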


The proofs of Lemmas \ref{lem:2}--\ref{lem:3} are provided at the end of this section.

Finally, we are ready to bound the error probability $\Pr{\hat{x}_{i}^{(4)} \neq x_i}$. We start from not conditioning any good events.
By conditioning the number $a=|\partial x_i^{(4)}|$ of queries in $A^{(4)}$ connected to $x_i$, the error probability can be written as
\beq\label{eqn:err_conda}
\begin{split}
\Pr{\hat{x}_{i}^{(4)} \neq x_i}
&=\sum_{a\in [n^{(4)}]} \Pr{|\partial x_i^{(4)}|=a} \Pr{\hat{x}_{i}^{(4)} \neq x_i\Big| |\partial x_i^{(4)}|=a} \\
&= \sum_{a\in [n^{(4)}]} {n^{(4)} \choose a} \left( \frac{\bar d}{m} \right)^{a} \left(1 - \frac{\bar d}{m} \right)^{n^{(4)} - a} \Pr{\hat{x}_{i}^{(4)} \neq x_i\Big| |\partial x_i^{(4)}|=a},
\end{split}
\eeq
since each query in $A^{(4)}$ is connected to $x_i$ independently with probability $\bard/m$. We first show that $|\partial x_i^{(4)}|$ is $\Theta(\log m)$ with probability $1-o(1/m)$. Since the total number of $A^{(4)}$ queries is $\Theta(m\log m)$ and the number of labels is $m$, the average number of queries connected to each label is $\Theta(\log m)$.
\newcounter{lemmagood}
\setcounter{lemmagood}{\thelemma}
\begin{lemma}\label{good1_1}
Let us define the event $S_1$ as
\begin{center}
$S_1$: $c_1 \log m < |\partial x_i^{(4)}| < C_1 \log m$
\end{center}
for some constants $C_1>c_1>0$. Then, we have $\Pr{S_1} \geq 1 - o(1/m)$.
\end{lemma}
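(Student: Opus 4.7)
The plan is to recognize $|\partial x_i^{(4)}|$ as a binomial random variable and apply Chernoff's concentration inequality.

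First I would compute the mean $\mu := \E[|\partial x_i^{(4)}|]$. By the query design in Section~\ref{sec:model_qdesign}, each of the $n^{(4)}$ queries in $A^{(4)}$ is connected to $x_i$ independently with probability $\bard/m$, so $|\partial x_i^{(4)}| \sim \Bin{n^{(4)}, \bard/m}$ and $\mu = n^{(4)}\bard/m$. Using the theorem's hypothesis $n = (1+\eta) m\log m/S$ where $S := \sum_{d,k}\frac{d\Phi_d}{w}(\sqrt{1-\epsilon_{k,d}}-\sqrt{\epsilon_{k,d}})^2$, together with $n^{(1)} = m$, $n^{(2)} = m\log m/\log\log m$, and the assumption $w = o(m/\log\log m)$ (which forces $n^{(3)} = w(\log m)(\log\log m) = o(m\log m)$), we obtain $n^{(4)} = n(1-o(1))$ and hence $\mu = \alpha\log m \cdot (1-o(1))$ with $\alpha := (1+\eta)\bard/S$. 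Because $(\sqrt{1-\epsilon_{k,d}}-\sqrt{\epsilon_{k,d}})^2 \leq 1$ for all $k,d$, we have $S \leq \bard$, so $\alpha \geq 1+\eta > 1$.

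Second, I would apply the standard Chernoff bound $\Pr{X \geq k},\Pr{X \leq k} \leq e^{-\mu}(e\mu/k)^k$ (valid for $k \geq \mu$ and $k \leq \mu$, respectively). Setting $k = C_1\log m$ gives an upper-tail exponent $-\alpha + C_1 - C_1\log(C_1/\alpha) + o(1)$, which tends to $-\infty$ as $C_1 \to \infty$; hence $\Pr{|\partial x_i^{(4)}| \geq C_1 \log m} = o(1/m)$ once $C_1$ is chosen large enough. Setting $k = c_1\log m$ gives a lower-tail exponent $-\alpha + c_1\log(e\alpha/c_1) + o(1)$; since $c_1\log(e\alpha/c_1) \to 0$ as $c_1 \to 0^+$ and $\alpha > 1$, we may pick $c_1 > 0$ small enough that this exponent is strictly less than $-1$, giving $\Pr{|\partial x_i^{(4)}| \leq c_1\log m} = o(1/m)$. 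A union bound over the two tails then yields $\Pr{S_1} \geq 1 - o(1/m)$.

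The only delicate point is the lower tail: since $\mu = \Theta(\log m)$, the Chernoff exponent is also of order $\log m$, and to obtain decay strictly faster than $1/m$ one genuinely needs $\alpha > 1$---this is exactly where the strict $(1+\eta)$ margin of Theorem~\ref{thm:thm1} is essential.
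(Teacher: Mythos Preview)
Your proposal is correct and follows essentially the same approach as the paper: identify $|\partial x_i^{(4)}|\sim\Bin{n^{(4)},\bard/m}$, use $(\sqrt{1-\epsilon_{k,d}}-\sqrt{\epsilon_{k,d}})^2<1$ to get the mean $\mu\geq(1+\eta)\log m$, and apply Chernoff bounds to both tails, with the lower tail relying crucially on the strict inequality $\alpha>1$. Your treatment is in fact slightly more careful than the paper's, which writes $n^{(4)}=(1+\eta)m\log m/S$ without explicitly subtracting off $n^{(1)}+n^{(2)}+n^{(3)}$ as you do.
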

Note that the event $S_1$ is a part of the good event  (\romannumeral 5), which will be proved in Appendix~\ref{app:pfLemmagood1}. 
By Lemma~\ref{good1_1}, we have 
\beq\label{eqn:err_conda1}
\begin{split}
\Pr{\hat{x}_{i}^{(4)} \neq x_i}&\leq \sum_{a\in S_1} {n^{(4)} \choose a} \left( \frac{\bar d}{m} \right)^{a} \left(1 - \frac{\bar d}{m} \right)^{n^{(4)} - a} \Pr{\hat{x}_{i}^{(4)} \neq x_i\Big| |\partial x_i^{(4)}|=a}+o(1/m).
\end{split}
\eeq

Next, we analyze $ \Pr{\hat{x}_{i}^{(4)} \neq x_i\Big| |\partial x_i^{(4)}|=a}$ for $ a\in S_1$. From this point, the error analysis is conditioned on the intersection of the good events (\romannumeral 1)--(\romannumeral 6), which hold with probability $1-o(1/m)$ by Lemma \ref{lem:app:pfLemmagood1} and \ref{lem:app:pfLemmagood2}.
Remind that Phase 4 estimate of $x_i$ is defined as $\hat{x}_i^{(4)} = \sign \Bigg( \sum_{j \in \partial x_{i}^{(4)}}  M_{j \to i}^{(4)} \Bigg)$, where $
M_{j \to i}^{(4)} = \log \bigg( \frac{1 - \hat{\epsilon}_{w(j),d(j),i}}{\hat{\epsilon}_{w(j),d(j),i}} \bigg) m_{j \to i}^{(4)}.
$
Conditioned on the sequence of good events, by Lemma~\ref{lem:3} we can have the estimates on worker reliabilities $\{\hat{\epsilon}_{k,d,i}\}$ satisfying $|\hat{\epsilon}_{k,d,i}-\epsilon_{k,d}|=O(1/(\log\log m)^{1/4})$ for every $(k,d)\in\partial_w x_i^{(4)}$, with probability $1-o(1/m)$. Let us define this event as 
\beq
\begin{split}
\mathcal{E}:=\{&{|\hat{\epsilon}_{k,d,i}-\epsilon_{k,d}|=O(1/(\log\log m)^{1/4})}\text{ for every }(k,d)\in\partial_w x_i^{(4)}\}.
\end{split}
\eeq
So, by conditioning on the event $\mathcal{E}$ and by using the independence of the messages $\{M_{j\to i}\}_{i\in \partial j\in\partial x_i^{(4)}}$ implied by the good event  (\romannumeral 4), we get
\beq
\begin{split}\label{eqn:xi4analysis_newMarkov}
 \Pr{\hat{x}_{i}^{(4)} \neq x_i\Big| |\partial x_i^{(4)}|=a}&= \Pr{\sum_{j \in \partial {x}_{i}^{(4)}} -x_{i} M_{j \to i}^{(4)} \geq 0}\\
 &=\prod_{(k,d)\in \partial_w x_i^{(4)}} \prod_{\{j \in \partial {x}_{i}^{(4)}:w(j)=k,d(j)=d\}} \Mean {e^{-tx_{i} M_{j \to i}^{(4)}}\Bigg |\mathcal{E}}+o(1/m).
\end{split}
\eeq
for some $t>0$. 

We next analyze $\Mean {e^{-tx_{i} M_{j \to i}^{(4)}}\Bigg |\mathcal{E}}$, depending on whether $j\in \partial {x}_{i}^{(4)}$ is a perfect query or not. Remind the definition of the perfect query in Def.~\ref{def:perfectq}. 
When we define $\partial \hat x_i^{(4)}$ as the set of perfect queries in $\partial x_i^{(4)}$ and let $\hat{a}:=|\partial\hat x_i^{(4)}|$, 
good event (\romannumeral 5)  asserts that the number of perfect queries in $\partial x_i^{(4)}$ is $\hat{a}\geq a-C_4 \log\log m-3$.
If $j\in \partial {x}_{i}^{(4)}$ with $w(j)=k$ and $d(j)=d$ is a perfect query, the probability that the message $m_{j\to i}^{(4)}$ is different from the true label $x_i$, denoted by $q_j^{(4)}:=\Pr{m_{j \to i}^{(4)} \neq x_i}$, is in the range $\left[\epsilon_{k,d}, \epsilon_{k,d} + D p^{(2)} \right]$ for $p^{(2)}=o(1/\log m)$ (this can be shown similar to~\eqref{step3_bound}). 
Thus, by taking $t=\frac{1}{2}$, we have
\beq
\begin{split}
\Mean {e^{-tx_{i} M_{j \to i}^{(4)}}\Bigg |\mathcal{E}}&= q_j^{(4)} \left(\frac{1 - \hat \epsilon_{k,d,i}}{\hat \epsilon_{k,d,i}} \right)^{1/2} + (1 - q_j^{(4)})\left(\frac{\hat \epsilon_{k,d,i}}{1 - \hat \epsilon_{k,d,i}} \right)^{1/2} \\
\label{mgf_bound}
&\leq 2 \sqrt{\epsilon_{k,d} (1 - \epsilon_{k,d})} + o(1 / (\log \log m)^{1/2}),
\end{split}
\eeq
where the last inequality is from the condition $|\hat{\epsilon}_{k,d,i}-\epsilon_{k,d}|=O(1/(\log\log m)^{1/4})$.
Thus, we can find $A_{k,d}\in (0,1)$ such that $\Mean {e^{-tx_{i} M_{j \to i}^{(4)}}\Bigg |\mathcal{E}}\leq A_{k,d}=2 \sqrt{\epsilon_{k,d} (1 - \epsilon_{k,d})} + o(1 / (\log \log m)^{1/2})$.
If $j$ is not a perfect query, i.e., $j\in  \partial {x}_{i}^{(4)}\backslash \partial \hat{x}_{i}^{(4)}$, on the other hand, we just bound $\Mean {e^{-tx_{i} M_{j \to i}^{(4)}}\Bigg |\mathcal{E}}$ by some large constant $C_{13}>0$. (Since $\hat\epsilon_{k,d,i}\in[\lambda,0.5]$ for some $\lambda>0$, we can always find such a constant $C_{13}>0$.)
When we define $s_{k,d}:=|\{j\in \partial \hat x_i^{(4)}:  w(j)=k , d(j)=d\}|$ as the number of perfect queries having degree $d$ and assigned to worker $k$, we have
\beq
\begin{split}\label{eqn:expskd}
&\prod_{(k,d)\in \partial_w x_i^{(4)}} \prod_{\{j \in \partial {x}_{i}^{(4)}:w(j)=k,d(j)=d\}} \Mean {e^{-tx_{i} M_{j \to i}^{(4)}}\Bigg |\mathcal{E}} \leq C_{13}^{C_4 \log \log m + 3} \prod_{(k,d)\in \partial_w x_i^{(4)}} A_{k,d}^{s_{k,d}},
\end{split}
\eeq
since the number of non-perfect queries in $\partial x_i^{(4)}$ is bounded above by $a-\hat{a}\leq C_{11} \log\log m + 3$.

We then analyze the probability that the number $s_{k,d}$ of degree-$d$ queries in $\partial \hat x_i^{(4)}$ assigned to a worker $k\in[w]$ is $a_{k,d}$ for each $(k,d)\in \partial_w x_i^{(4)}$. We can determine whether a query is a perfect query or not after step 4), before we get the actual degree of queries at step 5). Also, perfect queries cannot be removed from $G_i$, since they do not belong to semi-bad queries or semi-bad pairs. Thus, the degree distribution of perfect queries is $\{ \frac{d \Phi_d}{\bar d}\}$, and their worker distribution is also uniform over all workers. Conditioned on that $|\partial \hat x_i^{(4)}|=\hat{a}$, the probability that $(s_{1,1},\dots, s_{w,D})=(a_{1,1},\dots, a_{w,D})$ for $\sum_{k,d} a_{k,d}=\hat{a}$ is thus
\beq
\begin{split}\label{eqn:skdprob}
&\Pr{(s_{1,1},\dots, s_{w,D})=(a_{1,1},\dots, a_{w,D})\Big| |\partial \hat x_i^{(4)}|=\hat{a}}\\
&=\frac{(\hat a)!}{(a_{1,1})! \cdots (a_{w,D})!} \prod_{k,d=(1,1)}^{(w,D)} \left( \frac{d \Phi_d}{w \bar d} \right)^{a_{k,d}}.
\end{split}
\eeq
Therefore, from~\eqref{eqn:xi4analysis_newMarkov}, \eqref{eqn:expskd} and \eqref{eqn:skdprob}, we have
\beq
\begin{split}
&\Pr{\hat{x}_{i}^{(4)} \neq x_i\Big| |\partial x_i^{(4)}|=a}\\
&\leq C_{13}^{C_{11} \log \log m + 3 } \sum_{\{(a_{11},\dots, a_{Dw}): \sum_{k,d=(1,1)}^{(w,D)}a_{k,d} =\hat{a}\}} \frac{(\hat a)!}{(a_{1,d})! \cdots (a_{w,d})!} \prod_{k,d} \left( \frac{d \Phi_d}{w \bar d} \right)^{a_{k,d}} \prod_{k,d} A_{k,d}^{a_{k,d}} +o(1/m)\\
&= C_{13}^{C_{11} \log \log m + 3} \left(\sum_{k,d} \frac{d \Phi_d }{w \bar d} A_{k,d} \right)^{\hat a}+o(1/m)\\
&\leq C_{13}^{C_{11} \log \log m + 3}\left(\sum_{k,d} \frac{d \Phi_d }{w \bar d} A_{k,d} \right)^{a-C_{11}\log\log m - 3}+o(1/m),
\end{split}
\eeq
where the last inequality is from the good event (\romannumeral 5) that the number of perfect queries is $\hat{a}\geq a-C_{11} \log\log m - 3$.
Since we consider $a=\Theta(\log m)\in S_1$, we finally have
\beq
\Pr{\hat{x}_{i}^{(4)} \neq x_i\Big| |\partial x_i^{(4)}|=a}\leq \left(\sum_{k,d} \frac{d \Phi_d }{w \bar d} A_{k,d} \right)^{a(1-o(1))}+o(1/m).
\eeq

By plugging this bound into~\eqref{eqn:err_conda1}, we can bound $\Pr{\hat{x}_{i}^{(4)} \neq x_i}$ as $o(1/m)$ as follows,
\beq\label{eqn:xi4_err_final}
\begin{split}
\Pr{\hat{x}_{i}^{(4)} \neq x_i}&\leq \sum_{a\in S_1} {n^{(4)} \choose a} \left( \frac{\bar d}{m} \right)^{a} \left(1 - \frac{\bar d}{m} \right)^{n^{(4)} - a} \left(\sum_{k,d} \frac{d \Phi_d }{w \bar d} A_{k,d} \right)^{a(1-o(1))}+o(1/m)\\
&\leq \left(1 - \frac{\bar d}{m} + \frac{\bar d}{m} \left(\sum_{k,d} \frac{d \Phi_d }{w \bar d} A_{k,d} \right)^{(1 - o(1))} \right)^{n^{(4)}}+o(1/m) \\
&= \left(1 - \frac{\bar d}{m} \sum_{k,d} \frac{d \Phi_d }{w \bar d} (1- A_{k,d}) +  o({\bar d}/{m}) \right)^{n^{(4)}}+o(1/m) \\
&= \left(1 - \sum_{k,d} \frac{d \Phi_d }{mw} \left(\sqrt{\epsilon_{k,d}} - \sqrt{1 - \epsilon_{k,d}}\right)^2 +  o({\bar d}/{m}) \right)^{n^{(4)}}+o(1/m) \\
&\leq \Exp{-n^{(4)} \left(\sum_{k,d} \frac{d \Phi_d}{m w} \left( \sqrt{\epsilon_{k,d}} - \sqrt{1 - \epsilon_{k,d}} \right)^2 + o({\bar d}/{m})\right)} +o(1/m)\\
&\leq \Exp{ -\left( 1 + \frac{\eta}{2} \right) \log m} = o(1/m)
\end{split}
\eeq
where we used $n^{(4)}= (1 + \eta) \frac{ m\log m}{\sum_{d = 1}^D \sum_{k = 1}^w \frac{d\Phi_d}{w} ( \sqrt{1 - \epsilon_{k,d}}-\sqrt{\epsilon_{k,d}})^2}$ at the last inequality. 



\subsubsection{Proof of Lemma~\ref{lem:2}}\label{subsec:pflem2}
In this lemma, we prove the weak recovery of good labels $i'\in \partial^2 \tilde{x}_i^{(4)}\bigcup  \partial^2 x_i^{(3)}$ when $\hat{x}_{i'}^{(2)}$ receives messages $\{m^{(2)}_{j\to i'}: j\in \partial x_{i'}\cap A^{(2)}\}$ from $|\partial x_{i'}\cap A^{(2)}|=\Theta\left(\frac{\log m}{\log\log m}\right)$-number of queries. Let us first analyze the probability that the message $m_{j\to i'}^{(2)}\in\{1,-1\}$ is different from the true label $x_{i'}\in\{1,-1\}$.  Since $m_{j\to i'}^{(2)} = y_j \prod_{i^{\prime\prime} \in \partial y_j \backslash \{i'\}} \hat{x}_{i^{\prime\prime}}^{(1)}$, $m_{j\to i'}^{(2)}$ is different from $x_{i'}$ when the received answer $y_{j}$ is incorrect and there are even number of wrong estimates in $\{\hat{x}_{i^{\prime \prime}}^{(1)}\}$ for $i^{\prime \prime} \in \partial y_{j} \setminus \{i^\prime\},$ or when $y_{j}$ is correct and there are odd number of wrong estimates in $\{\hat{x}_{i^{\prime\prime}}^{(1)}\}$. Thus, for a query $y_j$ with $w(j) = k$ and $d(j) = d$, we have
\beq
\begin{split}\label{m2bound}
\Pr{m_{j\to i^\prime}^{(2)} \neq x_{i^\prime}}
&\leq \epsilon_{k,d}+(1-2\epsilon_{k,d})\Bigg[\sum_{\substack{l\in[0:d-1],\\l\text{ odd}}}{d-1\choose l} (p^{(1)})^{l}(1-p^{(1)})^{d-1-l}\Bigg] \\
&= \frac{1 - (1 - 2\epsilon_{k,d}) (1 - 2 p^{(1)})^{d-1}}{2} < \frac{1}{2},
\end{split}
\eeq
and we can find $q^{(2)} < \frac{1}{2}$ such that $\Pr{m_{j\to i^\prime}^{(2)} \neq x_{i^\prime}} < q^{(2)}$ for all $j \in \partial x_{i^\prime} \cap A^{(2)}$. 

From the Chernoff bound, we have
\begin{equation}\label{eqn:chrnf_x2}
\Pr{\hat{x}_{i^\prime}^{(2)} \neq x_{i^\prime}}
= \Pr{\sum_{j \in \partial x_{i^\prime} \cap A^{(2)}} -x_{i^\prime} m_{j \to i^\prime}^{(2)} \geq 0}
\leq \prod_{j \in \partial x_{i^\prime} \cap A^{(2)}} \Mean {e^{-tx_{i^\prime} m_{j \to i^\prime}^{(2)}}},
\end{equation}
for any $t > 0$, assuming the good event (\romannumeral 1) such that the messages in $\{m_{j\to i'}^{(2)}: j \in \partial x_{i^\prime} \cap A^{(2)} \}$ are independent. If we let $q_j^{(2)}:= \Pr{m_{j \to i^\prime}^{(2)} \neq x_{i^\prime}} < q^{(2)}<1/2$, we get
\begin{equation}
\Mean{e^{-tx_{i^\prime} m_{j \to i^\prime}^{(2)}}}
= q_j^{(2)} e^t + (1 - q_j^{(2)}) e^{-t}
< q^{(2)} e^t + (1 - q^{(2)}) e^{-t}.
\end{equation}
By taking  $t = \frac{1}{2} \ln \left(\frac{1 - q^{(2)}}{q^{(2)}}\right)> 0$, we have
\beq\label{eqn:bd_oneterm}
\Mean{e^{-tx_{i^\prime} m_{j \to i^\prime}^{(2)}}}
<2\sqrt{q^{(2)} (1-q^{(2)})}<1.
\eeq

From~\eqref{eqn:chrnf_x2} and~\eqref{eqn:bd_oneterm}, for a good labdel $x_{i'}$ having $\abs{\partial x_{i^\prime} \cap A^{(2)}} > c_4 \left(\frac{\log m}{\log \log m} \right)$ for some constant $c_4>0$, we can bound the estimation error of $\hat{x}_{i'}^{(2)}$ as
\begin{equation}
\Pr{\hat{x}_{i^\prime}^{(2)}\neq x_{i^\prime}}
< \Exp{\left(c_4 \left(\frac{\log m}{\log \log m} \right)  \right)\log \left(2\sqrt{q^{(2)} (1-q^{(2)})} \right) } = o(1 / \log m).
\end{equation}

\subsubsection{Proof of Lemma~\ref{lem:3}}

Next, we prove that the estimates on worker reliabilities $\{\epsilon_{k,d,i}\}$ for the $k$-th worker for a degree-$d$ query for any $(k,d) \in \partial_w  x_i^{(4)}$ are accurate as
\begin{equation}
\label{step3_accuracy}
\abs{\hat{\epsilon}_{k,d,i} - \epsilon_{k,d}} = O(1 / (\log \log m)^{1/4})
\end{equation}
with probability $1 - o(1/m)$. 

For a fixed $(k,d)\in \partial_w x_i^{(4)}$, we first show that $\epsilon_{k,d}$ is very close to the expectation of $\hat \epsilon_{k,d,i}$. 
Consider a query $y_j$ for some $j \in \partial  w_{k,d}^{(3)}$. 
Note that $E_{j}^{(3)} = \mathds{1} \bigg( y_j \neq \prod_{i' \in \partial y_j} \hat{x}_{i'}^{(2)} \bigg)$ equals 1 when the received answer $y_{j}$ is incorrect and there are even number of wrong estimates in $\{\hat{x}_{i^{\prime}}^{(2)}: i^{\prime} \in \partial y_{j}\}$, or when $y_{j}$ is correct and there are odd number of wrong estimates in $\{\hat{x}_{i^{\prime}}^{(2)}: i^{\prime} \in \partial y_{j}\}$. If $y_j$ is a good query such that all the connected labels $\{i'\in \partial y_j\}$ are good labels, by Lemma~\ref{lem:2} we have $\Pr{\hat{x}_{i'}^{(2)} \neq x_{i'}} \leq p^{(2)}=o(1/\log m)$ for all $i'\in \partial y_j$, and thus
\beq
\begin{split}
\label{step3_bound}
\epsilon_{k,d} \leq \Pr{E_{j}^{(3)} = 1}
&\leq \epsilon_{k,d}+(1-2\epsilon_{k,d})\Bigg[\sum_{\substack{l\in[0:d],\\l\text{ odd}}}{d\choose l} (p^{(2)})^{l}(1-p^{(2)})^{d-l}\Bigg] \\
&= \epsilon_{k,d} (1 - 2p^{(2)})^{d} + \frac{1 - (1 - 2p^{(2)})^{d}}{2} \\
&\leq \epsilon_{k,d} + d p^{(2)} \\
&\leq \epsilon_{k,d} + D p^{(2)}.
\end{split}
\eeq
If $y_j$ is not a good query for some $j \in \partial w_{k,d}^{(3)}$, we can use the trivial bound such that $\abs{\Pr{E_{j}^{(3)} = 1} - \epsilon_{k,d}} < 1$.
Conditioned on the good event (\romannumeral 6), we have $|\partial w_{k,d}^{(3)}|=\Theta((\log m)(\log\log m))$  and there are at least $(|\partial w_{k,d}^{(3)}|-C_{12}\log\log m-6)$-number of good queries in $\partial w_{k,d}^{(3)}$. Since $\E[\hat{\epsilon}_{k,d,i}] = \frac{\sum_{j \in \partial w_{k,d}^{(3)}} \E[E_{j }^{(3)} ]}{\abs{ \partial w_{k,d}^{(3)}}}=\frac{\sum_{j \in \partial w_{k,d}^{(3)}} \Pr{E_{j }^{(3)} =1 }}{\abs{ \partial w_{k,d}^{(3)}}}$,
we have
\beq
\abs{\epsilon_{k,d} - \mean \hat \epsilon_{k,d,i}} \leq \frac{\abs{\partial w_{k,d}^{(3)}} D p^{(2)} + C_{12} \log \log m + 6}{\abs{\partial w_{k,d}^{(3)}}} = o \left( \frac{1}{\log m}\right).
\eeq
We next show that $\hat \epsilon_{k,d,i}$ is close to its mean. Since $\{E_{j}^{(3)}\}_{j \in \partial w_{k,d}^{(3)}}$ are independent by good event (\romannumeral 3), using Hoeffding's inequality, we have
\beq
\Pr{\abs{\hat{\epsilon}_{k,d,i} - \mean \hat{\epsilon}_{k,d,i}} > \delta}
\leq e^{-2 \abs{\partial w_{k,d}^{(3)}}\delta^2}
\lesssim e^{-2 (\log m )(\log \log m)\delta^2},
\eeq
for any $\delta > 0$. If we take $\delta = 1 / (\log \log m)^{1/4}$ and use the union bound, we have
\beq
\Pr{\abs{\hat{\epsilon}_{k,d,i} - \mean \hat{\epsilon}_{k,d,i}} > \delta\text{ for some }(k,d) \in \partial_w  x_i ^{(4)}}
\lesssim \abs{\partial_w  x_i ^{(4)}} e^{-2 (\log m) (\log \log m)^{1/2}} = o(1/m)
\eeq
since $|\partial_w x_i^{(4)}|\leq |\partial x_i^{(4)}|$ and $ |\partial x_i^{(4)}|=\Theta(\log m)$ by the good event (\romannumeral 5).
Then,  from the triangle inequality we know that
\beq
\abs{\hat \epsilon_{k,d,i} - \epsilon_{k,d}}
\leq \abs{\hat \epsilon_{k,d,i} - \mean \hat{\epsilon}_{k,d,i}} + \abs{ \epsilon_{k,d} - \mean \hat{\epsilon}_{k,d,i}}
= \delta + o(1/ \log m)
= O(1 / (\log \log m)^{1/4})
\eeq
holds for any $(k, d) \in \partial_w x_i ^{(4)}$ with probability at least $1 - o(1/m)$. 

\section{Conclusions}\label{sec:con}

We considered binary classification of $m$ labels with XOR queries, where the query degree $d$ can be varying over queries and the error probability of the answer can change depending on the query degree as well as on the worker. We characterized the optimal number of queries required to reliably recover all the $m$ labels with high probability, and proposed an efficient inference algorithm that achieves this limit even without the knowledge of noise parameters. 
Simulations on synthetic data and real data show the effectiveness of the XOR queries and the proposed algorithm.

The problem considered here is an example of a more general planted constraint satisfaction problem (CSP), which has wide applications in clustering, community detection, and matrix/tensor completion problems. In these problems, intensive research is going on to bridge the gap between  information-theoretic limit and  computational limit, where the information-theoretic limit is determined by the required number of measurements to make the planted solution a unique solution with high probability while the computational limit is determined by the required number of measurements that allow a feasible algorithm in recovering the unique solution. 
In this work, we considered an example of the CSP in the context of binary classification with XOR queries and provided an inference algorithm achieving the exact information-theoretic limit even without the knowledge of noise parameters for measurements. 
One of the interesting future directions related to this work is to apply the algorithmic ideas from this work to possibly bridge the gap between the information-theoretic limit and the computational limit for other applications related to the planted CSP such as graph clustering or community detection.

%




\IEEEpeerreviewmaketitle

\appendices

\section{Proof of Theorem~\ref{thm:thm1}}\label{app:prThm1}

In the proof of Theorem~\ref{thm:thm1}, for simplicity in notation, we assume that the true label vector is $\vec{x}\in\{0,1\}^m$ instead of $\vec{x}\in\{-1,1\}^m$, i.e., we consider $2\vec{x}-1$ as the true label vector $\vec{x}$, and find the estimate $\hat{\vec{x}}\in\{0,1\}^m$. 
 In Theorem~\ref{thm:thm1}, we assume that total $n$ XOR queries are randomly and independently generated among which the fraction of degree-$d$ queries is $\Phi_d$ for $\sum_{d=1}^D \Phi_d=1$, and each query is randomly assigned to a worker $k\in[w]$ who provides an incorrect answer to a degree-$d$ query with probability $\epsilon_{k,d}<1/2$. 
When $\hat{\vec{x}}\in\{0,1\}^m$  is the optimal estimate (maximum likelihood estimate) of the label vector $\vec{x}\in\{0,1\}^m$  that minimizes the probability of error $\Pr{\hat{\vec{x}} \neq \vec{x}}$ using a known $\{\epsilon_{k,d}\}$, we assert that the strong recovery is possible, i.e., $\Pr{\hat{\vec{x}} \neq \vec{x}} \to 0$ as $m\to\infty$, if the number of queries  is
\begin{equation}
n \geq ( 1 + \eta) \frac{ m\log m}{\sum_{d = 1}^D \sum_{k = 1}^w \frac{d\Phi_d}{w} ( \sqrt{1 - \epsilon_{k,d}}-\sqrt{\epsilon_{k,d}})^2},
\end{equation}
and only if
\begin{equation}
n \geq ( 1 - \eta) \frac{ m\log m}{\sum_{d = 1}^D \sum_{k = 1}^w \frac{d\Phi_d}{w} ( \sqrt{1 - \epsilon_{k,d}}-\sqrt{\epsilon_{k,d}})^2}
\end{equation}
for any arbitrarily small constant $\eta > 0$. 


\subsection{Proof of Achievability}
Theorem~\ref{thm:thm1} is an extension of Theorem 2 in \cite{ahn2019community}, where a similar setup was analyzed for the case that the query degree $d$ is fixed over all queries and the noise parameter is fixed as $\epsilon_{k,d}=\epsilon$, $\forall k\in[w]$ and $\forall d\in[D]$. 
The main difference in our analysis compared to that in \cite{ahn2019community} occurs due to the fact that the ML decoding rule, which generates the optimal estimate $\hat{\vec{x}} $ that minimizes the error probability $\Pr{\hat{\vec{x}}\neq \vec{x}}$, should use weighted majority voting instead of majority voting when we aggregate answers from different workers with noise parameters $\{\epsilon_{k,d}\}$ that depend both on the worker reliability and query degree.

Denote by $\vec{0}$ the $m$-dimensional all-zero label vector. We assume that the ground truth label vector is $\vec{0}$ without loss of generality.
The ML decoding rule results in an error if there exists $\vec{v} \neq \vec{0}$ such that $\Pr{\vec{x} = \vec{v}|\vec{y} } \geq \Pr{ \vec{x} = \vec{0}|\vec{y} }$. Denote by $\hat{\vec{x}}(\vec{y})$ the estimated label vector of the ML decoding rule. For brevity, we just use $\hat{\vec{x}}=\hat{\vec{x}}(\vec{y})$. 

By using union bound, the error probability is bounded by
\beq
\Pr{ \hat{\vec{x}}\neq \vec{0}}\leq \sum_{\vec{v}\neq \vec{0}}\Pr{\hat{\vec{x}}=\vec{v}}=\sum_{s=1}^m {m\choose s}\Pr{\hat{\vec{x}}=\vec{v}|\|\vec{v}\|_1=s },
\eeq
where the last equality is due to the symmetry in the way we design queries. When $\vec{v}_s$ denote the length-$m$ vector whose first $s$ components are $1$ and the rest are $0$, it can be shown that $\Pr{\hat{\vec{x}}=\vec{v}|\|\vec{v}\|_1=s }=\Pr{\hat{\vec{x}}=\vec{v}_s}$ for all $\vec{v}$ with $\|\vec{v}\|_1=s $. 
Thus, the bound on the error probability can be written as
\beq\label{eqn:overallunion}
\Pr{ \hat{\vec{x}}\neq \vec{0}}\leq \sum_{s=1}^m {m\choose s}\Pr{\hat{\vec{x}}=\vec{v}_s}.
\eeq

Let $n_{k,d}$ denote the number of degree-$d$ queries assigned to the worker $k$, where the total number of queries is $n=\sum_{k,d} n_{k,d}$. 
We expand $\Pr{\hat{\vec{x}}=\vec{v}_s}$ conditioned on $\{n_{k,d}\}$.
\beq
\begin{split}\label{eqn:hatxvsallnkd}
\Pr{\hat{\vec{x}}=\vec{v}_s}&= \sum_{\{n_{k,d}\}} {n \choose n_{1,1} \cdots n_{w,D}}\left( \prod_{k,d} \left(\frac{\Phi_d}{w}\right)^{n_{k,d}}\right) \Pr{\hat{\vec{x}}=\vec{v}_s | \{ n_{k,d} \}}.
\end{split}
\eeq

We next analyze $\Pr{\hat{\vec{x}}=\vec{v}_s | \{ n_{k,d} \}}$. Let $l_{k,d}$ be the number of queries among $n_{k,d}$ of which the correct answers are different for $\vec{v_s}$ and $\vec{0}$. Since $\vec{v_s}$ and $\vec{0}$ are different only at the first $s$ components and each query randomly chooses $d$ components among $m$ and asks the XOR of the chosen $d$ components, the probability that a degree-$d$ query has a different answer for $\vec{v_s}$ and $\vec{0}$ is
\beq
p_{s,d}:=\frac{\sum_{\substack{1 \leq i \leq d\\ i\text{: odd}}} {s \choose i}{m - s \choose d - i}}{{m \choose d}}.
\eeq
Therefore, $l_{k,d}$ follows a binomial distribution, $B$($n_{k,d}, p_{s,d}$) for all $k\in[w]$. 
By using this, it can be shown that
\beq\label{eqn:hatxvsnkd}
\Pr{\hat{\vec{x}}=\vec{v}_s | \{ n_{k,d} \}}=\sum_{\{l_{k,d}\}} \left( \left(\prod_{k,d} {n_{k,d} \choose l_{k,d}} p_{s,d}^{l_{k,d}} (1 - p_{s,d})^{n_{k,d} - l_{k,d}}\right) \Pr{\hat{\vec{x}}=\vec{v}_s | \{ l_{k,d} \}} \right).
\eeq

Let us analyze $ \Pr{\hat{\vec{x}}=\vec{v}_s | \{ l_{k,d} \}}$. Since we assume that $\vec{0}$ is the ground truth vector, the correct answers for all XOR queries should be equal to 0. Let $r_{k,d}$ denote the number of queries among $l_{k,d}$ such that the received answer is equal to 1 .
Remind that the probability of receiving incorrect answer for degree-$d$ query assigned to worker $k$ is equal to $\epsilon_{k,d}$.
Given the answer vector $\vec{y}$, the ML decoder claims that $\hat{\vec{x}}=\vec{v}_s$ if
\begin{equation}
\prod_{k,d} (1 - \epsilon_{k,d})^{r_{k,d}} \epsilon_{k,d}^{l_{k,d} - r_{k,d}} \geq \prod_{k,d} (1 - \epsilon_{k,d})^{l_{k,d} - r_{k,d}} \epsilon_{k,d}^{r_{k,d}}.
\end{equation}
Applying $\log$ to both sides and rearranging terms, the above inequality can be written as
\begin{equation}
\label{eq4}
\sum_{k,d}\log\left(\frac{1 - \epsilon_{k,d}}{\epsilon_{k,d}}\right) r_{k,d} \geq \frac{1}{2} \sum_{k,d} \log\left(\frac{1 - \epsilon_{k,d}}{\epsilon_{k,d}}\right) l_{k,d},
\end{equation}
which is basically the weighted majority voting. Note that for any $t>0$,
\beq
\begin{split}
 \Pr{\hat{\vec{x}}=\vec{v}_s | \{ l_{k,d} \}}&=\Pr{\sum_{k,d}\log\left(\frac{1 - \epsilon_{k,d}}{\epsilon_{k,d}}\right) r_{k,d} \geq \frac{1}{2} \sum_{k,d} \log\left(\frac{1 - \epsilon_{k,d}}{\epsilon_{k,d}}\right) l_{k,d}\middle| \{ l_{k,d}\} }\\
 &= \Pr{e^{t\sum_{k,d}\log\left(\frac{1 - \epsilon_{k,d}}{\epsilon_{k,d}}\right) r_{k,d}} \geq e^{ \frac{1}{2}t \sum_{k,d} \log\left(\frac{1 - \epsilon_{k,d}}{\epsilon_{k,d}}\right) l_{k,d}}\middle| \{ l_{k,d}\} }\\
& \leq \frac{\prod_{k,d} \mean\bigg[ \exp \bigg(t \log\left(\frac{1 - \epsilon_{k,d}}{\epsilon_{k,d}}\right) r_{k,d} \bigg) \bigg] }{\prod_{k,d} \exp \bigg(  \frac{1}{2}t  \log\left(\frac{1 - \epsilon_{k,d}}{\epsilon_{k,d}}\right) l_{k,d}\bigg)}\\
&= \prod_{k,d} \bigg( \epsilon_{k,d} \left(\frac{1 - \epsilon_{k,d}}{\epsilon_{k,d}} \right)^{\frac{1}{2}t} + (1 - \epsilon_{k,d}) \left(\frac{1 - \epsilon_{k,d}}{\epsilon_{k,d}} \right)^{-\frac{1}{2}t} \bigg)^{l_{k,d}}
\end{split}
\eeq
where the inequality is by the Chernoff bound and the last equality holds since $r_{k,d}$ follows a binomial distribution, $B(l_{k,d}, \epsilon_{k,d})$.
By choosing $t=1$,
\beq\label{eqn:xhatvsCher}
 \Pr{\hat{\vec{x}}=\vec{v}_s | \{ l_{k,d} \}}\leq \prod_{k,d} \left( 2 \sqrt{\epsilon_{k,d} ( 1- \epsilon_{k,d}}) \right)^{l_{k,d}}.
\eeq
By using~\eqref{eqn:hatxvsnkd} and ~\eqref{eqn:xhatvsCher},
\beq
\begin{split}\label{eqn:hatxvsnkd2}
\Pr{\hat{\vec{x}}=\vec{v}_s | \{ n_{k,d} \}}&\leq \sum_{\{l_{k,d}\}} \bigg( \prod_{k,d} {n_{k,d} \choose l_{k,d}} p_{s,d}^{l_{k,d}} (1 - p_{s,d})^{n_{k,d} - l_{k,d}} \left( 2 \sqrt{\epsilon_{k,d} ( 1- \epsilon_{k,d}}) \right)^{l_{k,d}} \bigg) \\
&= \prod_{k,d} \bigg( \sum_{\{l_{k,d}\}} {n_{k,d} \choose l_{k,d}} p_{s,d}^{l_{k,d}} (1 - p_{s,d})^{n_{k,d} - l_{k,d}} \left( 2 \sqrt{\epsilon_{k,d} ( 1- \epsilon_{k,d}}) \right)^{l_{k,d}} \bigg) \\
&= \prod_{k,d} (1 - p_{s,k,d}^\prime )^{n_{k,d}},
\end{split}
\eeq
where $p_{s,k,d}^\prime = ( \sqrt{1 - \epsilon_{k,d}}-\sqrt{\epsilon_{k,d}} )^2 p_{s,d}$. 
Thus, by using~\eqref{eqn:hatxvsallnkd} and~\eqref{eqn:hatxvsnkd2}, we get
\beq
\begin{split}\label{eqn:hatxvsallnkd2}
\Pr{\hat{\vec{x}}=\vec{v}_s}& \leq \sum_{\{n_{k,d}\}} {n \choose n_{1,1} \cdots n_{w,D}} \prod_{k,d} \left(\frac{\Phi_d}{w}\right)^{n_{k,d}} (1 - p_{s,k,d}^\prime )^{n_{k,d}} \\
&= \bigg(1 - \sum_{k,d} \frac{\Phi_d}{w} p_{s,k,d}^\prime \bigg)^n \\
&\leq \exp \bigg( -n\sum_{k,d} \frac{\Phi_d}{w} p_{s,k,d}^\prime \bigg).
\end{split}
\eeq
Lastly, by using~\eqref{eqn:overallunion} and~\eqref{eqn:hatxvsallnkd2}
\beq\label{eqn:lastbd}
\Pr{ \hat{\vec{x}}\neq \vec{0}}\leq \sum_{s=1}^m {m\choose s} \exp \bigg( -n\sum_{k,d} \frac{\Phi_d}{w} p_{s,k,d}^\prime \bigg)
\eeq
for $p_{s,k,d}^\prime = (\sqrt{\epsilon_{k,d}} - \sqrt{1 - \epsilon_{k,d}})^2 p_{s,d}$.

We next use similar techniques used in the proof of Theorem 2 in \cite{ahn2019community} to show that the right-hand side of~\eqref{eqn:lastbd} goes to 0 when
\beq\label{eqn:nachiev}
n\geq  ( 1 + \eta) \frac{ m\log m}{\sum_{d = 1}^D \sum_{k = 1}^w \frac{d\Phi_d}{w} ( \sqrt{1 - \epsilon_{k,d}}-\sqrt{\epsilon_{k,d}})^2}
\eeq
for a small universal constant $\eta > 0$.

We divide the sum in the right-hand side of~\eqref{eqn:lastbd} into three regimes: $s \leq \delta m$, $\delta m < s \leq m - \delta m$, and $m - \delta m < s$, where $0 < \delta < 1$ is a small constant chosen later. First, we consider the case where $s \leq \delta m$. Note that $p_{s,d}$ is bounded below as
\begin{equation}
p_{s,d} = \frac{\sum_{\substack{1 \leq i \leq d\\ i\text{: odd}}} {s \choose i}{m - s \choose d - i}}{{m \choose d}} \geq \frac{{s \choose 1}{ m - s \choose d - 1}}{{m \choose d}} \geq \frac{{s \choose 1}{(1 - \delta)m \choose d - 1}}{{m \choose d}} \geq c_2 s \frac{d}{m},
\end{equation}
where $c_2 = (1 - 2 \delta)^D$. Thus, the summation over $s=1$ to $\delta m$ is bounded above by
\beq
\begin{split}\label{eqn:sumcase1}
\sum_{s=1}^{\delta m} {m \choose s} \exp \bigg( -n\sum_{k,d} \frac{\Phi_d}{w} p_{s,k,d}^\prime \bigg)
&\leq \sum_{s=1}^{\delta m} {m \choose s} \exp \bigg( -\frac{c_2 sn}{m} \sum_{k,d} \frac{d\Phi_d}{w} (\sqrt{\epsilon_{k,d}} - \sqrt{1 - \epsilon_{k,d}})^2 \bigg) \\
&\leq \sum_{s=1}^{\delta m} m^s \exp \left( - \left(1 + \frac{\eta}{2} \right) s \log m \right) \\
&= \sum_{s=1}^{\delta m} \exp\left( - \frac{\eta}{2} s \log m \right),
\end{split}
\eeq
where the last term goes to 0 for a sufficiently small $\delta$.

For the second case, we also bound $p_{s,d}$ using the first term as
\begin{equation}
p_{s,d} = \frac{\sum_{\substack{1 \leq i \leq d\\ i\text{: odd}}} {s \choose i}{m - s \choose d - i}}{{m \choose d}} \geq \frac{{s \choose 1}{ m - s \choose d - 1}}{{m \choose d}} \geq \frac{\delta m { \delta m \choose d - 1}}{{m \choose d}} \geq c_3 d,
\end{equation}
where $c_3 = \left(\frac{\delta}{2}\right)^D$. The summation over $s=\delta m$ to $m-\delta m$ goes to $0$ since
\beq
\begin{split}\label{eqn:sumcase2}
&\sum_{s=\delta m}^{m - \delta m} {m \choose s} \exp \bigg( -n\sum_{k,d} \frac{\Phi_d}{w} p_{s,k,d}^\prime \bigg)\\
&\leq \sum_{s=\delta m}^{m - \delta m} {m \choose s} \exp \bigg( -c_3 n \sum_{k,d} \frac{d\Phi_d}{w} (\sqrt{\epsilon_{k,d}} - \sqrt{1 - \epsilon_{k,d}})^2 \bigg) \\
&\leq \exp(m - (1 + \eta) c_3 m \log m ) \to 0.
\end{split}
\eeq
For the last case, we only use odd $d$'s to bound the summation. Using the last term, we have
\begin{equation}
p_{s,d} = \frac{\sum_{\substack{1 \leq i \leq d\\ i\text{: odd}}} {s \choose i}{m - s \choose d - i}}{{m \choose d}} \geq \frac{{s \choose d}{ m - s \choose 0}}{{m \choose d}} \geq \frac{{(1 - \delta) m \choose d}}{{m \choose d}} \geq c_4 d,
\end{equation}
where $c_4 = \frac{(1 - 2 \delta)^D}{D}$. The summation over $s=m-\delta m$ to $m$ goes to $0$ since
\beq
\begin{split}\label{eqn:sumcase3}
\sum_{s=m - \delta m}^{m} {m \choose s} &\exp \bigg( -n\sum_{k,d} \frac{\Phi_d}{w} p_{s,k,d}^\prime \bigg)
\leq \sum_{s=m - \delta m}^{m} {m \choose s} \exp \bigg( -n\sum_{k,d\text{ odd}} \frac{\Phi_d}{w} p_{s,k,d}^\prime \bigg) \\
&\leq \sum_{s=m - \delta m}^{m} {m \choose s} \exp \bigg( -c_4 n\sum_{k,d\text{ odd}} \frac{d \Phi_d}{w} (\sqrt{\epsilon_{k,d}} - \sqrt{1 - \epsilon_{k,d}})^2 \bigg) \\
&\leq \sum_{s=m - \delta m}^{m} {m \choose s} \exp(-(1 + \eta) c_1 c_4 m \log m) \\
&\leq \exp(m - (1 + \eta) c_1 c_4 m \log m) \to 0,
\end{split}
\eeq
where the third inequality holds since if $\Phi_d > 0$ for at least one odd $d$, we can a constant $c_1 > 0$ such that
\begin{equation}
\label{eq2}
\sum_{k, d\text{ odd}} \frac{d \Phi_d}{w} (\sqrt{1 - \epsilon_{k,d}}-\sqrt{\epsilon_{k,d}} )^2 \geq c_1 \sum_{k, d} \frac{d \Phi_d}{w} (\sqrt{1 - \epsilon_{k,d}}-\sqrt{\epsilon_{k,d}} )^2 .
\end{equation}
By combing~\eqref{eqn:sumcase1}, ~\eqref{eqn:sumcase2}, and~\eqref{eqn:sumcase3}, it can be shown that the upper bound on $\Pr{ \hat{\vec{x}}\neq \vec{0}}$ in~\eqref{eqn:lastbd} goes to $0$ with the number $n$ of queries  satisfying~\eqref{eqn:nachiev}.

\subsection{Proof of Converse}


We note that the converse result is similar to that in \cite{ahn2019community}, except that our result holds for any combination of degree $d$ queries and noise parameters $\{\epsilon_{k,d}\}$ while that in \cite{ahn2019community} considers the case of a fixed query degree $d$ and a fixed error probability $\epsilon_{k,d}=\epsilon$. However, the extension is not trivial due to the difference in query assignment model; in our model, we fix the number $n$ of total queries and randomly choose $d$ labels for each degree-$d$ query, while that in \cite{ahn2019community} independently samples every $n\choose d$ queries with a fixed probability $p>0$. Later in the proof we will clarify where the difficulty in the analysis of our model comes.

Let $\mathcal{E}_\eta$ be the event that the ground truth label vector $\vec{0}$ is more probable than any other $\vec{v}\neq 0$, i.e., $\Pr{\vec{x} = \vec{0}|\vec{y} } > \Pr{ \vec{x} = \vec{v}|\vec{y} }$ for all $\vec{v}\neq 0$ so that the ML decoding rule provides the correct estimate $\hat{\vec{x}}=\vec{0}$, where $n$ is given by 
\begin{equation}\label{eqn:nboundconverse}
n = (1 - \eta) \frac{m \log m}{\sum_{k,d} \frac{d \Phi_d}{w} (\sqrt{1 - \epsilon_{k,d}}-\sqrt{\epsilon_{k,d}} )^2}
\end{equation}
for $0<\eta < 1$. Our goal is to prove that $\Pr{\mathcal{E}_\eta}<1$ for any $\eta\in(0,1)$. Since $\Pr{\mathcal{E}_\eta}$ increases as $\eta$ decreases, once we prove that $\Pr{\mathcal{E}_\eta}<1$ for some $\eta>0$ it implies that $\Pr{\mathcal{E}_\eta'}<1$ for any $\eta'>\eta$.
Note that for an arbitrarily small $\eta>0$, there always exist $\zeta>0$ such that
\begin{equation}\label{eqn:nboundlower}
(1 + \zeta) \frac{m \log m}{\sum_d d \Phi_d} < (1 - \eta) \frac{m \log m}{\sum_{k,d} \frac{d \Phi_d}{w} (\sqrt{1 - \epsilon_{k,d}}-\sqrt{\epsilon_{k,d}} )^2}
\end{equation}
since $(\sqrt{1-\epsilon_{k,d}}-\sqrt{\epsilon_{k,d}})^2<1$. We will use this fact to prove $\Pr{\mathcal{E}_\eta}<1$ for an arbitrarily small $\eta>0$.

We next introduce two ``good events," related to query design and assignment, that occur with high probability and on whose intersection it can be shown that $\Pr{\mathcal{E}_\eta}<1$ for an arbitrarily small $\eta>0$. The first good event is about the number of items in $[m]$ that are not simultaneously selected by any query in $[n]$.
By Lemma 2 of \cite{ahn2019community}, with high probability there exist $r = \frac{m}{2 \log ^7 m}$ components of $\vec{x}$ that are not simultaneously contained in any query when the number of queries $n=O(m\log m)$ and the maximum query degree $D=\Theta(1)$. Denote this event by $\Delta_1$ and let such $r$ components be the first $r$ components of $\vec{x}$, i.e., $(x_1,\dots,x_r)$, without loss of generality. Then, one can bound $\Pr{\mathcal{E}_\eta}$ as 
\begin{equation}
\label{eq7}
\Pr{\mathcal{E}_\eta}-o(1) =\Pr{\mathcal{E}_\eta | \Delta_1} 
\end{equation}
Below, we always condition on $\Delta_1$ and write $\Pr{\mathcal{E}_\eta | \Delta_1}$ as $\Pr{\mathcal{E}_\eta}$ from brevity.

Let $\mathcal{E}_{\eta,i}$ be the event that $\vec{0}$ is more probable than $\vec{e}_i$, where $\vec{e}_i$ is the $m$-dimensional unit vector with its $i$-th component equal to 1. Then, we have the bound
\begin{equation}\label{eqn:errorintersect}
\Pr{\mathcal{E}_\eta} \leq \pr \bigg[ \bigcap_{i=1}^r \mathcal{E}_{\eta,i} \bigg].
\end{equation}

In the query assignment model in \cite{ahn2019community}, the events $\{\mathcal{E}_{\eta, i} \}_{i \in [r]}$ are mutually independent conditioned on $\Delta_1$, since the subsets of queries in $[n]$ that determine $\mathcal{E}_{\eta, i}$ for each $i\in [r]$ do not overlap and the sizes of the subsets are independently determined. Therefore, in \cite{ahn2019community}, it became $\pr \bigg[ \bigcap_{i=1}^r \mathcal{E}_{\eta,i} \bigg]=\prod_{i=1}^r \Pr{\mathcal{E}_{\eta,i}}$, which made the analysis simple. 
 However, in our model, $\{\mathcal{E}_{\eta, i} \}_{i \in [r]}$ are not anymore independent, since the number of total queries is fixed to $n$ and  the sizes of the subsets of queries that have selected the item $i$ for $i\in[r]$ are dependent on each other.  Let $n_i$ denote the size of the subset of queries that have selected item $i$. 
In our model, $\{\mathcal{E}_{\eta, i} \}_{i \in [r]}$ become independent conditioned on $\{n_i\}_{i \in [r]}$.
 
 To this end, we decompose the query assignment process conditioned on $\Delta_1$ into three steps as follows.
\begin{enumerate}
\item Each query selects one of the item in $[r]$ with probability $\bar \Psi = \sum_d \Psi_d$, where
\beq\label{eqn:psid}
\Psi_d = \frac{\Phi_d {m - r \choose d - 1}}{r {m - r \choose d - 1} + {m - r \choose d}} = \frac{d \Phi_d }{m + (d - 1)(r - 1)},
\eeq
and selects no item in $[r]$ with probability $1 - r \Psi$.
\item The degree of queries that selected an item in $[r]$ is drawn from the distribution $\left\{ \frac{\Psi_d}{\bar \Psi} \right\}$. Note that it is not $\{ \Phi_d \}$.
\item Each query that selected an item in $[r]$ selects the remaining items from $[m] \setminus [r]$.
\end{enumerate}
Since a query cannot select more than one items in $[r]$ conditioned on $\Delta_1$, $\{n_i \}_{i \in [r]}$ is determined after the first step.

The second good event is related to the number of queries that select each item $i\in[r]$. 
\begin{lemma}
\label{good2_converse}
Let us define the event $\Delta_2$ as
\begin{center}
$\Delta_2$: The number of queries that select $x_i$ is $c_1 \log m < n_i < C_1 \log m$ for all $i \in [r]$ and for some $C_1>c_1>0$.
\end{center}
Conditioned on $\Delta_1$, when we have total number of queries $n=\Theta(m\log m)$,  we have $\Pr{\Delta_2} \geq 1 - o(1)$.
\end{lemma}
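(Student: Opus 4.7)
My plan is to show that, conditioned on $\Delta_1$, each count $n_i$ is a binomial random variable with mean $\Theta(\log m)$, and then apply Chernoff concentration together with a union bound over the $r\leq m$ items in $[r]$.

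First, the three-step decomposition of the query-assignment process stated just before the lemma shows that, under $\Delta_1$, the identity $Z_j\in\{0,1,\ldots,r\}$ of the unique item of $[r]$ selected by query $j$ (with $Z_j=0$ if none is selected) is determined by the first step alone, independently across queries; and by symmetry across items in $[r]$ one has $\Pr{Z_j=i}=\bar\Psi$ for every $i\in[r]$. Hence $n_i=\sum_{j=1}^n \mathds{1}\{Z_j=i\}\sim\mathrm{Bin}(n,\bar\Psi)$.

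Second, I would compute the mean. From $\Psi_d=d\Phi_d/(m+(d-1)(r-1))$ with $r=m/(2\log^7 m)$ and $D=\Theta(1)$, we have $\bar\Psi=(\bar d/m)(1+o(1))$, so $\mu:=\E[n_i]=n\bar\Psi=\Theta(\log m)$. Write $\mu=A\log m$ with $A>0$ a constant determined by the hidden constant in $n=\Theta(m\log m)$.

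Third, I would apply the multiplicative Chernoff bound $\Pr{|n_i-\mu|>\delta\mu}\leq 2\exp(-\delta^2\mu/3)$ valid for $\delta\in(0,1)$. Choosing $\delta\in(0,1)$ so that $\delta^2 A/3>1$, and setting $c_1=(1-\delta)A$ and $C_1=(1+\delta)A$, one obtains $\Pr{n_i\notin(c_1\log m,C_1\log m)}=o(1/m)$, and a union bound over the $r\leq m$ items in $[r]$ yields $\Pr{\Delta_2^c\mid \Delta_1}=o(1)$. The main subtlety is choosing $c_1$ when $A$ is barely above $1$: the multiplicative Chernoff bound with a constant $\delta$ can be too weak, and I would instead use the sharper entropy form $\Pr{n_i\leq k}\leq \exp(-n\,\mathrm{KL}(k/n\,\Vert\,\bar\Psi))$; taking $c_1\to 0$ drives the exponent toward $\mu=A\log m$, and since the assumption $\epsilon_{k,d}\in[\lambda,1/2)$ forces the denominator $K:=\sum_{k,d}\frac{d\Phi_d}{w}(\sqrt{1-\epsilon_{k,d}}-\sqrt{\epsilon_{k,d}})^2<\bar d$ strictly, one has $A=(1-\eta)\bar d/K>1$ for sufficiently small $\eta$ in the converse setting of interest, which makes the refined bound suffice after union-bounding over $r$.
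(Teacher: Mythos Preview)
Your proposal is correct and follows essentially the same approach as the paper: identify that each $n_i\sim\mathrm{Bin}(n,\bar\Psi)$ with mean $\Theta(\log m)$, apply a Chernoff bound to get a per-item tail of $o(1/m)$, and union bound over the $r\leq m$ items. The paper's proof simply invokes the lower bound $n>(1+\zeta)\,m\log m/\bar d$ from \eqref{eqn:nboundlower} (so that the mean exceeds $(1+\zeta)\log m$) and then refers to the identical computation done for the event $S_1$ in Lemma~\ref{good1_1}, which uses the sharper Chernoff form of Lemma~\ref{ber} rather than the $\delta^2\mu/3$ version; your ``entropy form'' fallback is exactly that sharper bound, so the two arguments coincide. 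One small remark: to conclude $K<\bar d$ strictly you only need $\epsilon_{k,d}\in(0,1/2)$, not the lower bound $\lambda$.
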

\begin{proof} For each $i\in[r]$, $n_i$ is distributed by $\text{Bin}(n,\bar{\Psi})$ where $n>(1+\zeta)\frac{m\log m}{\sum_d d\Phi_d}$ for some $\zeta>0$ by ~\eqref{eqn:nboundlower}. Note that $\bar{\Psi}=\sum_d \Psi_d$ for $\Psi_d$ defined in~\eqref{eqn:psid} converges to $(\sum_d d \Phi_d)/m$ as $m\to \infty$ since $r=\frac{m}{2\log^7 m}$. Thus, the good event $\Delta_2$ can be proved in the exactly same way as the proof of the good event $S_1$, which is defined in Lemma~\ref{good1_1} and proved in Appendix~\ref{app:pfLemmagood1} by using the Chernoff bound. 
\end{proof}

From $\Pr{\Delta_2} \geq 1 - o(1)$, we have
\beq
\begin{split}
\pr \bigg[ \bigcap_{i=1}^r \mathcal{E}_{\eta,i} \bigg]
&\leq \sum_{\{n_i\}_{i \in [r]} \in \Delta_2} \Pr{\{n_i\}_{i \in [r]}} \pr \bigg[ \bigcap_{i=1}^r \mathcal{E}_{\eta,i} \bigg| \{n_i\}_{i \in [r]} \bigg] + o(1) \\
\label{Enisum}
&= \sum_{\{n_i\}_{i \in [r]}} \Pr{\{n_i\}_{i \in [r]}} \prod_{i=1}^r \Pr{ \mathcal{E}_{\eta,i} \middle | n_i} + o(1).
\end{split}
\eeq

Next, we will derive a lower bound on $\Pr{ \mathcal{E}_{\eta,i}^c \middle | n_i}$ to get an upper bound on $\Pr{ \mathcal{E}_{\eta,i} \middle | n_i}$. Let $i \in [r]$ be given. We define $n_{i,k,d}$ to be the number of queries among $[n_i]$ having degree $d$ and assigned to worker $k$. We denote the answer for the $j$th query among $[n_{i,k,d}]$ by $Y_{k,d}^j$, and let $X_{k,d}^j = \log \left(\frac{1 - \epsilon_{k,d}}{\epsilon_{k,d}}\right) Y_{k,d}^j$. Then, we can explicitly write $\Pr{ \mathcal{E}_{\eta,i}^c \middle | \{ n_{i,k,d} \}}$ as
\begin{equation}\label{Eibound}
\Pr{ \mathcal{E}_{\eta,i}^c \middle | \{ n_{i,k,d} \}} = \Pr{\sum_{k,d} \sum_{j=1}^{ n_{i,k,d}} X_{k,d}^j \geq \frac{1}{2} \sum_{k,d} \sum_{j=1}^{ n_{i,k,d}} \log \left(\frac{1 - \epsilon_{k,d}}{\epsilon_{k,d}}\right)}
\end{equation}
\begin{lemma}\label{lem:boundEzetaic}
Conditioned on $\Delta_2$, we can bound $\Pr{ \mathcal{E}_{\eta,i}^c \middle | \{ n_{i,k,d} \}} $ in~\eqref{Eibound} as
\beq\label{eqn:boundEzetaic}
\Pr{ \mathcal{E}_{\eta,i}^c \middle | \{ n_{i,k,d} \}} \geq c_2 e^{- \sqrt {\log m}} \prod_{k,d} \left( 2 \sqrt{\epsilon_{k,d} ( 1 - \epsilon_{k,d})}\right)^{n_{i,k,d}}.
\eeq
\end{lemma}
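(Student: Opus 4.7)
The lemma is the matching large-deviation lower bound to the Chernoff upper bound $B:=\prod_{k,d}(2\sqrt{\epsilon_{k,d}(1-\epsilon_{k,d})})^{n_{i,k,d}}$ that one obtains for the event $\{S\geq s\}$, where $S:=\sum_{k,d}\sum_{j=1}^{n_{i,k,d}}X_{k,d}^j$ and $s:=\tfrac12\sum_{k,d}n_{i,k,d}\log\tfrac{1-\epsilon_{k,d}}{\epsilon_{k,d}}$, by taking the exponential tilt parameter equal to $1$ (this is exactly the computation already performed in the achievability proof). The claim is that the upper bound is tight up to a sub-polynomial $e^{-\sqrt{\log m}}$ factor. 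I would prove this by the classical Cram\'er / Bahadur--Rao recipe: tilt the measure so that the rare event becomes typical and then use the central limit theorem in the tilted law.

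Define $\tilde P$ by $d\tilde P/dP\propto e^{S}$. Because $e^{X_{k,d}^j}=\bigl((1-\epsilon_{k,d})/\epsilon_{k,d}\bigr)^{Y_{k,d}^j}$, the tilt exactly cancels the Bernoulli$(\epsilon_{k,d})$ bias of $Y_{k,d}^j$: under $\tilde P$ every $Y_{k,d}^j$ is independent Bernoulli$(1/2)$, and therefore $\tilde E[X_{k,d}^j]=\tfrac12\log\tfrac{1-\epsilon_{k,d}}{\epsilon_{k,d}}$ and $\tilde E[S]=s$ exactly. A direct moment-generating-function calculation also gives $E[e^{S}]=\prod_{k,d}\bigl(2(1-\epsilon_{k,d})\bigr)^{n_{i,k,d}}$, so $e^{-s}E[e^{S}]=B$. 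Combining these with the change-of-measure identity and restricting the resulting tilted expectation to the window $[s,s+\sqrt{\log m}]$ yields
\begin{equation*}
\Pr[S\geq s]=B\cdot\tilde E\!\left[e^{-(S-s)}\mathbf 1\{S\geq s\}\right]\;\geq\;B\,e^{-\sqrt{\log m}}\,\tilde P\!\left[s\leq S\leq s+\sqrt{\log m}\right].
\end{equation*}

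It remains to lower bound $\tilde P[s\leq S\leq s+\sqrt{\log m}]$ by an absolute positive constant. Under $\tilde P$, $S$ is a sum of $n_i=\sum_{k,d}n_{i,k,d}$ independent bounded summands with mean $s$ and variance $\tilde\sigma^2=\tfrac14\sum_{k,d}n_{i,k,d}\bigl(\log\tfrac{1-\epsilon_{k,d}}{\epsilon_{k,d}}\bigr)^2$. The good event $\Delta_2$ forces $n_i=\Theta(\log m)$, and since each $\epsilon_{k,d}$ is a fixed constant in $(0,1/2)$ the log-weights are uniformly bounded above and below by positive constants; hence $\tilde\sigma^2=\Theta(\log m)$ and the window $[s,s+\sqrt{\log m}]$ has width $\Theta(\tilde\sigma)$. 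A non-i.i.d.\ Berry--Esseen bound is then applicable---its error term is $\sum\tilde E|X_{k,d}^j-\tilde\mu_{k,d}|^3/\tilde\sigma^3=O(1/\sqrt{\log m})\to 0$ thanks to the boundedness of the summands---and yields $\tilde P[0\leq S-s\leq\sqrt{\log m}]\geq\Phi(c)-\tfrac12$ for an explicit constant $c>0$. Setting the lemma's constant equal to this positive limit completes the proof.

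The main technical concern is the Berry--Esseen step for non-identically distributed and essentially two-point-supported summands, in particular handling degenerate configurations in which some $n_{i,k,d}$'s vanish. Boundedness of the $X_{k,d}^j$ (by a quantity depending only on $\lambda$ and $\max_{k,d}\epsilon_{k,d}<1/2$), together with the $\Delta_2$ lower bound $n_i\geq c_1\log m$, supplies the required third-moment-to-variance$^{3/2}$ estimate, and the positive constant obtained is uniform over all $\{n_{i,k,d}\}$ consistent with $\Delta_2$, exactly as the lemma requires.
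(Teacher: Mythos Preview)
Your proposal is correct and is essentially identical to the paper's proof: both tilt by $e^{S}$ so that each $Y_{k,d}^j$ becomes Bernoulli$(1/2)$ under the new measure, rewrite $\Pr[S\ge s]$ via the change-of-measure identity, restrict to the window $[s,s+\sqrt{\log m}]$, and invoke Berry--Esseen (using $n_i=\Theta(\log m)$ from $\Delta_2$) to lower bound the tilted window probability by a positive constant. Your discussion of the Berry--Esseen hypotheses for non-identically distributed bounded summands is in fact more explicit than the paper's, which simply asserts applicability.
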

\begin{proof} 
To get the lower bound on \eqref{Eibound}, we use the technique used in the proof of the Cramer-Chernoff bound~\cite{klenke2014probability}. Let us define a new random variable $Z_{k,d}^j$ that has the same support of $X_{k,d}^j$, but has a different probability distribution given by
\begin{equation}
\Pr{Z_{k,d}^j = z} = \frac{e^z \Pr{X_{k,d}^j = z}}{\Mean{e^{X_{k,d}^j}}}.
\end{equation}
In other words, $Z_{k,d}^j$ is a random variable such that
\begin{equation}
Z_{k,d}^j = \begin{cases}
\log \left(\frac{1 - \epsilon_{k,d}}{\epsilon_{k,d}}\right) &\text{w.p. } \frac{1}{2} \\
0 &\text{w.p. } \frac{1}{2}
\end{cases}.
\end{equation}
With $Z_{k,d}^j$, we can rewrite \eqref{Eibound} as
\begin{equation}\label{Eibound2}
\eqref{Eibound} = \Mean{\mathds{1}\left\{ \sum_{k,d,j} Z_{k,d}^j \geq  t_i \right\} \prod_{k,d,j} e^{-Z_{k,d}^j} \Mean{e^{X_{k,d}^j}}},
\end{equation}
where $t_i =\frac{1}{2} \sum_{k,d,j} \log \left(\frac{1 - \epsilon_{k,d}}{\epsilon_{k,d}}\right)$. The variance of each $Z_{k,d}^j$ is bounded and we have $\sum_{k,d} n_{i,k,d} = n_i = \Theta(\log m)$ terms in the summation $\sum_{k,d,j} Z_{k,d}^j$ by the second good event $\Delta_2$. Hence, by applying the Berry-Esseen theorem ~\cite{klenke2014probability} to the summation, we have
\begin{equation}
\Pr{t_i \leq \sum_{k,d,j} Z_{k,d}^j \leq t_i + \sqrt{\log m}} \geq c_2
\end{equation}
for some constant $0 < c_2 < 1$. With this result, we acquire a bound on \eqref{Eibound2} such that
\begin{align}
\eqref{Eibound2} &\geq \Mean{\mathds{1}\left\{ t_i \leq \sum_{k,d,j} Z_{k,d}^j \leq t_i + \sqrt{\log m} \right\} e^{-\sum_{k,d,j} Z_{k,d}^j} \prod_{k,d,j}  \Mean{e^{X_{k,d}^j}}} \\
&\geq \Pr{t_i \leq \sum_{k,d,j} Z_{k,d}^j \leq t_i + \sqrt{\log m}} e^{- \sqrt {\log m}} e^{-t_i} \prod_{k,d,j}  \Mean{e^{X_{k,d}^j}} \\
\label{Eibound3}
&\geq c_2 e^{- \sqrt {\log m}} e^{-t_i} \prod_{k,d,j} \Mean{e^{X_{k,d}^j}}
\end{align}
With the simple calculation $\Mean{e^{X_{k,d}^j}} = 2(1 - \epsilon_{k,d})$, \eqref{Eibound3} is equal to
\begin{equation}
c_2 e^{- \sqrt {\log m}} \prod_{k,d} \left( 2 \sqrt{\epsilon_{k,d} ( 1 - \epsilon_{k,d})}\right)^{n_{i,k,d}}.
\end{equation}

\end{proof}

By summing up $\Pr{ \mathcal{E}_{\eta,i}^c \middle | \{ n_{i,k,d} \}}$ for all $\{ n_{i,k,d} \}$ with the lower bound in~\eqref{eqn:boundEzetaic}, we get
\begin{align}
\Pr{\mathcal{E}_{\eta,i}^c \middle| n_i}
&= \sum_{\{ n_{i,k,d} \}} { n_i \choose n_{i,1,1} \cdots n_{i,w,D}} \prod_{k,d} \left( \frac{\Psi_d}{w \bar \Psi} \right) \Pr{\mathcal{E}_{\eta,i}^c \middle| \{\hat n_{i,k,d} \} } \\
&\geq c_2 e^{- \sqrt{\log m}} \sum_{\{ n_{i,k,d} \}} { n_i \choose n_{i,1,1} \cdots n_{i,w,D}} \prod_{k,d} \left( \frac{\Psi_d}{w \bar \Psi} \right) \left( 2 \sqrt{\epsilon_{k,d} ( 1 - \epsilon_{k,d})}\right)^{ n_{i,k,d}} \\
&= c_2 e^{- \sqrt{\log m}} \left( \sum_{k,d} \frac{\Psi_d}{w \bar \Psi} 2 \sqrt{\epsilon_{k,d} ( 1 - \epsilon_{k,d})} \right)^{ n_i}.
\end{align}
By using $c_1 \log m < n_i < C_1 \log m$ conditioned on $\Delta_2$, we obtain
\begin{equation}\label{Enibound}
\Pr{\mathcal{E}_{\eta,i} \middle| n_i} \leq 1- \left(c_2^{\frac{1}{c_1 \log m}} e^{-\frac{1}{c_1 \sqrt{\log m}}}\sum_{k,d} \frac{\Psi_d}{w \bar \Psi} 2 \sqrt{\epsilon_{k,d} ( 1 - \epsilon_{k,d})} \right)^{ n_i} = 1 - A^{n_i}.
\end{equation}
where
\beq\label{eqn:defA}
A:=c_2^{\frac{1}{c_1 \log m}} e^{-\frac{1}{c_1 \sqrt{\log m}}}\sum_{k,d} \frac{\Psi_d}{w \bar \Psi} 2 \sqrt{\epsilon_{k,d} ( 1 - \epsilon_{k,d})}.
\eeq

Plugging \eqref{Enibound} into the first term of \eqref{Enisum}, and using the symmetry, the first term of \eqref{Enisum} is bounded as
\begin{align}
\sum_{\{n_i\}_{i \in [r]} \in S_2} \Pr{\{n_i\}_{i \in [r]}} \prod_{i=1}^r \Pr{ \mathcal{E}_{\eta,i} \middle | n_i}
&\leq \sum_{\{n_i\}_{i \in [r]} \in S_2} \Pr{\{n_i\}_{i \in [r]}} \prod_{i=1}^r (1 - A^{n_i}) \\
&\leq \sum_{\{n_i\}_{i \in [r]}} \Pr{\{n_i\}_{i \in [r]}} \prod_{i=1}^r (1 - A^{n_i}) \\
\label{Sumrs}
&= \sum_{s=0}^r {r \choose s} (-1)^s \sum_{\{n_i\}_{i \in [r]}} \Pr{\{n_i\}_{i \in [r]}} A^{n_{1:s}}, 
\end{align}
where $n_{1:s} = n_1 + \cdots + n_s$. For given $s$, the second summation in \eqref{Sumrs} is calculated as
\begin{equation}
\sum_{\{n_i\}_{i \in [r]}} \Pr{\{n_i\}_{i \in [r]}} A^{n_{1:s}}=\sum_{\{n_i\}_{i \in [r]}} {n \choose n_1 \cdots n_r \ (n - n_{1:r})} (\bar \Psi)^{n_{1:r}} (1 - r\bar \Psi)^{n - n_{1:r}} A^{n_{1:s}}
= (1 - s\bar{\Psi} (1 - A))^n,
\end{equation}
and finally we have the following upper bound on $\Pr{\mathcal{E}_\eta}$
\begin{equation}\label{Rsbound}
\Pr{\mathcal{E}_\eta} \leq \sum_{s=0}^r (-1)^s (1 - s B)^n,
\end{equation}
where $B = \bar \Psi (1 - A)$. Lemma below shows that the right-hand side of~\eqref{Rsbound} is less than 1 (it actually converges to 0 for $n$ in~\eqref{eqn:nboundconverse}), and this completes the proof of converse. 
\begin{lemma} For $B = \bar \Psi (1 - A)$ with $A$ in~\eqref{eqn:defA}, we have
\beq
 \sum_{s=0}^r (-1)^s (1 - s B)^n\to 0
\eeq
for $n$ in~\eqref{eqn:nboundconverse}.
\end{lemma}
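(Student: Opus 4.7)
The plan is to recognize the alternating sum (with the binomial coefficient $\binom{r}{s}$ that is inherited from~\eqref{Sumrs} but appears to be dropped in~\eqref{Rsbound}) as a coupon-collector probability and then apply the second moment method. Throughout I treat the target quantity as
$$T \;:=\; \sum_{s=0}^{r}\binom{r}{s}(-1)^s(1-sB)^n,$$
which is what actually bounds $\Pr{\mathcal{E}_\eta}$ after~\eqref{Sumrs}.

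\emph{Step 1: asymptotics of $B$.} From~\eqref{eqn:psid} and $r = m/(2\log^7 m) = o(m)$, I have $\bar\Psi = \sum_d \Psi_d = (1+o(1))\,\bar d/m$ and $\Psi_d/\bar\Psi = (1+o(1))\,d\Phi_d/\bar d$, where $\bar d = \sum_d d\Phi_d$. The prefactor in~\eqref{eqn:defA} satisfies $c_2^{1/(c_1\log m)}e^{-1/(c_1\sqrt{\log m})} = 1-o(1)$, so
$$A \;=\; (1-o(1))\,\sum_{k,d}\frac{d\Phi_d}{w\bar d}\cdot 2\sqrt{\epsilon_{k,d}(1-\epsilon_{k,d})}.$$
Using $1-2\sqrt{\epsilon(1-\epsilon)} = (\sqrt{1-\epsilon}-\sqrt{\epsilon})^2$, I get
$$B \;=\; \bar\Psi(1-A) \;=\; (1+o(1))\,\frac{1}{m}\sum_{k,d}\frac{d\Phi_d}{w}\bigl(\sqrt{1-\epsilon_{k,d}}-\sqrt{\epsilon_{k,d}}\bigr)^2,$$
and plugging in $n$ from~\eqref{eqn:nboundconverse} yields $nB = (1-\eta+o(1))\log m$, so in particular $(1-B)^n \geq m^{-(1-\eta/2)}$ for $m$ large.

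\emph{Step 2: probabilistic interpretation.} Consider $n$ i.i.d.\ balls thrown into $r+1$ ``boxes'' $\{0,1,\dots,r\}$, where each ball lands in box $i\in[r]$ with probability $B$ and in box $0$ with probability $1-rB$. For any $S\subseteq[r]$, the probability that all boxes in $S$ are empty equals $(1-|S|B)^n$. By inclusion–exclusion,
$$T \;=\; \sum_{s=0}^{r}\binom{r}{s}(-1)^s(1-sB)^n \;=\; \Pr{\text{every box in } [r] \text{ is nonempty}}.$$

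\emph{Step 3: second moment method.} Let $X_i$ be the indicator that box $i$ is empty and $Y=\sum_{i=1}^r X_i$. Then $E[Y]=r(1-B)^n$, and by Step~1 combined with $r=m/(2\log^7 m)$,
$$E[Y] \;\geq\; \frac{m^{\eta/2}}{2\log^7 m} \;\longrightarrow\; \infty.$$
For $i\neq j$, $E[X_iX_j] = (1-2B)^n \leq (1-B)^{2n}$ because $1-2B\leq(1-B)^2$, so $\mathrm{Cov}(X_i,X_j)\leq 0$. Hence $\mathrm{Var}(Y)\leq \sum_i\mathrm{Var}(X_i)\leq E[Y]$, and Chebyshev's inequality gives
$$T \;=\; \Pr{Y=0} \;\leq\; \frac{\mathrm{Var}(Y)}{(E[Y])^2} \;\leq\; \frac{1}{E[Y]} \;\longrightarrow\; 0.$$

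\emph{Main obstacle.} The delicate point is Step~1: one must combine the several $o(1)$ corrections (from $\bar\Psi$, from the ratios $\Psi_d/\bar\Psi$, and from the $c_2^{1/(c_1\log m)}$ and $e^{-1/(c_1\sqrt{\log m})}$ prefactors introduced in~\eqref{eqn:boundEzetaic}) and conclude a clean expansion $nB = (1-\eta+o(1))\log m$ with an error term small enough that $r(1-B)^n$ still grows like a positive power of $m$. Once that asymptotic is in hand, the combinatorial/probabilistic argument in Steps~2–3 is routine.
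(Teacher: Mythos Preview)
Your proposal is correct and follows essentially the same route as the paper: the inclusion--exclusion sum is interpreted as the probability that every one of $r$ symmetric bins is nonempty in a balls-into-bins model with per-bin probability $B$, and the second moment method (Chebyshev) is applied to the count of empty bins, together with the asymptotic $nB=(1-\eta+o(1))\log m$. The only cosmetic difference is that you bound $\mathrm{Var}(Y)\le \mathbb{E}[Y]$ via the negative correlation $(1-2B)^n\le(1-B)^{2n}$, whereas the paper computes $\mathbb{E}[W^2]$ explicitly and shows the cross-term ratio $\tfrac{(1-2B)^n}{(1-B)^{2n}}\to 1$; your shortcut is slightly cleaner but not a different idea. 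You are also right that the binomial coefficient $\binom{r}{s}$ belongs in the displayed sum.
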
 
\begin{proof}
The bound in \eqref{Rsbound} can be related to the balls into bins problem. Suppose we have $n$ balls and there are $r$ bins. For each throw of a ball, the probability that a bin receives a ball is $B$ for all $r$ bins, and the ball is thrown into a ``dummy'' bin with probability $1 - rB$. Then, from the inclusion--exclusion principle, \eqref{Rsbound} is the probability that all $r$ bins receive at least one ball. Let $I_i$ be the indicator that the $i$th bin receives at least one ball, and define $W = I_1 + \cdots + I_r$. Then, our goal is to prove that $\Pr{W = 0}$ is bounded away from $1$. We use the second moment method to prove it.
The expectation of $W$ and $W^2$ are calculated as
\begin{equation}
\mean W = r (1 - B)^n \quad\text{and}\quad \mean W^2 = r (1- B)^n + r(r-1) (1 - 2B)^n.
\end{equation}
From the second moment method, we have
\begin{equation}\label{Smbound}
\Pr{W = 0} \leq \frac{\var W}{(\mean W)^2} = \frac{\mean W^2}{(\mean W)^2} - 1 = \frac{1}{r (1 - B)^n} + \frac{r - 1}{r} \frac{(1 - 2B)^n}{(1 - B)^{2n}} - 1.
\end{equation}
We note that $mB$ approaches to $C = \sum_{k,d} \frac{d \Phi_d}{w} (\sqrt{1 - \epsilon_{k,d}}-\sqrt{\epsilon_{k,d}} )^2$ as $m$ goes to infinity. The second term of \eqref{Smbound} goes to $1$, since
\begin{equation}
\frac{(1 - 2B)^n}{(1 - B)^{2n}} = \left(1 - \frac{B^2}{(1-B)^2}\right)^n \leq \Exp{-n\frac{B^2}{(1-B)^2}} = \Exp{\Omega \left(-\frac{\log m}{m} \right)} \to 1.
\end{equation}
It remains to prove that the first term of \eqref{Smbound} goes to $0$. From the inequality $1-x \geq e^{-\frac{x}{1-x}}$, we have
\begin{equation}
\frac{1}{r (1 - B)^n} \leq \frac{1}{r} e^{n \frac{B}{1-B}} = \frac{\log^7 m}{m} e^{(1-\eta) \frac{m \log m}{C} \frac{B}{1-B}} \leq \Exp{-\frac{\eta}{2} \log m} \to 0,
\end{equation}
and it completes the proof.
\end{proof}

\section{Proof of the Lemma~\ref{lem:app:pfLemmagood1}}\label{app:pfLemmagood1}

In Section \ref{sec:goodevents_state}, we discussed the process of constructing $G_i$ and described a sequence of good events bounding the number of bad queries and bad pairs generated in each step. By proving that this sequence of good events occur with probability $1-o(1/m)$, Lemma \ref{lem:app:pfLemmagood1}, which state the independence of messages used in each phase of Algorithm~\ref{algorithm2}, can be proved. Therefore, 
in this section, we make the proof of Lemma \ref{lem:app:pfLemmagood1} complete by proving that the good events occur with probability $1-o(1/m)$. 
Since the number of bad queries and bad pairs at each level of the graph $G_i$ can depend on the number of nodes that appear in each level of the graph, we will additionally define and prove some good events to control the size of the sets at each level of $G_i$ defined in Section \ref{sec:goodevents_state}.


Before we move on to the proof, we first state the basic Chernoff bound on Bernoulli random variables that will be used repeatedly throughout the proofs of good events.
\begin{lemma}
\label{ber}
For a sequence of independent $n$ Bernoulli random variables $X_1, \cdots, X_n$ having mean $p$, we have
\begin{align*}
\Pr{\sum_{i=1}^n X_i \geq (1 + \delta) np} &\leq \left( \frac{e^\delta}{(1+\delta)^{1+ \delta}} \right)^{np}\text{ for any } \delta > 0,\\
\Pr{\sum_{i=1}^n X_i \leq (1 - \delta) np} &\leq \left( \frac{e^{-\delta}}{(1-\delta)^{1- \delta}} \right)^{np}\text{ for any } 0 < \delta < 1.
\end{align*}
\end{lemma}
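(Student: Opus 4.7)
The plan is to prove both tail bounds via the standard Cramer–Chernoff method, i.e., apply Markov's inequality to an exponential moment and optimize the free parameter. First I would fix $t>0$ and write, by Markov's inequality applied to $e^{tS}$ for $S=\sum_{i=1}^n X_i$,
\[
\pr\!\left[S \geq (1+\delta)np\right] \;=\; \pr\!\left[e^{tS} \geq e^{t(1+\delta)np}\right] \;\leq\; e^{-t(1+\delta)np}\,\mean\!\left[e^{tS}\right].
\]
Using the independence of the $X_i$'s, the moment-generating function factorizes as $\mean[e^{tS}] = \prod_{i=1}^n \mean[e^{tX_i}]$, and since each $X_i \in \{0,1\}$ with mean $p$, one has $\mean[e^{tX_i}] = 1 - p + p e^{t}$.

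Next I would apply the elementary inequality $1+x \leq e^x$ with $x = p(e^t-1)$ to obtain $1 - p + p e^{t} \leq \exp(p(e^t-1))$, which gives
\[
\pr\!\left[S \geq (1+\delta)np\right] \;\leq\; \exp\!\left(np(e^t - 1) - t(1+\delta)np\right).
\]
Optimizing the right-hand side over $t>0$ (differentiating in $t$) yields the stationary point $t = \log(1+\delta) > 0$; substituting back gives the claimed bound $\bigl(e^{\delta}/(1+\delta)^{1+\delta}\bigr)^{np}$. For the lower-tail inequality I would repeat the same argument but apply Markov to $e^{-tS}$ with $t>0$, obtaining
\[
\pr\!\left[S \leq (1-\delta)np\right] \;\leq\; \exp\!\left(np(e^{-t}-1) + t(1-\delta)np\right),
\]
and optimize by choosing $t = -\log(1-\delta) > 0$ (valid since $0<\delta<1$), which produces the matching expression $\bigl(e^{-\delta}/(1-\delta)^{1-\delta}\bigr)^{np}$.

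There is no real obstacle here; the lemma is the textbook multiplicative Chernoff bound for sums of independent Bernoullis and the argument is entirely computational. The only mildly delicate point is the validity of the optimization for the lower tail: one must verify that $t = -\log(1-\delta)$ is a minimizer in $(0,\infty)$, which follows from the strict convexity in $t$ of the exponent $np(e^{-t}-1) + t(1-\delta)np$ combined with the constraint $\delta \in (0,1)$ ensuring $1-\delta > 0$ so that $\log(1-\delta)$ is well-defined. Since this lemma is only used as a tool in subsequent concentration arguments (e.g., in Lemmas \ref{good1_1} and \ref{lem:app:pfLemmagood2}), a concise derivation along the above lines suffices.
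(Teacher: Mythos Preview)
Your proposal is correct and is the standard Cramer--Chernoff derivation of the multiplicative Chernoff bound. The paper itself does not supply a proof of this lemma: it is merely stated as a basic tool (``we first state the basic Chernoff bound on Bernoulli random variables that will be used repeatedly'') and invoked as a black box in the subsequent good-event lemmas, so your argument goes beyond what the paper provides.
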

Note that when $np=\omega( \log m)$, the upper bounds on the tail probability is both $o(1/m)$ if $\delta$ is a constant.



\subsection{Good Event Regarding Step 1) of Generating $G_i$}
We restate Lemma \ref{good1_1} for completeness.

\newcounter{lemmatemp}
\setcounter{lemmatemp}{\thelemma}
\setcounter{lemma}{\thelemmagood}

\begin{lemma}\label{good1_2}
Let us define the event $S_1$ as
\begin{center}
$S_1$: $c_1 \log m < |\partial x_i^{(4)}| < C_1 \log m$
\end{center}
for some constants $C_1>c_1>0$. Then, we have $\Pr{S_1} \geq 1 - o(1/m)$.
\end{lemma}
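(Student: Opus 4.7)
\begin{IEEEproof}[Proof sketch of Lemma~\ref{good1_2}]
The plan is a direct application of the Chernoff bound in Lemma~\ref{ber}, so the argument is essentially a concentration computation; the only thing to verify carefully is that $n^{(4)}$ has the right order.

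First, I would note that by the query design in Section~\ref{sec:model_qdesign}, each query in $A^{(4)}$ independently selects the label $x_i$ with probability $\bar d/m$, where $\bar d=\sum_{d=1}^D d\Phi_d=\Theta(1)$. Hence $|\partial x_i^{(4)}|$ is distributed as $\mathrm{Bin}(n^{(4)},\bar d/m)$ with mean $\mu:=n^{(4)}\bar d/m$, and the whole task is to show that $\mu=\Theta(\log m)$.

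Second, I would bound $n^{(4)}$. Recall $n^{(4)}=n-n^{(1)}-n^{(2)}-n^{(3)}$ with $n^{(1)}=m$, $n^{(2)}=m(\log m)/\log\log m$, and $n^{(3)}=w(\log m)(\log\log m)$. Under the hypothesis $w=o(m/\log\log m)$ we have $n^{(3)}=o(m\log m)$, and since by Theorem~\ref{thm:thm1} we are using $n=(1+\eta)\cdot m\log m/(\sum_{k,d}\tfrac{d\Phi_d}{w}(\sqrt{1-\epsilon_{k,d}}-\sqrt{\epsilon_{k,d}})^2)=\Theta(m\log m)$, it follows that $n^{(4)}=\Theta(m\log m)$ and consequently $\mu=\Theta(\log m)$.

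Finally, choosing constants $C_1>\bar d\cdot n^{(4)}/(m\log m)>c_1>0$ with $C_1$ large enough and $c_1$ small enough, Lemma~\ref{ber} with a constant deviation $\delta$ gives
\[
\Pr{|\partial x_i^{(4)}|\geq C_1\log m}\leq e^{-c\log m}=o(1/m),\qquad \Pr{|\partial x_i^{(4)}|\leq c_1\log m}\leq e^{-c'\log m}=o(1/m),
\]
for some constants $c,c'>0$ that can be made arbitrarily large by widening the gap between $c_1,C_1$ and $\mu/\log m$. A union bound over the two tails yields $\Pr{S_1}\geq 1-o(1/m)$, which is exactly the claim. No real obstacle is expected here; the only bookkeeping point is to absorb the lower-order terms $n^{(1)}+n^{(2)}+n^{(3)}$ into $n$ so that $n^{(4)}\bar d/m$ is still $\Theta(\log m)$.
\end{IEEEproof}
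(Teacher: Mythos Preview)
Your approach is correct and essentially identical to the paper's: both identify $|\partial x_i^{(4)}|\sim\mathrm{Bin}(n^{(4)},\bar d/m)$ with mean at least $(1+\eta-o(1))\log m$ and then apply the Chernoff bound of Lemma~\ref{ber}. One small slip worth fixing: for the lower tail the exponent $c'$ cannot be made \emph{arbitrarily} large by sending $c_1\to 0$ (it is capped at $\mu/\log m$ as $\delta\to 1$ in Lemma~\ref{ber}), but since $\mu/\log m\geq 1+\eta/2>1$ this still gives $o(1/m)$, which is exactly what the paper uses.
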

\setcounter{lemma}{\thelemmatemp}

\begin{IEEEproof}
The distribution of $|\partial x_i^{(4)}|$ follows $\Bin{n^{(4)}, \frac{\bar d}{m}}$. For any $0 < \epsilon < 1$, since $(\sqrt{1-\epsilon} - \sqrt{\epsilon})^2 < 1$, 
\beq
n^{(4)} :=( 1 + \eta) \frac{ m\log m}{\sum_{d = 1}^D \sum_{k = 1}^w \frac{d\Phi_d}{w} ( \sqrt{1 - \epsilon_{k,d}}-\sqrt{\epsilon_{k,d}})^2}\geq (1+ \eta) \frac{m \log m}{\sum_{d=1}^D d \Phi_d } = (1+ \eta) \frac{ m \log m}{\bar d},
\eeq
and
\beq
n^{(4)} \frac{\bar d}{m} \geq (1+\eta){\log m}.
\eeq
Let us take $c_1$ and $C_1$ such that
$
\Min{\log \left(\frac{c_1^{c_1}}{e^{c_1-1}} \right), \log \left(\frac{C_1^{C_1}}{e^{C_1 - 1}} \right)} \geq 1 - \frac{\eta}{2}.
$
Then, from Lemma \ref{ber}, we get
\beq
\begin{split}
\Pr{c_1 \log m < a_1 < C_1 \log m} &\geq 1- \Exp{- \left( 1 - \frac{\eta}{2} \right) (1+ \eta) \log m} \\
&\geq 1 - \Exp{-\left( 1 + \frac{\eta}{4}\right) \log m} = 1- o(1/m).
\end{split}
\eeq
\end{IEEEproof}

\subsection{Good Event Regarding Step 2) of Generating $G_i$}
\begin{lemma}\label{good2}
Let us define the event $S_2$ as
\begin{center}
$S_2$: There is at most one semi-bad pair in $\partial x_i^{(4)}$.
\end{center}
Then, we have $\Pr{S_2} \geq 1 - o(1/m)$.
\end{lemma}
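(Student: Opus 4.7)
The plan is to condition on $S_1$ (so that $|\partial x_i^{(4)}| = \Theta(\log m)$) and then use a union bound over pairs of query pairs. The key calculation is that since each query $y_j$ for $j \in \partial x_i^{(4)}$ chooses $\partial \tilde y_j$ as a uniformly random size-$(D-1)$ subset of $[m]\setminus\{i\}$ independently of the other queries, the probability that a single fixed pair $(j_1,j_2)$ is semi-bad is at most
\[
\Pr{\partial \tilde y_{j_1} \cap \partial \tilde y_{j_2} \neq \emptyset} \;\leq\; \frac{(D-1)^2}{m-1} \;=\; O(1/m),
\]
by union-bounding over the $(D-1)\cdot(D-1)$ ways of matching a label in $\partial\tilde y_{j_1}$ with one in $\partial\tilde y_{j_2}$. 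So the expected number of semi-bad pairs is only $O(\log^2 m / m)$, which is already $o(1)$ but not yet $o(1/m)$.

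To push this to $o(1/m)$, I would bound $\Pr[\text{at least two semi-bad pairs}]$ directly by considering the two possible overlap patterns among a collection of two pairs:

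\begin{itemize}
\item Two disjoint pairs $\{(j_1,j_2),(j_3,j_4)\}$ with four distinct queries. There are $O(|\partial x_i^{(4)}|^4) = O(\log^4 m)$ such configurations, and by independence of $\partial \tilde y_{j_1},\dots,\partial\tilde y_{j_4}$ each contributes probability at most $\bigl((D-1)^2/(m-1)\bigr)^2 = O(1/m^2)$.
\item Two pairs sharing one query, say $(j_1,j_2)$ and $(j_1,j_3)$, with three distinct queries. There are $O(\log^3 m)$ such configurations, and the joint probability is at most $\bigl((D-1)^2/(m-1)\bigr)\cdot\bigl((D-1)^2/(m-1)\bigr) = O(1/m^2)$ by conditioning on $\partial\tilde y_{j_1}$ and then using independence of $\partial\tilde y_{j_2}$ and $\partial\tilde y_{j_3}$.
\end{itemize}

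Summing the two contributions gives $O(\log^4 m / m^2) = o(1/m)$. Combining with $\Pr{S_1^c} = o(1/m)$ from Lemma~\ref{good1_2} via $\Pr{S_2^c} \leq \Pr{S_2^c \mid S_1} + \Pr{S_1^c}$ completes the proof. The only mildly delicate part is handling the shared-query case carefully so that the factors of $1/m$ multiply rather than merely add; this is where one must condition on the first query's latent set before invoking independence. Everything else is a direct counting argument, so I do not expect any real obstacle.
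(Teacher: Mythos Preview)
Your proof is correct and follows essentially the same union-bound argument as the paper: condition on $S_1$ so that $|\partial x_i^{(4)}|=O(\log m)$, then bound the probability of two semi-bad pairs by summing over configurations, each contributing $O(1/m^2)$. The paper's version union-bounds over tuples $(j_1,j_2,j_3,j_4,i_1,i_2)$ with per-tuple probability $\leq (D/m)^4$, without separating the disjoint-pairs and shared-query cases; your explicit case split makes the handling of overlapping pairs cleaner, but the two arguments are otherwise the same.
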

\begin{IEEEproof}
The probability that query pairs $(j_1, j_2)$ and $(j_3, j_4)$ in $\partial x_i^{(4)}$ share items $x_{i_1}$ and $x_{i_2}$, respectively, i.e., $\partial \tilde y_{j_1} \cap \partial \tilde y_{j_2} = i_1$ and $\partial \tilde y_{j_3} \cap \partial \tilde y_{j_4} = i_2$, is less than $\left(\frac{D}{m} \right)^4$. There are
\beq
{\abs*{\partial x_i^{(4)}} \choose 2} \left({\abs*{\partial x_i^{(4)}} \choose 2} - 1\right) (m-1)^2
\eeq
possibilities of such choices, so conditioned on $S_1$, by union bound we have
\beq
\Pr{\exists\text{ two bad pairs in }\partial x_i^{(4)}} \leq {C_1 \log m \choose 2} \left({C_1 \log m \choose 2} - 1\right) (m-1)^2 \left(\frac{D}{m} \right)^4 = o(1/m).
\eeq
\end{IEEEproof}

\subsection{Good Event Regarding Step 3) of Generating $G_i$}
\begin{lemma}\label{lem:good3}
Let us define the event $S_3$ as
\begin{center}
$S_3$: $c_3 \log m \left(\frac{\log m}{ \log \log m} \right) \leq \abs{\partial^3 \tilde x_i^{(4)}} \leq C_3 \log m \left(\frac{\log m}{ \log \log m} \right)$.
\end{center}
Then, we have $\Pr{S_3} \geq 1 - o(1/m)$ for some $C_3>c_3>0$.
\end{lemma}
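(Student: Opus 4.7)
My approach is a conditional Chernoff bound: first use the previously established good events to pin down $|\partial^2 \tilde x_i^{(4)}|$ up to constants, and then exploit the fact that step 3) of the $G_i$-construction expresses membership in $\partial^3 \tilde x_i^{(4)}$ as a sum of independent Bernoulli trials indexed by the queries in $A^{(2)}$. Concretely, I would condition on $S_1 \cap S_2$, which occurs with probability $1 - o(1/m)$ by Lemmas~\ref{good1_2} and~\ref{good2}. Under $S_1$ we have $|\partial x_i^{(4)}| = \Theta(\log m)$, and under $S_2$ at most one pair of queries in $\partial x_i^{(4)}$ shares a label in step 2), so the total number of distinct labels chosen satisfies
\[
s := |\partial^2 \tilde x_i^{(4)}| = (D-1)\,|\partial x_i^{(4)}| - O(1) = \Theta(\log m).
\]

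Next I would invoke the inclusion probability given in step 3): conditional on the realization of $\partial^2 \tilde x_i^{(4)}$, each query $j \in A^{(2)}$ independently belongs to $\partial^3 \tilde x_i^{(4)}$ with probability
\[
p \;=\; \sum_{d=1}^{D} \Phi_d \left(1 - \frac{\binom{m - s}{d}}{\binom{m}{d}}\right) \;=\; \frac{\bar d\, s}{m}\bigl(1 + o(1)\bigr),
\]
where the Taylor expansion is justified because $D = \Theta(1)$ and $s/m = o(1)$. Consequently $|\partial^3 \tilde x_i^{(4)}|$ is a sum of $n^{(2)} = m\log m/\log\log m$ i.i.d.\ Bernoulli$(p)$ random variables whose mean is
\[
n^{(2)} p \;=\; \Theta\!\left(\frac{\log^2 m}{\log\log m}\right),
\]
which is exactly the desired order of magnitude.

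The last step is a direct application of Lemma~\ref{ber} with a constant relative deviation $\delta \in (0,1)$: the mean is $\omega(\log m)$, so both tails are bounded by $\exp\!\bigl(-\Omega(\log^2 m/\log\log m)\bigr) = o(1/m)$. Choosing $c_3$ and $C_3$ so that the interval $[c_3, C_3]$ strictly contains the range of $n^{(2)} p / (\log m \cdot \log m/\log\log m)$ allowed by $S_1$ (accounting for the $1+o(1)$ slack) and taking a union bound with $\Pr{(S_1 \cap S_2)^c} = o(1/m)$ then gives $\Pr{S_3} \geq 1 - o(1/m)$. I do not anticipate any genuine obstacle: the argument is structurally identical to the one for Lemma~\ref{good1_2}, with the only new ingredient being the hypergeometric-to-Bernoulli reduction above, which is routine since $s = \Theta(\log m) = o(m)$.
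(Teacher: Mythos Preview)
Your proposal is correct and follows essentially the same approach as the paper: condition on $S_1$ to get $|\partial^2 \tilde x_i^{(4)}|=\Theta(\log m)$, observe that each query in $A^{(2)}$ lands in $\partial^3 \tilde x_i^{(4)}$ independently with probability $\Theta(\log m/m)$, and then apply the Chernoff bound of Lemma~\ref{ber} to the resulting Bernoulli sum with mean $\Theta(\log^2 m/\log\log m)=\omega(\log m)$. The only cosmetic difference is that you additionally invoke $S_2$ to pin down $s$ exactly and you average the inclusion probability over the degree distribution upfront, whereas the paper simply writes $|\partial^2 \tilde x_i^{(4)}|=\Theta(D|\partial x_i^{(4)}|)$ and keeps the per-query probability $p_j$ depending on $d(j)$; neither change affects the argument.
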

\begin{IEEEproof}
Note that $\abs*{\partial^2 \tilde x_i^{(4)}} = \Theta(D \abs*{\partial x_i^{(4)}})$. A query $y_j$ in $A^{(2)}$ selects one of the labels in $\partial^2 \tilde x_i^{(4)}$ independently with probability $p_j$ defined as
\beq
p_j:= 1 - \frac{{m - \abs*{\partial^2 \tilde x_i^{(4)}} \choose d(j)}}{{m \choose d(j)}}.
\eeq
With the bound $|\partial^2 \tilde x_i^{(4)}|=\Theta(\log m)$ provided by $S_1$, one can easily prove that
\begin{equation}
\label{p2xpj}
p_j = \Theta\left(\frac{\log m}{m} \right).
\end{equation}
Since $|A^{(2)}|=m\frac{\log m}{\log\log m}$, the expectation of $\abs{\partial^3 \tilde x_i^{(4)}}$ is $p_j\cdot |A^{(2)}|=\Theta \left(\log m \left(\frac{\log m}{\log \log m}\right) \right) = \omega( \log m)$, and there exists $C_3,c_3 > 0$ such that $\Pr{S_3} \geq 1 - o(1/m)$ by Lemma \ref{ber}.
\end{IEEEproof}

\subsection{Good Event Regarding Step 4) of Generating $G_i$}
\begin{lemma}\label{good4}
Let us define the event $S_4$ as

\begin{center}
$S_4$: There exists at most one semi-bad pair and at most one semi-bad query with badness equal to two in $\partial^3 \tilde x_i^{(4)}$. \\There is no semi-bad query with badness larger than two.
\end{center}

Then, we have $\Pr{S_4} \geq 1 - o(1/m)$.
\end{lemma}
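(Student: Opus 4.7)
The plan is to condition on the good event $S_3$ from Lemma~\ref{lem:good3}, which together with $S_1$ gives $|\partial^3 \tilde{x}_i^{(4)}| = \Theta(\log^2 m / \log\log m)$ and $|\partial^2 \tilde{x}_i^{(4)}| = \Theta(\log m)$ with probability $1-o(1/m)$. Under these size bounds, I would handle the three failure modes that constitute $S_4^c$ by separate union bounds: (i) some query has badness larger than two, (ii) two or more queries have badness exactly two, and (iii) two or more semi-bad pairs exist.

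For (i), recall that each query $y_j \in \partial^3 \tilde{x}_i^{(4)}$ picks one forced label from $\partial^2 \tilde{x}_i^{(4)}$ and then $d(j)-1 \leq D-1$ additional labels uniformly from $[m]$ minus that forced label. The probability that two or more of these additional labels land in $\partial^2 \tilde{x}_i^{(4)}$ is at most $\binom{D-1}{2}(|\partial^2 \tilde{x}_i^{(4)}|/(m-1))^2 = O(\log^2 m / m^2)$. A union bound over the $|\partial^3 \tilde{x}_i^{(4)}|$ queries then yields $O(\log^4 m / (m^2 \log\log m)) = o(1/m)$, which handles mode (i).

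For (ii), the probability that a single query has badness at least two is $O(\log m / m)$ by the analogous one-hit calculation, and the label choices of distinct queries are independent, so the probability that two specified queries are both semi-bad with badness $\geq 2$ is $O(\log^2 m / m^2)$. Summing over $\binom{|\partial^3 \tilde{x}_i^{(4)}|}{2} = O(\log^4 m / (\log\log m)^2)$ pairs gives $o(1/m)$. For (iii), the probability that a specified pair $(j_1, j_2)$ of queries in $\partial^3 \tilde{x}_i^{(4)}$ shares a label outside $\partial^2 \tilde{x}_i^{(4)}$ is $O(D^2/m)$, obtained by summing over the $\approx m$ candidate labels the probability $O(1/m^2)$ that both queries hit it. The joint probability that two specified (possibly overlapping) pairs are each semi-bad factors as $O(1/m^2)$ by independence of the two pair-events, so summing over $O(|\partial^3 \tilde{x}_i^{(4)}|^4) = O(\log^8 m / (\log\log m)^4)$ configurations still gives $o(1/m)$.

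A final union bound across the three failure modes and the rare event $S_3^c$ then delivers $\Pr{S_4^c} = o(1/m)$, which is the claim. The only mildly delicate bookkeeping concerns mode (iii) when the two pairs share a query: one argues by first fixing the labels of that shared query and then observing that the two label-matching events depend on disjoint randomness drawn by the two partner queries, so the product bound $O(1/m^2)$ still applies. Otherwise the argument is a routine union bound over a small collection of rare events, with all the relevant scales supplied by $S_1$ and $S_3$.
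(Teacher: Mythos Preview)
Your proposal is correct and takes essentially the same approach as the paper: condition on $S_1$ and $S_3$ to control the sizes $|\partial^2 \tilde x_i^{(4)}|=\Theta(\log m)$ and $|\partial^3 \tilde x_i^{(4)}|=\Theta(\log^2 m/\log\log m)$, then apply separate union bounds to the three failure modes. The paper organizes the cases slightly differently (it first bounds the probability of \emph{any} two semi-bad queries, then refers back to Lemma~\ref{good2} for the semi-bad pair count, then treats badness $>2$), and it does not spell out the overlapping-pair case in mode (iii) as carefully as you do, but the underlying computations and scales are identical.
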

\begin{IEEEproof}
We first prove that there is at most one semi-bad query. The probability that a query $y_j$ in $\partial^3 \tilde x_i^{(4)}$ selects one of the labels in $\partial^2 \tilde x_i^{(4)}$ when selecting rest of $d(j) - 1$ labels is given as
\beq
p_{j,1}:= 1 - \frac{{m - \abs*{\partial^2 \tilde x_i^{(4)}} \choose d(j) - 1}}{{m - 1 \choose d(j) - 1}},
\eeq
and similar to \eqref{p2xpj}, we have
\beq
p_{j,1} = \Theta\left(\frac{\log m}{m} \right).
\eeq
Hence, by the union bound, one yields
\beq
\Pr{\exists\text{ two semi-bad queries in }\partial^3 \tilde x_i^{(4)}}
\leq {\abs*{\partial^3 \tilde x_i^{(4)}} \choose 2} \Theta \left(\frac{\log^2 m}{m^2}\right)
= O \left( \frac{(\log m)^6}{m^2 (\log \log m)^2}\right)
= o(1 / m),
\eeq
conditioned on $S_3$, which says that $\abs*{\partial^3 \tilde x_i^{(4)}}=\Theta\left(\frac{\log^2 m}{\log\log m}\right)$.

The proof for semi-bad pair with badness equal to two is very similar to the proof of Lemma~\ref{good2} except that we have $\abs{\partial^3 \tilde x_i^{(4)}}$ queries instead of $\abs{\partial \tilde x_i^{(4)}}$ and there are $m - \abs{\partial^2 \tilde x_i^{(4)}}$ choices of labels instead of $m - 1$.

We next prove that there is no semi-bad query with badness larger than two. For a query $y_j$ in $\partial^3 \tilde x_i^{(4)}$ to have badness larger than two, it should select more than one label in $\partial^2 \tilde x_i^{(4)}$ when selecting rest of $d(j) - 1$ labels. The probability is explicitly written as
\begin{equation*}
p_{j,2} := 1 - \frac{{m - \abs*{\partial^2 \tilde x_i^{(4)}} \choose d(j) - 1}}{{m - 1 \choose d(j) - 1}} - (\abs*{\partial^2 \tilde x_i^{(4)}} - 1) \frac{{m - \abs*{\partial^2 \tilde x_i^{(4)}} \choose d(j) - 2}}{{m - 1 \choose d(j) - 1}},
\end{equation*}
and it is asymptotically bounded as
\begin{equation*}
p_{j,2} = O \left(\frac{\log^2 m}{m^2} \right).
\end{equation*}
The union bound conditioned on $S_3$ again gives
\begin{equation*}
\Pr{\exists\text{ one such semi-bad query in }\partial^3 \tilde x_i^{(4)}}
\leq  \abs{\partial^3 \tilde x_i^{(4)}} O \left(\frac{\log^2 m}{m^2}\right)
= O \left( \frac{\log^6 m}{m^2 (\log \log m)^2}\right)
= o(1 / m).
\end{equation*}

\end{IEEEproof}

\subsection{Good Event Regarding Step 8) of Generating $G_i$}
\begin{lemma}\label{good5}
Let us define the event $S_5$ as
\begin{center}
$S_5$: For any $(k, d) \in \partial_w x_i ^{(4)}$, $ c_5 (\log m) (\log \log m) < \abs{\partial w_{k,d}^{(3)}} < C_5 (\log m) (\log \log m)$
\end{center}
where $\partial w_{k,d}^{(3)}=\partial w_{k,d} \cap A^{(3)}$. Then, we have $\Pr{S_5} \geq 1 - o(1/m)$ for some $C_5>c_5 > 0$.
\end{lemma}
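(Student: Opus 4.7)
The plan is to observe that $|\partial w_{k,d}^{(3)}|$ is a sum of independent Bernoulli trials, apply the Chernoff bound from Lemma~\ref{ber} for a single fixed $(k,d)$, and then take a union bound over all $(k,d) \in \partial_w x_i^{(4)}$ using the cardinality control already supplied by $S_1$.

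More concretely, each of the $n^{(3)} = w(\log m)(\log\log m)$ queries in $A^{(3)}$ is, independently of the others, assigned to worker $k$ with probability $1/w$ and has degree $d$ with probability $\Phi_d$. Hence for a fixed pair $(k,d)$, $|\partial w_{k,d}^{(3)}| \sim \Bin(n^{(3)}, \Phi_d/w)$ with mean $\mu_d := \Phi_d(\log m)(\log\log m)$. For any $(k,d) \in \partial_w x_i^{(4)}$ we must have $\Phi_d > 0$ (otherwise no degree-$d$ query exists in $A^{(4)}$ either), so $\mu_d = \Theta((\log m)(\log\log m)) = \omega(\log m)$. Pick constants $c_5, C_5$ with $0 < c_5 < \min\{\Phi_d : \Phi_d > 0, d \in [D]\}$ and $C_5 > \max_d \Phi_d$; since $D = \Theta(1)$, such constants exist uniformly in $d$.

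Applying Lemma~\ref{ber} with the deviation $\delta_d := \min\{1 - c_5/\Phi_d,\, C_5/\Phi_d - 1\} = \Theta(1)$ then gives
\[
\Pr{|\partial w_{k,d}^{(3)}| \notin \bigl(c_5(\log m)(\log\log m),\, C_5(\log m)(\log\log m)\bigr)} \leq 2 e^{-\Theta((\log m)(\log\log m))} = o(1/m^{2}),
\]
and in fact $o(1/m^c)$ for any constant $c$.

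Finally, by the good event $S_1$ (Lemma~\ref{good1_2}), $|\partial_w x_i^{(4)}| \leq |\partial x_i^{(4)}| < C_1 \log m$ with probability $1 - o(1/m)$, so $\partial_w x_i^{(4)}$ contains at most $C_1 \log m$ distinct pairs $(k,d)$. Taking a union bound over these pairs yields $\Pr{S_5^c \cap S_1} \leq C_1 (\log m) \cdot o(1/m^{2}) = o(1/m)$, and combining with $\Pr{S_1^c} = o(1/m)$ gives the claim. No real obstacle is expected here; the only minor subtlety is ensuring that the mean $\mu_d = \omega(\log m)$ is large enough so that the single-pair Chernoff tail already absorbs the $\Theta(\log m)$ union-bound factor from $S_1$, which it does comfortably because of the extra $\log\log m$ factor in $|A^{(3)}|$.
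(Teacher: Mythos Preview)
Your proof is correct and follows essentially the same approach as the paper: identify $|\partial w_{k,d}^{(3)}| \sim \Bin{n^{(3)}, \Phi_d/w}$ with mean $\Theta((\log m)(\log\log m))$, apply the Chernoff bound from Lemma~\ref{ber}, and take a union bound over the $O(\log m)$ pairs $(k,d)\in\partial_w x_i^{(4)}$. You are slightly more explicit than the paper in justifying $\Phi_d>0$ for relevant $d$ and in invoking $S_1$ to control $|\partial_w x_i^{(4)}|$, but the argument is the same.
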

\begin{IEEEproof}
For any given $(k, d) \in \partial_w x_i^{(4)}$, $\abs{\partial w_{k,d}^{(3)}}$ follows $\Bin{n^{(3)}, \frac{\Phi_d}{w}}$, and for $n^{(3)}=w(\log m)(\log\log m)$ the expectation of $\abs{\partial w_{k,d}^{(3)}}$ is $\Phi_d (\log m)(\log \log m)$. Hence, by Lemma \ref{ber}, there exists $c_5$ and $C_5$ such that
\beq
\Pr{c_5 (\log m)(\log \log m) < \abs{\partial w_{k,d}^{(3)}} < C_5 (\log m)(\log \log m)} \geq 1 - \Theta\left( e^{-\Phi_d(\log m)(\log \log m)} \right).
\eeq
The union bound over $(k, d) \in \partial_w x_i ^{(4)}$ gives
\beq
\Pr{S_5} \geq 1 - \Theta\left( (\log m) e^{-\Phi_d(\log m)(\log \log m)} \right) = 1 - o(1/m).
\eeq
\end{IEEEproof}
Note that Lemma \ref{good5} together with $S_1$ implies
\beq
\abs{\partial x_i^{(3)}} \leq \abs{\partial_w x_i^{(4)}} \Theta (\log m \log \log m) \leq \abs{\partial x_i^{(4)}} \Theta (\log m \log \log m) = O(\log^2 m \log \log m).
\eeq

\begin{lemma}\label{good6}
Let us define the event $S_6$ as
\begin{center}
$S_6$: There is at most one query in $\partial x_i^{(3)}$ that selects a label in $\partial^2 \tilde x_i^{(4)} \cup \partial^4 \tilde x_i^{(4)}$.
\end{center}
Then, we have $\Pr{S_6} \geq 1 - o(1/m)$.
\end{lemma}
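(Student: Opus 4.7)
\begin{IEEEproof}[Proof Plan for Lemma~\ref{good6}]
The plan is to condition on the earlier good events that control the relevant set sizes, bound the probability that a single Phase~3 query selects a label in $L := \partial^2 \tilde x_i^{(4)} \cup \partial^4 \tilde x_i^{(4)}$, and then apply a union bound over pairs of queries in $\partial x_i^{(3)}$.

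First, I would assemble the size estimates already available. Conditioned on $S_1$, we have $|\partial^2 \tilde x_i^{(4)}| \leq D \,|\partial x_i^{(4)}| = O(\log m)$. Conditioned on $S_3$ and the construction in step~4), each query in $\partial^3 \tilde x_i^{(4)}$ contributes at most $D-1$ labels to $\partial^4 \tilde x_i^{(4)}$, so $|\partial^4 \tilde x_i^{(4)}| \leq (D-1)|\partial^3 \tilde x_i^{(4)}| = O\!\left(\tfrac{\log^2 m}{\log\log m}\right)$. Therefore $|L| = O\!\left(\tfrac{\log^2 m}{\log\log m}\right)$. In addition, Lemma~\ref{good5} together with $S_1$ gives $|\partial x_i^{(3)}| = O(\log^2 m \,\log\log m)$.

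Next, note that a query $y_j \in \partial x_i^{(3)}$ has degree $d(j) \leq D = \Theta(1)$ and selects its $d(j)$ labels uniformly at random from $[m]$. Thus the probability that $y_j$ selects at least one label in $L$ is at most
\[
1 - \frac{\binom{m - |L|}{d(j)}}{\binom{m}{d(j)}} \leq \frac{D\,|L|}{m} = O\!\left(\frac{\log^2 m}{m\,\log\log m}\right).
\]
Moreover, given $L$, the selections made by distinct queries in $\partial x_i^{(3)}$ are independent. Hence by the union bound over unordered pairs,
\[
\Pr{\exists\ \text{two queries in } \partial x_i^{(3)} \text{ selecting } L\,|\,S_1,S_3,S_5}
\leq \binom{|\partial x_i^{(3)}|}{2} \left(\frac{D\,|L|}{m}\right)^{\!2}
= O\!\left(\frac{\log^8 m}{m^2}\right) = o(1/m).
\]
Combining this with $\Pr{S_1 \cap S_3 \cap S_5} \geq 1 - o(1/m)$ from the earlier lemmas yields $\Pr{S_6} \geq 1 - o(1/m)$.

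The only mild subtlety is that $L$ is itself random and partially determined by the same Phase~2 query-to-label assignments that populate $\partial^4 \tilde x_i^{(4)}$; however, Phase~3 queries are drawn independently from the queries and assignments used to construct $\partial^2 \tilde x_i^{(4)} \cup \partial^4 \tilde x_i^{(4)}$, so conditioning on $L$ (and on its size being $O(\log^2 m/\log\log m)$) preserves the independence needed for the union bound above. This conditional-independence observation is the only step requiring care; the rest is a direct size-and-union-bound calculation.
\end{IEEEproof}
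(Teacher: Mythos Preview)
Your proposal is correct and follows essentially the same approach as the paper: bound $|L|=O(\log^2 m/\log\log m)$ via $S_1,S_3$, bound $|\partial x_i^{(3)}|=O(\log^2 m\,\log\log m)$ via $S_1,S_5$, note that each Phase~3 query hits $L$ with probability $O(\log^2 m/(m\log\log m))$, and then union-bound over pairs to obtain $O(\log^8 m/m^2)=o(1/m)$. Your added remark on conditional independence between Phase~3 label selections and the construction of $L$ is a welcome clarification but does not depart from the paper's argument.
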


\begin{IEEEproof}
The probability that a query $y_j$ in $\partial x_i^{(3)}$ selects one of the labels in $\partial^2 \tilde x_i^{(4)} \cup \partial^4 \tilde x_i^{(4)}$ is given as
\beq
p_j:= 1 - \frac{{m - \abs*{\partial^2 \tilde x_i^{(4)} \cup \partial^4 \tilde x_i^{(4)}} \choose d(j)}}{{m \choose d(j)}},
\eeq
and similar to \eqref{p2xpj}, we have
\beq
p_j = \Theta \left( \frac{\log^2 m}{m \log \log m } \right),
\eeq
where we used $S_3$ and the relation $\abs{\partial^4 \tilde x_i^{(4)}} \leq D \abs{\partial^3 \tilde x_i^{(4)}}$.
By the union bound, we get
\beq
\Pr{\exists\text{ two such queries in }\partial \tilde x_i^{(3)}} \leq {\abs*{\partial x_i^{(3)}} \choose 2}  \Theta \left(\frac{\log^4 m }{m^2 (\log \log m)^2}\right) = O \left( \frac{\log^8 m}{m^2}\right) = o(1 / m),
\eeq
where we used $\abs{\partial x_i^{(3)}} = O(\log^2 m \log \log m)$.
\end{IEEEproof}

\begin{lemma}\label{good7}
Let us define the event $S_7$ as
\begin{center}
$S_7$: There is at most one bad pair in $\partial x_i^{(3)}$.
\end{center}
Then, we have $\Pr{S_7} \geq 1 - o(1/m)$.
\end{lemma}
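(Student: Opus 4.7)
The plan is to replicate almost verbatim the counting argument used to prove Lemma \ref{good2}, replacing $\partial x_i^{(4)}$ by $\partial x_i^{(3)}$ and using the cardinality bound on $\partial x_i^{(3)}$ provided by the earlier good events. Specifically, I would condition on $S_1\cap S_5$ (which holds with probability $1-o(1/m)$); the remark following the proof of Lemma~\ref{good5} then gives $|\partial x_i^{(3)}|=O(\log^2 m\cdot \log\log m)$. At this stage of the construction (i.e.\ after step~7) and before step~8) is carried out), each query in $\partial x_i^{(3)}$ still selects its $d(j)\le D$ labels uniformly at random from $[m]$, independently of all other queries in $\partial x_i^{(3)}$, since no label-avoidance conditioning has yet been imposed on this set.

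Next I would apply the union bound. The probability that a prescribed pair of queries $(j_1,j_2)$ in $\partial x_i^{(3)}$ both contain a prescribed label $i_1\in[m]$ is at most $(D/m)^2$, so the probability that two prescribed unordered pairs $(j_1,j_2)$ and $(j_3,j_4)$ share labels $i_1$ and $i_2$ respectively is at most $(D/m)^4$. Summing over the ${|\partial x_i^{(3)}|\choose 2}\bigl({|\partial x_i^{(3)}|\choose 2}-1\bigr)$ ordered choices of two distinct unordered query-pairs (this covers both the case of two disjoint bad pairs and the case in which the two pairs share a query) and the at most $(m-1)^2$ label choices, I would obtain
\beq
\Pr{\exists\ \text{two bad pairs in }\partial x_i^{(3)}}\;\le\;O\!\bigl(|\partial x_i^{(3)}|^4\bigr)\cdot m^2\cdot \Bigl(\tfrac{D}{m}\Bigr)^4 \;=\; O\!\left(\frac{(\log m)^8(\log\log m)^4}{m^2}\right)\;=\;o(1/m),
\eeq
and combining with $\Pr{(S_1\cap S_5)^c}=o(1/m)$ yields $\Pr{S_7}\ge 1-o(1/m)$.

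There is no real obstacle here, only one bookkeeping point to verify: that the label choices of the queries in $\partial x_i^{(3)}$ are indeed i.i.d.\ uniform $d$-subsets of $[m]$ at this step, so that the elementary $(D/m)^2$ bound per pair applies. This is immediate from step~8) of the construction in Section~\ref{sec:goodevents_state}, because $\partial x_i^{(3)}$ is defined purely through worker assignment (step~7)) without any restriction on the labels it touches; the removal of the single offending query in $\partial x_i^{(3)}$ that might intersect $\partial^2\tilde x_i^{(4)}\cup\partial^4\tilde x_i^{(4)}$ (the content of Lemma~\ref{good6}) is a separate event and does not interfere with the independence used above. Consequently the same proof structure as Lemma~\ref{good2} goes through with only a polylogarithmic-in-$m$ inflation of the cardinality, which is still comfortably absorbed by the $1/m^2$ factor.
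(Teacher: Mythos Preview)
Your proposal is correct and takes essentially the same approach as the paper: the paper's proof simply states that it is ``exactly the same as the proof of Lemma~\ref{good2} except that we use $|\partial x_i^{(3)}|=O(\log^2 m\,\log\log m)$ instead of $|\partial x_i^{(4)}|=\Theta(\log m)$,'' which is precisely the substitution you carry out. Your added justification that the label choices in $\partial x_i^{(3)}$ are still i.i.d.\ uniform over $[m]$ at step~8) (because membership in $\partial x_i^{(3)}$ is determined solely by worker assignment) is a useful clarification the paper leaves implicit.
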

\begin{IEEEproof}
The proof is exactly the same as the proof of Lemma~\ref{good2} except that we use $\partial x_i^{(3)}=O(\log^2 m \log\log m)$ instead of $\abs{\partial x_i^{(4)}} = \Theta (\log m)$.
\end{IEEEproof}

\subsection{Good Event Regarding Step 10) of Generating $G_i$}
\begin{lemma}\label{good8}
Let us define the event $S_{8}$ as
\begin{center}
$S_{8}$: $c_{8} \abs{\partial^2 x_i^{(3)}} \left(\frac{\log m}{ \log \log m} \right) \leq \abs{\partial^3 \tilde x_i^{(3)}} \leq C_{8} \abs{\partial^2 x_i^{(3)}} \left(\frac{\log m}{ \log \log m} \right)$.
\end{center}
Then, we have $\Pr{S_{8}} \geq 1 - o(1/m)$ for some small positive constant $c_{8}$ and large positive constant $C_{8}$.
\end{lemma}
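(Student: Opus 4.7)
The proof closely parallels the argument for $S_3$ in Lemma~\ref{lem:good3}, with the Phase 2 queries in $A^{(2)}\setminus \partial^3 \tilde x_i^{(4)}$ now playing the role of $A^{(2)}$ and $\partial^2 x_i^{(3)}\setminus\partial^2 \tilde x_i^{(4)}$ playing the role of $\partial^2 \tilde x_i^{(4)}$. The plan is as follows. First, conditioned on the previously established good events, $|\partial^3 \tilde x_i^{(4)}| = O(\log^2 m/\log\log m)$ by $S_3$, so $|A^{(2)}\setminus \partial^3 \tilde x_i^{(4)}| = |A^{(2)}|(1-o(1)) = \Theta\bigl(m\log m/\log\log m\bigr)$. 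By step 10) in the graph construction, each surviving query $y_j$ in $A^{(2)}\setminus \partial^3 \tilde x_i^{(4)}$ samples its $d(j)$ labels uniformly at random subject to the constraint that none of them lies in $\partial^2\tilde x_i^{(4)}$, and it belongs to $\partial^3 \tilde x_i^{(3)}$ precisely when at least one of the $d(j)$ chosen labels lies in $\partial^2 x_i^{(3)}\setminus \partial^2 \tilde x_i^{(4)}$. Hence the conditional probability is
\[
p_j \;=\; 1 - \frac{\binom{m-|\partial^2\tilde x_i^{(4)}|-|\partial^2 x_i^{(3)}\setminus\partial^2\tilde x_i^{(4)}|}{d(j)}}{\binom{m-|\partial^2\tilde x_i^{(4)}|}{d(j)}}.
\]

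Next, observe that $S_5$ together with $S_1$ yields $|\partial x_i^{(3)}|=O(\log^2 m\,\log\log m)$, hence $|\partial^2 x_i^{(3)}|=O(\log^2 m\,\log\log m)=o(m)$; likewise $|\partial^2\tilde x_i^{(4)}|=O(\log m)$ by $S_1$. A standard Taylor expansion of the ratio of binomial coefficients then gives $p_j = \Theta\bigl(d(j)\,|\partial^2 x_i^{(3)}|/m\bigr)$, uniformly in $d(j)\in[D]$. Averaging over the degree distribution $\{\Phi_d\}$, the expected number of queries in $\partial^3\tilde x_i^{(3)}$ conditioned on $|\partial^2 x_i^{(3)}|$ is
\[
\mathbb{E}\bigl[|\partial^3\tilde x_i^{(3)}|\bigm|\,|\partial^2 x_i^{(3)}|\bigr] \;=\; \Theta\!\left(|\partial^2 x_i^{(3)}|\,\frac{\log m}{\log\log m}\right).
\]

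Since queries in $A^{(2)}\setminus \partial^3\tilde x_i^{(4)}$ are designed independently of one another (their inclusion in $\partial^3\tilde x_i^{(3)}$ is determined by independent label selections), $|\partial^3\tilde x_i^{(3)}|$ is a sum of independent Bernoulli variables and the Chernoff bound in Lemma~\ref{ber} applies. Because $|\partial x_i^{(3)}|=\Theta((\log m)(\log\log m))$ and labels selected by Phase~3 queries are almost surely distinct (a coincidence event of probability $O(\operatorname{polylog}(m)/m)$), we have $|\partial^2 x_i^{(3)}|=\Theta((\log m)(\log\log m))=\omega(\log\log m)$, so the conditional mean is $\omega(\log m)$. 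Applying Lemma~\ref{ber} with multiplicative deviation $1\pm\delta$ for a suitable constant $\delta$ yields appropriate constants $c_8<C_8$ for which the two-sided deviation probability is $o(1/m)$, and a union bound with the (already $o(1/m)$) probability of the ignored conditioning events completes the proof.

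The main obstacle is purely bookkeeping: one must correctly track that the conditioning in step 10) excludes $\partial^2\tilde x_i^{(4)}$ from the sample space of each query's labels, and that $\partial^2 x_i^{(3)}$ is already determined before step 10) begins so the Chernoff bound can be applied conditionally on its value. Once these are set up, the estimate $p_j = \Theta(d(j)|\partial^2 x_i^{(3)}|/m)$ and the resulting concentration follow exactly as in Lemma~\ref{lem:good3}.
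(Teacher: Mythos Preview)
Your proof is correct and follows essentially the same route as the paper's: reduce to the argument of Lemma~\ref{lem:good3} with $A^{(2)}\setminus\partial^3\tilde x_i^{(4)}$ in place of $A^{(2)}$ and the label pool restricted to $[m]\setminus\partial^2\tilde x_i^{(4)}$, then invoke $S_3$ to control $|\partial^3\tilde x_i^{(4)}|$ and apply Lemma~\ref{ber}. One small bookkeeping slip: you first (correctly) write $|\partial x_i^{(3)}|=O(\log^2 m\,\log\log m)$ but later assert $|\partial x_i^{(3)}|=\Theta((\log m)(\log\log m))$; the latter need not hold since $|\partial_w x_i^{(4)}|$ can be as large as $\Theta(\log m)$, but only the lower bound $|\partial^2 x_i^{(3)}|=\Omega((\log m)(\log\log m))$ is needed to make the conditional mean $\omega(\log m)$, and that follows from $S_5$ applied to any single $(k,d)\in\partial_w x_i^{(4)}$.
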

\begin{IEEEproof}
The proof is basically the same as the proof of Lemma~\ref{lem:good3} except that we are given $n^{(2)} - \abs{\partial^3 \tilde x_i^{(4)}}$ queries that select one of $\abs{\partial^2 x_i^{(3)}}$ labels among $m - \abs{\partial^2 \tilde x_i^{(4)}}$ labels. To bound $\abs{\partial^3 \tilde x_i^{(4)}}$, we use the condition $S_3$.
\end{IEEEproof}

\begin{lemma}\label{good9}
Let us define the event $S_9$ as
\begin{center}
$S_9$: There is at most one query in $\partial^3 \tilde x_i^{(3)}$ that selects a label in $\partial^4 \tilde x_i^{(4)} \setminus \partial^2 x_i^{(3)}$.
\end{center}
Then, we have $\Pr{S_9} \geq 1 - o(1/m)$.
\end{lemma}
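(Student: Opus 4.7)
The plan is to follow exactly the same template as in Lemma~\ref{good6}, only with the roles of the relevant sets exchanged. According to step~10) of Section~\ref{sec:goodevents_state}, each query $y_j \in \partial^3 \tilde x_i^{(3)}$, after having its mandatory label drawn from $\partial^2 x_i^{(3)} \setminus \partial^2 \tilde x_i^{(4)}$, picks its remaining $d(j)-1$ labels uniformly at random from $[m] \setminus \partial^2 \tilde x_i^{(4)}$. I will first bound, uniformly in $j$, the probability $p_j$ that at least one of these $d(j)-1$ labels falls in $\partial^4 \tilde x_i^{(4)} \setminus \partial^2 x_i^{(3)}$ by
\[
p_j \;=\; 1 \;-\; \frac{\binom{m - |\partial^2 \tilde x_i^{(4)}| - |\partial^4 \tilde x_i^{(4)} \setminus \partial^2 x_i^{(3)}|}{d(j)-1}}{\binom{m - |\partial^2 \tilde x_i^{(4)}| - 1}{d(j)-1}}.
\]

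Second, I will invoke $S_3$ together with the trivial bound $|\partial^4 \tilde x_i^{(4)}| \le D |\partial^3 \tilde x_i^{(4)}|$ to conclude $|\partial^4 \tilde x_i^{(4)}| = O(\log^2 m / \log\log m)$. Since $D=\Theta(1)$, $|\partial^2 \tilde x_i^{(4)}| = O(\log m) = o(m)$ under $S_1$, and $d(j)\le D$, the same elementary expansion used in Lemmas~\ref{lem:good3} and~\ref{good6} yields $p_j = \Theta\bigl(\log^2 m / (m \log\log m)\bigr)$.

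Third, I will control $|\partial^3 \tilde x_i^{(3)}|$ via the preceding good events. Conditioning on $S_1$ and $S_5$ gives $|\partial x_i^{(3)}| = O(\log^2 m\, \log\log m)$, hence $|\partial^2 x_i^{(3)}| = O(\log^2 m\, \log\log m)$; then $S_8$ yields $|\partial^3 \tilde x_i^{(3)}| = O(\log^3 m)$. A union bound over unordered pairs in $\partial^3 \tilde x_i^{(3)}$ then finishes the argument:
\[
\Pr{\exists \text{ two such queries in } \partial^3 \tilde x_i^{(3)}} \;\le\; \binom{|\partial^3 \tilde x_i^{(3)}|}{2} p_j^{\,2} \;=\; O\!\left(\frac{\log^{10} m}{m^2 (\log\log m)^2}\right) \;=\; o(1/m).
\]

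There is no substantive obstacle here; the only mild subtlety is the asymmetry caused by conditioning on $\partial^2 \tilde x_i^{(4)}$---the sample space excludes those labels, so the denominator becomes $\binom{m - |\partial^2 \tilde x_i^{(4)}|-1}{d(j)-1}$ rather than $\binom{m-1}{d(j)-1}$---but because $|\partial^2 \tilde x_i^{(4)}| = O(\log m) = o(m)$ this alters neither the leading order nor the implicit constants in the estimate $p_j = \Theta(|\partial^4 \tilde x_i^{(4)}|/m)$, so the calculation is essentially identical to that of Lemma~\ref{good6}.
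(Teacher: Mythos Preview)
Your proposal is correct and follows essentially the same approach as the paper: mimic Lemma~\ref{good6}, noting that the sample space for the extra $d(j)-1$ labels is now $[m]\setminus\partial^2\tilde x_i^{(4)}$, bound $|\partial^4\tilde x_i^{(4)}|$ via $S_3$ to get $p_j=\Theta(\log^2 m/(m\log\log m))$, bound $|\partial^3\tilde x_i^{(3)}|$ via $S_8$, and apply a union bound over pairs. Your final estimate $O(\log^{10}m/(m^2(\log\log m)^2))$ differs from the paper's stated $O(\log^6 m/(m^2\log\log m))$, but yours is the one consistent with $|\partial^3\tilde x_i^{(3)}|=O(\log^3 m)$, and in any case both are $o(1/m)$.
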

\begin{IEEEproof}
The proof is the same as the proof of Lemma~\ref{good6} except the queries select labels from $[m] \setminus \partial^2 \tilde x_i^{(4)}$ instead of $[m]$ and $\partial^4 \tilde x_i^{(4)} \setminus \partial^2 x_i^{(3)}$ is used instead of $\partial^2 \tilde x_i^{(4)} \cup \partial^4 \tilde x_i^{(4)}$. The final union bound of Lemma~\ref{good6} is written as
\beq
\Pr{\exists\text{ two such queries in }\partial^3 \tilde x_i^{(3)}} \leq {\abs*{\partial^3 \tilde x_i^{(3)}} \choose 2} \Theta \left(\frac{(\log m)^2 }{m (\log \log m)}\right)^2 = O \left( \frac{(\log m)^6}{m^2 \log \log m}\right) = o(1 / m),
\eeq
where we used $S_8$ to bound $\abs{\partial^3 \tilde x_i^{(3)}}$.
\end{IEEEproof}

\begin{lemma}\label{good10}
Let us define event $S_{10}$ as

\begin{center}
$S_{10}$: There is at most one bad pair and at most one bad query with badness equal to two in $\partial^3 \tilde x_i^{(3)}$.
 \\There is no bad query with badness larger than two.
\end{center}

Then, we have $\Pr{S_{10}} \geq 1 - o(1/m)$.
\end{lemma}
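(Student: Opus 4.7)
The plan is to mirror the three-part argument used for Lemma~\ref{good4}, which handled the analogous objects inside $\partial^3 \tilde x_i^{(4)}$. Throughout I will condition on the size bounds already established by earlier good events: $|\partial x_i^{(3)}| = O(\log^2 m \log\log m)$ (combining $S_1$ and $S_5$, as noted after Lemma~\ref{good5}), $|\partial^2 x_i^{(3)}| = O(\log^2 m \log\log m)$ (since each query in $\partial x_i^{(3)}$ contributes at most $D = \Theta(1)$ labels), and $|\partial^3 \tilde x_i^{(3)}| = O(\log^3 m)$ (from $S_8$). The three events to rule out are (a) the existence of two bad pairs in $\partial^3 \tilde x_i^{(3)}$, (b) the existence of two bad queries of badness exactly two, and (c) the existence of any bad query of badness at least three.

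For parts (b) and (c) I will use the label-selection rule of step 10): each query in $\partial^3 \tilde x_i^{(3)}$ picks its first label uniformly from $\partial^2 x_i^{(3)} \setminus \partial^2 \tilde x_i^{(4)}$ and its remaining $d-1$ labels uniformly from $[m] \setminus \partial^2 \tilde x_i^{(4)}$. The probability that one of these $d-1$ free labels again lands in $\partial^2 x_i^{(3)}$ is $\Theta(|\partial^2 x_i^{(3)}|/m) = O(\log^2 m \log\log m / m)$, and the probability that two of them do so is $O((\log^2 m \log\log m)^2 / m^2)$. A union bound over the at most $O(\log^6 m)$ pairs of queries in $\partial^3 \tilde x_i^{(3)}$ bounds (b) by $O(\log^{10} m (\log\log m)^2 / m^2) = o(1/m)$, and a union bound over the $O(\log^3 m)$ queries bounds (c) by $O(\log^7 m (\log\log m)^2 / m^2) = o(1/m)$.

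Part (a) I will handle by imitating Lemma~\ref{good2}: the probability that two disjoint pairs of queries in $\partial^3 \tilde x_i^{(3)}$ each coincide on a specified label outside $\partial^2 \tilde x_i^{(4)}$ is at most $(D/m)^4$, and the number of ways to choose the two pairs and the two shared labels is $O(\log^{12} m) \cdot O(m^2)$, giving $O(\log^{12} m / m^2) = o(1/m)$. Combining the three bounds via a final union bound yields $\Pr{S_{10}} \geq 1 - o(1/m)$. The only delicate point, and the main obstacle, is that $|\partial^3 \tilde x_i^{(3)}|$ and $|\partial^2 x_i^{(3)}|$ are a factor of roughly $\log m \log\log m$ larger than their Phase-4 counterparts in Lemma~\ref{good4}; one must verify that this extra polylogarithmic overhead is still absorbed by the $m^{-2}$ factor arising from requiring two independent label coincidences --- which, as the three $o(1/m)$ estimates above show, it comfortably is.
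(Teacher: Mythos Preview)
Your proposal is correct and follows essentially the same approach as the paper: the paper's own proof simply states that the argument is identical to that of Lemma~\ref{good4}, with the label pool replaced by $[m]\setminus\partial^2\tilde x_i^{(4)}$ and the size of $\partial^3\tilde x_i^{(3)}$ controlled via $S_8$. Your write-up is in fact a faithful (and more detailed) expansion of exactly this, including the correct polylogarithmic size bounds and the verification that the extra $\mathrm{polylog}(m)$ overhead is absorbed by the $m^{-2}$ factor.
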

\begin{IEEEproof}
The proof for the bad pair and the bad queries is the same as the proof of Lemma~\ref{good4} except that the queries in $\partial^3 \tilde x_i^{(3)}$ selects labels from $[m] \setminus \partial^2 \tilde x_i^{(4)}$ instead of $[m]$ and we use $S_8$ to bound $\abs{\partial^3 \tilde x_i^{(3)}}$.
\end{IEEEproof}

\section{Proof of the Lemma~\ref{lem:app:pfLemmagood2}}\label{app:pfLemmagood2}

We prove the good events (\romannumeral 5) and (\romannumeral 6) by using some of the good events proved in Appendix \ref{app:pfLemmagood1}. Note that the first parts of good events (\romannumeral 5) and (\romannumeral 6) bounding the size of the sets $\partial x_i^{(4)}$ and $\partial w_{k,d}^{(3)}$ has already been proved in Appendix \ref{app:pfLemmagood1}. Hence, we only prove the good events related to the number of perfect queries and good queries in this section.

\subsection{Good Event Regarding Step 4) of Generating $G_i$}
\begin{lemma}\label{good11}
Let us define the event $S_{11}$ as
\begin{center}
$S_{11}$: The number of good labels in $\partial^2 \tilde x_i^{(4)}$ is larger than or equal to $|\partial^2 \tilde x_i^{(4)}|-C_{11} \log \log m$.
\end{center}
Then, we have $\Pr{S_{11}} \geq 1 - o(1/m)$ for some large constant $C_{11} > 0$.
\end{lemma}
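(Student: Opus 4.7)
\begin{IEEEproof}[Proof plan for Lemma \ref{good11}]
The approach is to first bound $|\partial^2 \tilde x_i^{(4)}|$, then show that any individual label in this set is a good label with very high probability, and finally argue that the number of non-good labels concentrates below $C_{11}\log\log m$ with the desired probability.

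First I would condition on the good event $S_1$ from Lemma \ref{good1_2}, which gives $|\partial x_i^{(4)}| = \Theta(\log m)$. Since each query in $\partial x_i^{(4)}$ contributes at most $D-1=\Theta(1)$ labels to $\partial^2 \tilde x_i^{(4)}$, we have $|\partial^2 \tilde x_i^{(4)}| \leq (D-1)|\partial x_i^{(4)}| =: N = \Theta(\log m)$. For any fixed label $i' \in \partial^2 \tilde x_i^{(4)}$, the random variable $|\partial x_{i'}\cap A^{(2)}|$ is distributed as a sum of $n^{(2)}=m\log m/\log\log m$ independent Bernoullis, each taking value $1$ with probability $\bar d/m$, hence has mean $\bar d \log m/\log\log m$. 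Applying the Chernoff bound in Lemma \ref{ber} with constant multiplicative deviations $\delta$, there exist constants $c', c > 0$ and appropriate endpoints of the $\Theta(\log m/\log\log m)$ good range such that
\begin{equation}
p := \Pr[i' \text{ is not a good label}] \leq \exp\!\left(-c'\,\frac{\log m}{\log\log m}\right).
\end{equation}

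Next I would bound the count $Y := \sum_{i' \in \partial^2 \tilde x_i^{(4)}} \mathds{1}[i' \text{ not good}]$ from above. Using the union bound over subsets of size $k = C_{11}\log\log m$,
\begin{equation}
\Pr[Y \geq k] \leq \binom{N}{k} \max_{S \subset \partial^2 \tilde x_i^{(4)},\, |S|=k} \Pr\!\left[\bigcap_{i' \in S} \{i' \text{ not good}\}\right].
\end{equation}
For the inner joint probability, I would use the negative association of the indicators $\mathds{1}[\text{query } j \text{ selects label } i']$ across different $i'$ (which follows because, for each $j \in A^{(2)}$, the label-selection indicators form a uniform without-replacement sample, which is negatively associated, and different queries are independent); splitting ``not good'' into ``too few queries'' and ``too many queries'' (each monotone in the indicator vector) lets me conclude that joint probabilities are bounded by $p^k$ up to a factor $2^k$. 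Thus $\Pr[Y \geq k] \leq \binom{N}{k}(2p)^k \leq (2eN p/k)^k$. Plugging in $N=\Theta(\log m)$, $p \leq \exp(-c'\log m/\log\log m)$, and $k=C_{11}\log\log m$,
\begin{equation}
\log \Pr[Y \geq k] \leq k\!\left(\log\frac{2eN}{k} + \log p\right) \leq C_{11}\log\log m \cdot \left(-c'\frac{\log m}{\log\log m} + O(\log\log m)\right) = -c' C_{11}\log m (1-o(1)),
\end{equation}
which is $o(1/m)$ once $C_{11} > 1/c'$.

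The main obstacle I expect is the last step: the non-good events are not independent across $i'$, so one cannot naively apply Chernoff to $Y$ directly. The cleanest resolution is the negative-association argument sketched above, which reduces joint probabilities to products of marginals for each of the two monotone sub-events composing ``not good.'' An alternative, should negative association be considered too heavy machinery, is to expose the $n^{(2)}$ queries in $A^{(2)}$ sequentially and apply a bounded-differences/Azuma concentration to $Y$, using that changing a single query alters $Y$ by $O(D)$; but the $k$-subset union bound above is the more direct route given the $\Theta(\log m)$ size of $\partial^2 \tilde x_i^{(4)}$.
\end{IEEEproof}
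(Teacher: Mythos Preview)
Your approach is essentially the same as the paper's: both obtain a Chernoff bound for the marginal probability that a single label fails to be good, invoke negative association/correlation to bound the joint probability over a subset of $k=C_{11}\log\log m$ labels by a product, and finish with a union bound over the $\binom{N}{k}$ such subsets. The paper recasts the count as a balls-and-bins problem with $\abs{\partial^3 \tilde x_i^{(4)}}$ balls and $\abs{\partial^2 \tilde x_i^{(4)}}$ bins (conditioning on $S_1$ and $S_3$) and cites \cite{joagdev1983negative} for negative correlation of the bin counts, but this is just a repackaging of your direct binomial argument.

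One point to tighten: you split ``not good'' into ``too few'' (monotone decreasing in the selection indicators) and ``too many'' (monotone increasing) and claim the joint probability over $S$ factors up to a $2^k$. Under negative association, however, a decreasing event on one coordinate block and an increasing event on a disjoint block are \emph{positively} correlated, so the mixed terms in your $2^k$ expansion do not factor the way you want. The paper sidesteps this entirely by bounding only the lower tail $X_{i'}<c\,\frac{\log m}{\log\log m}$, observing that subsequent assignment steps can only add queries to each label; and in fact the downstream use of good labels (the Chernoff step in Lemma~\ref{lem:2}) only requires the lower bound on $\abs{\partial x_{i'}\cap A^{(2)}}$. So the cleanest fix is simply to drop the upper-tail half of the argument and work with the one-sided event, exactly as the paper does.
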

\begin{IEEEproof}
It is sufficient to consider the first query-to-label assignment step in step 4), where each query in $\partial^3 \tilde x_i^{(4)}$ selects one label randomly from $\partial^2 \tilde x_i^{(4)}$, since the number of good labels will only increase after we proceed further. The assignment step is equivalent to the balls and bins problem with $\abs*{\partial^3 \tilde x_i^{(4)}}$ balls and $\abs*{\partial^2 \tilde x_i^{(4)}}$ bins. Let $n_1 = \abs*{\partial^2 \tilde x_i^{(4)}}$ and $n_2 = \abs*{\partial^3 \tilde x_i^{(4)}}$. If we let $\{X_i \}_{i \in [n_1]}$ as the number of balls in each bin, the random variables $\{X_i \}_{i \in [n_1]}$ are negatively correlated \cite{joagdev1983negative}. The negatively correlated random variables have the property that for any set $I \subset [n_1]$,
\begin{equation}
\label{negacorrel}
\Pr{X_i < t,\ \forall i \in I} \leq \prod_{i \in I}\Pr{X_i < t}.
\end{equation}
Each $X_i$ follows the distribution $\Bin{n_2 , \frac{1}{n_1}}$, and since $n_1=\abs*{\partial^2 \tilde x_i^{(4)}}=\Theta(\log m)$ and $n_2=\abs*{\partial^3 \tilde x_i^{(4)}}=\log m \left(\frac{\log m}{ \log \log m} \right)$ conditioned on $S_1$ and $S_3$, respectively, the expectation of it is $\Theta\left(\frac{\log m}{\log \log m}\right)$. Hence, by Lemma \ref{ber}, there exists small positive constant $\delta > 0$ such that
\beq
\Pr{X_i < c_{11} \frac{\log m}{\log \log m}} \leq \Exp{-\delta \frac{\log m}{\log \log m}}.
\eeq
for any $i \in [n_1]$. We prove that the probability that there exists a subset $I$ with $\abs{I} = C_{11} \log \log m$ such that $X_i < c_{11} \frac{\log m}{\log \log m}$ for all $i \in I$ is $o(1/m)$. For a given $I$, this probability is bounded as
\beq
\Pr{X_i < c_{11} \frac{\log m}{\log \log m},\ \forall i \in I} \leq \Exp{-\delta C_{11} \log m}
\eeq
by \eqref{negacorrel}. We use the union bound for ${n_1 \choose c_{11} \log \log m}$ possibilities of $I$ to obtain
\beq
\begin{split}
\Pr{\exists I \subset [n_1]\text{ s.t. }X_i < c_{11} \frac{\log m}{\log \log m},\ \forall i \in I}
&\leq {C_1 D \log m \choose c_{11} \log \log m} \Exp{-\delta C_{11} \log m} \\
&\leq \Exp{-\delta C_{11} (\log(C_1 D) + \log\log m)(c_{11} \log \log m) \log m} \\
&= o(1/m)
\end{split}
\eeq
\end{IEEEproof}

Conditioned on $S_{11}$, there are at most $C_{11}\log\log m$ no-good labels in $\partial^2 \tilde{x}_i^{(4)}$. If there exists no semi-bad pair in $\partial x_i^{(4)}$ selecting the same label from $\partial^2 \tilde{x}_i^{(4)}$, Lemma \ref{good11} implies that there are at most $C_{11} \log \log m$ queries in $\partial x_i^{(4)}$ containing a no-good label node in its label-selection process. By Lemma~\ref{lem:app:pfLemmagood1}, there are at most three queries in $\partial x_i^{(4)}$ that form a semi-bad pair, or are the parent nodes of a semi-bad pair or a semi-bad query. Therefore, these two arguments conclude that there are at least $|\partial x_i^{(4)}|-C_{11}\log\log m - 3$ perfect queries in $\partial x_i^{(4)}$.

\subsection{Good Event Regarding Step 10) of Generating $G_i$}
In the lemma below, $\partial^2 x_i^{(3)}$ refers to the set defined in~\eqref{eqn:partial2 x_i^3}, before the removal of the nodes by Step 10).

\begin{lemma}\label{good12}
Let us define the event $S_{12}$ as
\begin{center}
$S_{12}$: The number of good labels in $\partial^2 x_i^{(3)}$ is larger than or equal to $|\partial^2 x_i^{(3)}|-C_{12} \log \log m -1$.
\end{center}
Then, we have $\Pr{S_{12}} \geq 1 - o(1/m)$ for some large constant $C_{12} > 0$.
\end{lemma}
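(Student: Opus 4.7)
The plan is to adapt the balls-into-bins argument of Lemma \ref{good11} to the setting of step 10) in Section~\ref{sec:goodevents_state}. Conditioned on $S_8$ together with the earlier size bounds $S_1, S_3, S_5, S_6$, each query in $\partial^3 \tilde{x}_i^{(3)}$ selects its first label uniformly at random from $\partial^2 x_i^{(3)} \setminus \partial^2 \tilde{x}_i^{(4)}$. By Lemma~\ref{good6}, at most one query in $\partial x_i^{(3)}$ reaches into $\partial^2 \tilde{x}_i^{(4)} \cup \partial^4 \tilde{x}_i^{(4)}$, so $|\partial^2 x_i^{(3)} \cap \partial^2 \tilde{x}_i^{(4)}| \leq 1$. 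This single potentially no-good label is set aside and is precisely responsible for the $-1$ term in the stated bound.

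For each $i' \in \partial^2 x_i^{(3)} \setminus \partial^2 \tilde{x}_i^{(4)}$, let $Y_{i'}$ count the queries in $\partial^3 \tilde{x}_i^{(3)}$ whose first-label choice is $i'$. Writing $n_1' := |\partial^2 x_i^{(3)} \setminus \partial^2 \tilde{x}_i^{(4)}|$ and $n_2' := |\partial^3 \tilde{x}_i^{(3)}|$, the random variables $\{Y_{i'}\}$ describe $n_2'$ balls thrown uniformly at random into $n_1'$ bins, so they are negatively correlated in the sense of \eqref{negacorrel} and each $Y_{i'} \sim \Bin{n_2', 1/n_1'}$ has mean $\Theta(\log m / \log\log m)$ by $S_8$. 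Since $|\partial x_{i'} \cap A^{(2)}| \geq Y_{i'}$, any $i'$ with $Y_{i'} \geq c_{11} \log m / \log\log m$ is automatically a good label.

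The remaining tail calculation is nearly verbatim that of Lemma~\ref{good11}. Lemma~\ref{ber} supplies a constant $\delta > 0$ with $\Pr{Y_{i'} < c_{11}\log m/\log\log m} \leq \exp(-\delta \log m / \log\log m)$; the negative-correlation bound \eqref{negacorrel} promotes this to $\exp(-\delta C_{12} \log m)$ for any fixed subset $I$ of size $C_{12}\log\log m$; and a union bound over $\binom{n_1'}{C_{12}\log\log m}$ choices of $I$ completes the estimate. Using the loose upper bound $n_1' \leq |\partial^2 x_i^{(3)}| = O(\log^2 m \cdot \log\log m)$ (which follows from $|\partial x_i^{(3)}| = O(\log^2 m \log\log m)$ given by $S_1$ and $S_5$), the combinatorial factor contributes only $\exp(O((\log\log m)^2))$, which is dominated by $\exp(\delta C_{12}\log m)$ once $C_{12}$ is taken sufficiently large. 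The net failure probability is therefore $o(1/m)$, as required.

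The main (mild) obstacle — rather than any substantive new difficulty — is bookkeeping: unlike the clean setting of Lemma~\ref{good11} in which every label in $\partial^2 \tilde{x}_i^{(4)}$ is fed by the uniform balls-into-bins process of step 4), a label in $\partial^2 x_i^{(3)} \cap \partial^2 \tilde{x}_i^{(4)}$ is not fed by the analogous process of step 10) (queries in $\partial^3 \tilde{x}_i^{(3)}$ deliberately avoid $\partial^2 \tilde{x}_i^{(4)}$), so it cannot be handled by the same concentration argument. This is exactly what the additional $-1$ slack in the statement of $S_{12}$ accommodates, via the cardinality-one bound supplied by $S_6$.
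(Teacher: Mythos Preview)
Your proposal is correct and follows essentially the same route as the paper: set aside the (at most one) label in $\partial^2 x_i^{(3)}\cap \partial^2 \tilde x_i^{(4)}$ that cannot be fed by $\partial^3 \tilde x_i^{(3)}$, then rerun the balls-into-bins argument of Lemma~\ref{good11} with $|\partial^3 \tilde x_i^{(3)}|$ balls and $|\partial^2 x_i^{(3)}|-1$ bins, invoking $S_8$ for the ball count. The paper's proof is exactly this, stated in two sentences.
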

\begin{IEEEproof}
At most one label in $\partial^2 \tilde x_i^{(4)}$ can be included in $\partial^2 x_i^{(3)}$  and the label will not receive any query from $\partial^3 \tilde x_i^{(3)}$. Thus, we consider the worst case where such a label is not a good label, and then the proof is exactly the same as the proof of Lemma~\ref{good11} except that we have $\abs{\partial^3 \tilde x_i^{(3)}}$ balls and $\abs{\partial^2 x_i^{(3)}} - 1$ bins. We use $S_8$ to bound $\abs{\partial^3 \tilde x_i^{(3)}}$.
\end{IEEEproof}
Conditioned on  Lemma \ref{good12}, there are at most $(C_{12} \log \log m +1)$ no-good labels in $\partial^2 x_i^{(3)}$. 
After removing at most five queries from $\partial x_i^{(3)}$, each label node in $\partial^2 x_i^{(3)}$ is connected to only one query node in $\partial x_i^{(3)}$ by Lemma~\ref{lem:app:pfLemmagood1}. Thus, there are at least $|\partial w_{k,d}^{(3)}|-C_{12}\log\log m - 6$ good queries in $\partial w_{k,d}^{(3)}$ for each worker $(k, d) \in \partial_w x_i^{(4)}$.

\bibliographystyle{IEEEtran}
\bibliography{IEEEabrv,main}

\end{document}